\let\orgsetcounter\setcounter 
\newcommand{\citet}[1]{\cite{#1}}
\newcommand{\citep}[1]{\cite{#1}}
\newcommand{\citeauthor}[1]{\cite{#1}}
\newcommand{\cF}[0]{{\cal F}}
\newcommand{\cI}[0]{{\cal I}}
\newcommand{\cK}[0]{{\cal K}}
\newcommand{\cL}[0]{{\cal L}}
\newcommand{\cN}[0]{{\cal N}}
\newcommand{\cO}[0]{{\cal O}}
\newcommand{\cP}[0]{{\cal P}}
\newcommand{\cR}[0]{{\cal R}}
\newcommand{\cV}[0]{{\cal V}}
\newcommand{\cX}[0]{{\cal X}}
\newcommand{\cY}[0]{{\cal Y}}
\def\SHARINGBASED{\ensuremath{\mathcal{Y}_{sh}}\xspace}
\def\REACHBASED{\ensuremath{\mathcal{Y}_{rc}}\xspace}
\def\FIELDBASED{\ensuremath{\mathcal{Y}_{\field}}\xspace}
\newcommand{\body}[1]{\ensuremath{{#1}^b}}
\newcommand{\inp}[1]{\ensuremath{{#1}^i}}
\newcommand{\locals}[1]{\ensuremath{{#1}^l}}
\newcommand{\scope}[1]{\ensuremath{{#1}^s}}
\newcommand{\nil}{\mbox{\lstinline!null!}\xspace}
\newcommand{\xx}{\ensuremath{\mathsf{x}}\xspace} 
\newcommand{\xtwo}{\ensuremath{\mathsf{x2}}\xspace}
\newcommand{\yy}{\ensuremath{\mathsf{y}}\xspace}
\newcommand{\zz}{\ensuremath{\mathsf{z}}\xspace}
\newcommand{\mm}{\ensuremath{\mathsf{m}}\xspace}
\newcommand{\nn}{\ensuremath{\mathsf{n}}\xspace}
\newcommand{\pp}{\ensuremath{\mathsf{p}}\xspace}
\newcommand{\tmp}{\ensuremath{\mathsf{tmp}}\xspace}
\newcommand{\varl}{\ensuremath{\mathsf{l}}\xspace}
\newcommand{\varr}{\ensuremath{\mathsf{r}}\xspace}
\newcommand{\vart}{\ensuremath{\mathsf{t}}\xspace}
\newcommand{\integer}{\mbox{\lstinline!int!}\xspace}
\newcommand{\newk}[1]{\ensuremath{\mbox{\lstinline!new!}~#1}}
\newcommand{\skipc}{\mbox{\lstinline!skip!}\xspace}
\newcommand{\assign}{\mbox{\lstinline!:=!}}
\newcommand{\ifte}{\mbox{\lstinline!if!}\xspace}
\newcommand{\iftethen}{\mbox{\lstinline!then!}\xspace}
\newcommand{\ifteelse}{\mbox{\lstinline!else!}\xspace}
\newcommand{\while}{\mbox{\lstinline!while!}\xspace}
\newcommand{\whilebody}{\mbox{\lstinline!do!}\xspace}
\newcommand{\return}{\mbox{\lstinline!return!}\xspace}
\newcommand{\this}{\ensuremath{{\mathord{\mathit{this}}}}}
\newcommand{\out}{\ensuremath{{\mathord{\mathit{out}}}}}
\newcommand{\true}{\ensuremath{{\mathord{\mathit{true}}}}}
\newcommand{\false}{\ensuremath{{\mathord{\mathit{false}}}}}
\newcommand{\variables}{\ensuremath{\cX}}
\newcommand{\identifiers}{\ensuremath{\cX}}
\newcommand{\classes}{\ensuremath{\cK}}
\newcommand{\class}{\ensuremath{\kappa}}
\newcommand{\objects}{\ensuremath{\cO}}
\newcommand{\subclass}{\prec}
\newcommand{\subclasseq}{\preceq}
\newcommand{\fields}{\ensuremath{\cF}}
\def\field{\ensuremath{\mathit{fld}\xspace}}
\def\methoddef{\ensuremath{\mathit{M}\xspace}}
\def\method{\ensuremath{\mathit{mth}\xspace}}
\def\methodsig{\ensuremath{\mathsf{mth}\xspace}}
\newcommand{\values}{\ensuremath{\cV}}
\newcommand{\locations}{\ensuremath{\cL}}
\def\semnil{\ensuremath{\mathit{null}}}
\newcommand{\heap}{\ensuremath{\mu}}
\newcommand{\objtag}[1]{\ensuremath{{#1}.\mathsf{tag}}}
\newcommand{\objframe}[1]{\ensuremath{{#1}.\mathsf{frm}}}
\newcommand{\newobj}[1]{\ensuremath{\mathit{newobj}(#1)}}
\newcommand{\newloc}[1]{\ensuremath{\mathit{newloc}(#1)}}
\newcommand{\statesym}{\ensuremath{\sigma}}
\newcommand{\state}[2]{\ensuremath{\tuple{#1,#2}}}
\newcommand{\states}[1]{\ensuremath{\Sigma_{#1}}}
\newcommand{\frm}{\ensuremath{\phi}}
\newcommand{\statef}[1]{\ensuremath{{#1}^f}}
\newcommand{\statem}[1]{\ensuremath{{#1}^h}}
\newcommand{\den}{\ensuremath{\delta}}
\newcommand{\denset}[2]{\ensuremath{\Delta(#1{,}#2)}}
\newcommand{\einter}[3]{\ensuremath{{E_{#1} ^{#2}}{\left\llbracket{#3}\right\rrbracket}}}
\newcommand{\cinter}[3]{\ensuremath{{C_{#1}^{#2}}{\left\llbracket{#3}\right\rrbracket}}}
\newcommand{\seq}{\ensuremath{\mapsto}}
\newcommand{\lookup}{\ensuremath{\mathit{lookup}}}
\newcommand{\tp}[1]{\ensuremath{T_{#1}}}
\newcommand{\project}[2]{\ensuremath{\exists{#2}.{#1}}}
\newcommand{\extend}[2]{\ensuremath{ext({#1},{#2})}}
\newcommand{\res}{\ensuremath{\rho}}
\newcommand{\abstp}[1]{\ensuremath{\mathcal{T}_{#1}}}
\newcommand{\typenv}{\ensuremath{\tau}}
\newcommand{\interpretations}{\ensuremath{\Gamma}}
\def\hpath{\ensuremath{\pi\xspace}}
\def\hpconcat{\ensuremath{\cdot\xspace}}
\newcommand{\hcancel}[1]{%
    \tikz[baseline=(tocancel.base)]{
        \node[inner sep=0pt,outer sep=0pt] (tocancel) {#1};
        \draw[->] ($(tocancel.south west)+(-1pt,1pt)$) -- ($(tocancel.north east)+(3pt,-1pt)$);
    }%
}%
\def\bool{\ensuremath{\mathit{Bool}}}
\def\finterpretation{\ensuremath{\omega\xspace}}
\newcommand{\fmodels}[1]{\ensuremath{\mathit{models}_{\fpropositions}}(#1)}
\newcommand{\fproposition}[1]{\mbox{\hcancel{\ensuremath{#1}}}}
\def\fpropositions{\ensuremath{\cP}}
\def\pformulae{\ensuremath{\cP\!\cF}}
\def\pformulaeq{\ensuremath{\pformulae_\equiv}}
\def\pformula{\ensuremath{F\xspace}}
\def\pformulag{\ensuremath{G\xspace}}
\newcommand{\onlyfields}[1]{\langle #1\rangle}
\newcommand{\onlyfieldsp}[1]{\langle\fproposition{#1}\rangle}
\def\pformulaempty{\onlyfields{\emptyset}}
\newcommand{\SHARE}[2]{\tuple{#1{\curlyveedownarrow}#2}}
\newcommand{\ALIAS}[2]{\tuple{#1{\cdot}#2}}
\newcommand{\abselema}{\ensuremath{I_a}}
\newcommand{\abselemsp}{\ensuremath{I_{sp}}}
\newcommand{\PURE}[1]{\ensuremath{\dot{#1}}}
\newcommand{\shden}[1]{\ensuremath{\mathsf{SP}_{#1}}}
\def\domalias{\ensuremath{\cI}_a}
\newcommand{\reachable}[2]{\ensuremath{R(#1,#2)}}
\newcommand{\reachablei}[3]{\ensuremath{R^{#1}(#2,#3)}}
\newcommand{\reachableplus}[2]{\ensuremath{R^{+}(#1,#2)}}
\newcommand{\REACHES}[2]{#1{\rightsquigarrow}#2}
\newcommand{\condom}{\ensuremath{\cI_\flat^\typenv}}
\newcommand{\domr}{\ensuremath{\cI_r^\typenv}}
\newcommand{\domc}{\ensuremath{\cI_c^\typenv}}
\newcommand{\domrc}[1]{\ensuremath{\cI_{rc}^{#1}}}
\newcommand{\concelem}{\ensuremath{I_\flat}}
\newcommand{\abselemr}{\ensuremath{I_r}}
\newcommand{\abselemrc}{\ensuremath{I_{rc}}}
\newcommand{\traverses}[2]{#1\!\twoheadrightarrow\!#2}
\newcommand{\ntraverses}[2]{#1\!\ntwoheadrightarrow\!#2}
\newcommand{\UPDATE}[3]{#1\left[#2{\leftarrow}#3\right]}
\def\fmark#1{\bar{#1}}
\def\sqcupf{\ensuremath{\fmark{\sqcup}}}
\def\topfr{\ensuremath{\fmark{\top}_r}}
\def\botfr{\ensuremath{\fmark{\bot}_r}}
\def\sqcapfr{\ensuremath{\fmark{\sqcap}_r}}
\def\sqcupfr{\ensuremath{\fmark{\sqcup}_r}}
\def\sqsubseteqfr{\ensuremath{\fmark{\sqsubseteq}_r}}
\def\sqsupseteqfr{\ensuremath{\fmark{\sqsupseteq}_r}}
\def\topfc{\ensuremath{\fmark{\top}_c}}
\def\botfc{\ensuremath{\fmark{\bot}_c}}
\def\sqcapfc{\ensuremath{\fmark{\sqcap}_c}}
\def\sqcupfc{\ensuremath{\fmark{\sqcup}_c}}
\def\sqsubseteqfc{\ensuremath{\fmark{\sqsubseteq}_c}}
\def\domfr{\ensuremath{\fmark{\cI}_r}}
\def\gammafr{\ensuremath{\fmark{\gamma}_{\mathit r}}}
\def\alphafr{\ensuremath{\fmark{\alpha}_{\mathit r}}}
\def\domfc{\ensuremath{\fmark{\cI}_c}}
\def\gammafc{\ensuremath{\fmark{\gamma}_{\mathit c}}}
\def\alphafc{\ensuremath{\fmark{\alpha}_{\mathit c}}}
\def\domfrc{\ensuremath{\fmark{\cI}_{rc}}}
\newcommand{\domfrctau}[1]{\ensuremath{\fmark{\cI}_{rc}^{#1}}}
\def\gammafrc{\ensuremath{\fmark{\gamma}_{\mathit rc}}}
\def\abselemfr{\ensuremath{\fmark{I}_r}}
\def\abselemfc{\ensuremath{\fmark{I}_c}}
\def\abselemfrc{\ensuremath{\fmark{I}_{rc}}}
\def\sqsubseteqfrc{\ensuremath{\fmark{\sqsubseteq}_{rc}}}
\newcommand{\reachsetf}[1]{\ensuremath{\fmark{\cR}}}
\newcommand{\cycsetf}[1]{\ensuremath{\fmark{\cY}}}
\newcommand{\normalize}[1]{\ensuremath{\cN\left(#1\right)}}
\newcommand{\domkr}{{\ensuremath{^{\class}\!{\cI_r}}}}
\newcommand{\domkc}{{\ensuremath{^{\class}\!{\cI_c}}}}
\newcommand{\abselemkr}{{\ensuremath{^{\class}\!{I}_r}}}
\def\dectype{\delta}
\newcommand{\domxr}[1]{\ensuremath{{^{\textsc{#1}}\!{\cI_r^\typenv}}}}
\newcommand{\reachsetxr}[1]{\ensuremath{{^{\textsc{#1}}\!{\cR^{\typenv}}}}}
\newcommand{\sqsubseteqxr}[1]{\ensuremath{{^{\textsc{#1}}\!{\sqsubseteq}_r}}}
\newcommand{\botxr}[1]{\ensuremath{{^{\textsc{#1}}}\!\!{\bot}_r}}
\newcommand{\topxr}[1]{\ensuremath{{^{\textsc{#1}}}\!{\top}_r}}
\newcommand{\sqcapxr}[1]{\ensuremath{{^{\textsc{#1}}}\!{\sqcap}_r}}
\newcommand{\sqcupxr}[1]{\ensuremath{{^{\textsc{#1}}}\!{\sqcup}_r}}
\newcommand{\abselemxr}[1]{\ensuremath{{^{\textsc{#1}}\!{I}_r}}}
\def\dommr{\domxr{m}}
\def\reachsetmr{\reachsetxr{m}}
\def\sqsubseteqmr{\sqsubseteqxr{m}}
\def\botmr{\botxr{m}}
\def\topmr{\topxr{m}}
\def\sqcapmr{\sqcapxr{m}}
\def\sqcupmr{\sqcupxr{m}}
\def\abselemmr{\abselemxr{m}}
\def\domdr{\domxr{d}}
\def\dompr{\domxr{p}}
\newcommand{\domnr}{\ensuremath{\mathsf{A}_\typenv}}
\newcommand{\reachsetnr}{\ensuremath{\mathsf{UR}_\typenv}}
\def\sqsubseteqnr{\ensuremath{\sqsubseteq}_A}
\def\sqcapnr{\ensuremath{\sqcap}_A}
\def\sqcupnr{\ensuremath{\sqcup}_A}
\newcommand{\abselemnr}{\ensuremath{I_A}}
\newcommand{\domqc}{\ensuremath{\mathsf{Q}_\typenv}}
\newcommand{\cycsetqc}{\ensuremath{\mathsf{UC}_\typenv}}
\def\sqsubseteqqc{\ensuremath{\sqsubseteq}_Q}
\def\sqcapqc{\ensuremath{\sqcap}_Q}
\def\sqcupqc{\ensuremath{\sqcup}_Q}
\newcommand{\abselemqc}{\ensuremath{I_Q}}
\newcommand{\ASEMANTICS}[3]{\ensuremath{\mathcal{C}_{#2}}{\left\llbracket{#3}\right\rrbracket}}
\newcommand{\EXPASEMANTICS}[3]{\ensuremath{\mathcal{E}_{#2}}{\left\llbracket{#3}\right\rrbracket}}
\newcommand{\absinterp}{\ensuremath{\zeta}}
\newcommand{\absden}{\ensuremath{\xi}}
\newcommand{\absdenset}[2]{\ensuremath{\Xi({#1},{#2})}}
\newcommand{\absinterpretations}{\ensuremath{\Psi}}
\DeclareMathOperator{\interp}{\iota}
\DeclareMathOperator{\dom}{dom}
\DeclareMathOperator{\codom}{rng}
\newcommand{\fname}[1]{\ensuremath{\mathsf{#1}}\xspace}
\def\ffield{\fname{f}}
\def\gfield{\fname{g}}
\def\hfield{\fname{h}}
\def\employee{\fname{Emp}}
\def\levelone{\fname{L1}}
\def\leveltwo{\fname{L2}}
\def\device{\fname{Dev}}
\def\laptop{\fname{LP}}
\def\tablet{\fname{TB}}
\def\md{\fname{mD}}
\def\ad{\fname{aD}}
\def\lnk{\fname{lnk}}
\def\owner{\fname{owner}}
\def\pnext{\ensuremath{\fproposition{\mathsf{n}}}\xspace}
\def\pprev{\ensuremath{\fproposition{\mathsf{p}}}\xspace}
\def\pleft{\ensuremath{\mathring{\mbox{\sc{l}}}}\xspace}
\def\pright{\ensuremath{\mathring{\mbox{\sc{r}}}}\xspace}
\def\pparent{\ensuremath{\mathring{\mbox{\sc{p}}}}\xspace}
\def\any{\ensuremath{\mathit{any}}\xspace}
\newcommand{\tuple}[1]{\left\langle #1 \right\rangle}
\def\lfp{\ensuremath{\mathit{lfp}}\xspace}
\newcommand{\CUT}[1]{}
\newcommand{\ouracks}{This work was funded partially by the European
  research project FP7-ICT-610582 ENVISAGE: Engineering Virtualized
  Services (website: \texttt{http://www.envisage-project.eu}), and by
  the Spanish projects TIN2008-05624 and TIN2012-38137.}
\title{Inference of Field-Sensitive\\Reachability and Cyclicity}
\author{
  DAMIANO ZANARDINI \affil{Technical University of Madrid (UPM), Spain}
  and SAMIR GENAIM \affil{Complutense University of Madrid (UCM), Spain}
  }
\begin{abstract}
  In heap-based languages, knowing that a variable $\xx$ points to an
  acyclic data structure is useful for analyzing termination: this
  information guarantees that the depth of the data structure to which
  $\xx$ points is greater than the depth of the structure pointed to
  by $\xx.\field$, and allows bounding the number of iterations of a
  loop which traverses the data structure on $\field$.

  In general, proving termination needs acyclicity, unless
  program-specific or non-automated reasoning is performed. However,
  recent work could prove that certain loops terminate even without
  inferring acyclicity, because they traverse data structures
  ``acyclically''. Consider a double-linked list: if it is possible to
  demonstrate that every cycle involves both the ``next'' and the
  ``prev'' field, then a traversal on ``next'' terminates since no
  cycle will be traversed completely.

  This paper develops a static analysis inferring field-sensitive
  reachability and cyclicity information, which is more general than
  existing approaches. Propositional formul\ae~are computed, which
  describe which fields may or may not be traversed by paths in the
  heap. Consider a tree with edges ``left'' and ``right'' to the left
  and right sub-trees, and ``parent'' to the parent node: termination
  of a loop traversing leaf-up cannot be guaranteed by
  state-of-the-art analyses. Instead, propositional formul\ae~computed
  by this analysis indicate that cycles must traverse ``parent'' and
  at least one between ``left'' and ``right'': termination is
  guaranteed as no cycle is traversed completely.

  This paper defines the necessary abstract domains and builds an
  abstract semantics on them. A prototypical implementation provides
  the expected result on relevant examples.
\end{abstract}
\keywords{Heap manipulation, Cyclicity analysis, Termination analysis,
  Pointer analysis, Shape analysis, Static analysis, Abstract
  Interpretation, Data Structures}
\begin{document}
{\let\setcounter\orgsetcounter
\begin{bottomstuff}
  Authors' addresses: Damiano Zanardini, Departamento de Inteligencia
  Artificial, Escuela T\'ecnica Superior de Ingenieros Inform\'aticos,
  Campus de Montegancedo, Boadilla del Monte, 28660 Madrid, Spain;
  Samir Genaim, Departamento de Sistemas Inform\'aticos y
  Computaci\'on, Facultad de Inform\'atica, Universidad Complutense de
  Madrid, C/ Profesor Jos\'e Garc\'ia Santesmases s/n, 28040, Madrid,
  Spain.
\end{bottomstuff}
} 
\maketitle

 % intro

\section{Introduction}
\label{sec:introduction}

Programming languages with dynamic memory allocation, such as Java,
allow creating and manipulating linked data structures in the
\emph{heap}.  The presence of cyclic data structures in the heap is a
challenging issue in the context of termination
analysis~\cite{DBLP:conf/cav/BerdineCDO06,DBLP:conf/pldi/CookPR06,AlbertACGPZ08,SpotoMP2010},
resource usage
analysis~\cite{Wegbreit75,caslog-short,AlbertAGPZ12,AlbertGM13},
garbage collection~\cite{JonesLins1996}, etc.
Consider the loop
``\lstinline^while (x!=null) do x:=x.f^'': if \xx points to an acyclic
data structure before the loop, then the depth of the data structure
to which \xx points strictly decreases after each iteration;
therefore, the number of iterations is bounded by the initial depth of
the structure.  On the other hand, in general, nothing can be said
about such a decrement if acyclicity cannot be demonstrated, unless
more complex, program-specific or non-automated reasoning is
performed.  This makes acyclicity information essential in order to
bound loop iterations and, by extension, prove termination.

In mainstream Object-Oriented programming languages, data structures
are usually modified by means of \emph{field updates}.  Consider
\lstinline!x.f:=y!: if \xx and \yy are guaranteed to point to
\emph{disjoint} parts of the heap \emph{before} the command, then
there is no possibility to create a cycle.  On the other hand, if they
are not disjoint, i.e., they \emph{share} a common part of the heap,
then a cyclic structure might be created.  This simple mechanism,
denoted in the following as \SHARINGBASED, has been used in previous
work~\cite{RossignoliS06-short}: \xx and \yy are declared as possibly
cyclic whenever they share before the update.  Refinements of
\SHARINGBASED have been proposed
\cite{DBLP:conf/popl/GhiyaH96,GenaimZ10,GenaimZ13,NikolicS14}, which
also consider the \emph{reachability} between program variables.  In
this example, the acyclicity information can be more precise if it is
possible to know \emph{how} \xx and \yy share: in general, it can be
the case that (1) \xx and \yy \emph{alias}, i.e., point directly to
the same location; (2) \xx \emph{reaches} the location pointed to by
\yy; (3) \yy \emph{reaches} the location pointed to by \xx; or (4)
they both indirectly reach a common location (here, this case is
referred to as \emph{deep sharing}, see Section
\ref{sec:auxiliaryAnalyses}).  The field update \lstinline!x.f:=y!
might create a cycle only in cases (1) or (3).  The latter approach is
able to prove acyclicity in cases like
``\lstinline!y:=x.next.next;x.next:=y;!''  (which typically removes an
element from a linked list), where the former fails.  For simplicity,
this technique will be denoted by \REACHBASED, ignoring discrepancies
between the different works implementing such a reachability-based
analysis.  \REACHBASED improves on \SHARINGBASED in that the class of
data structures which can be proved to be acyclic is larger.

However, recent research
\cite{ScapinSpotoMScThesis,DBLP:conf/cav/BrockschmidtMOG12} went one
step ahead by proving, in some cases, the termination of programs
\emph{even if} the data structures they traverse\footnote{The idea of
  \emph{traversing} fields will be defined precisely later, but can be
  understood as dereference.} are cyclic.  In fact, cycles often enjoy
certain properties which allow to guarantee that loops never traverse
them \emph{completely}.  Suppose that \xx points to a cyclic data
structure, and the loop
\lstinline^while (x!=null) do x:=x.f^
is supposed to traverse it.  Recent works were able to prove
termination if either (a) no cycle can involve \ffield
\cite{ScapinSpotoMScThesis}; or (b) cycles have to involve a set $X$
of fields which contains fields different from \ffield
\cite{DBLP:conf/cav/BrockschmidtMOG12}\footnote{This paper also deals
  with other cases of algorithms on cyclic data structures, which are
  beyond the scope of this discussion.}.

The present cyclicity analysis, denoted in the following by
\FIELDBASED, is more general and more precise than the above
approaches, and allows inferring \emph{field-sensitive} reachability
and cyclicity information which can be used to prove termination of a
wider class of programs.  The information inferred by \FIELDBASED
takes the form of \emph{propositional formul\ae} which indicate which
are the fields involved (1) in paths between two variables; or (2) in
cycles reachable from a variable.  A propositional formula can tell
that a field never occurs in cycle, or that it always occur, or that
its presence in cycles is conditional.  Consider the case of a
\lstinline!Tree! class implementing trees where each node has a
\lstinline!left! and a \lstinline!right! field pointing to its left
and right sub-trees, respectively, and a \lstinline!parent! field
pointing to the parent node.  Suppose also that two loops traverse the
tree (1) from the root to a leaf, by following a certain path; and (2)
from this leaf, back to the root.  The first loop traverses
\lstinline!left! and \lstinline!right! a certain number of times,
while the second only traverses \lstinline!parent!.  This kind of tree
is a cyclic data structure; however, it enjoys the property that every
cycle has to traverse \lstinline!parent! and at least one between
\lstinline!left! and \lstinline!right!.  Condition (a) above does not
hold for any of the loops, since they traverse fields which are
actually involved in cycles.  Condition (b) does not hold either,
since the only field which is involved in all possible loops is
\lstinline!parent!, but the second loop actually traverses it, so that
termination cannot be proved.  On the other hand, the propositional
formul\ae~computed by the present analysis represent the desired
cyclicity information which allows proving termination of both loops,
since it is possible to prove that they will never traverse a cycle
completely.  Another example of cyclic structure where cycles can
traverse several different sets of fields is a \emph{cyclic grid},
i.e., some kind of bidimensional double-linked list where each node
has \lstinline!left!, \lstinline!right!, \lstinline!up!, and
\lstinline!down! links to neighbour nodes.  This data structure has
cycles which traverse \lstinline!left! and \lstinline!right!, or
\lstinline!up! and \lstinline!down!, but also longer cycles traversing
all fields.

Following the well-known theory of \emph{Abstract Interpretation}
\cite{Cousot77}, the paper introduces \emph{abstract domains}
representing the properties of interest, and discusses their relation
with existing work.  A sound \emph{abstract semantics} is built on
these domains, which computes the desired reachability/cyclicity
information.  An intra-procedural subset of the abstract semantics has
been implemented, and gives the expected result on the examples
discussed in this introduction.

\subsubsection*{Main contributions}

The main contributions of the present paper are as follows:

\begin{itemize}
\item The paper defines abstract domains which capture field-sensitive
  reachability and cyclicity information in form of propositional
  formul\ae.
\item The domains are compared to related work and proved to be more
  precise.
\item A sound abstract semantics is built on the abstract domains.
\item The approach is partially implemented (only a subset of the
  intra-procedural component), and the expected result is obtained on
  relevant examples.
\end{itemize}

 % related

\subsection{Related work}
\label{sec:relatedWork}

The present paper is very related to research in the area of
\emph{Pointer analysis} \cite{Hind01}, which considers properties of
the heap and builds \emph{static analyses} to enforce them.  Clearly,
techniques which directly deal with the reachability and cyclicity
originated by paths in the heap represent the closest work in this
area.  Apart from that, \emph{Aliasing}, \emph{Sharing},
\emph{Points-to} and \emph{Shape} analysis are the most related
pointer analyses which can be found in the literature.

\emph{Termination analysis} is a well-established research area which
overlaps with Pointer analysis when heap-manipulating programming
languages are considered; it also has to be discussed as related work.
Finally, \emph{Resource-usage analysis} is also related because the
same results which are useful in order to prove termination can also
help in estimating the resource consumption of a program.

\subsubsection*{Pointer Analysis}

A well-known technique in Pointer analysis, \emph{Aliasing analysis}
\cite{Hind01} investigates the program variables which might point to
the same heap location at runtime.  \emph{Sharing analysis}
\cite{spoto:pair_sharing} is more general in that it determines if two
variables $v_1$ and $v_2$ can reach a common location in the heap,
i.e., if the portions of the heap which are reachable from $v_1$ and
$v_2$ are not disjoint.  Aliasing between two variables implies that
they also share.  \emph{Points-to analysis} computes the set of
objects which might be referred to by a pointer variable.

Research on \emph{Shape Analysis} \cite{WilhelmSR00-long} basically
reasons about heap-manipulating programs in order to prove program
properties.  In most cases, \emph{safety} properties are dealt with
\cite{BardinFN04,SagivRW02,RinetzkyBRSW05}.  On the other hand,
termination is a \emph{liveness} property, and is, typically, the
final property to be proved when analyzing cyclicity; therefore, work
on liveness
\cite{Reynolds02,BalabanPZ05,DBLP:conf/cav/BerdineCDO06,DBLP:conf/pldi/CookPR06,BrotherstonBC08-short}
is closer to the present approach.  Most papers use techniques based
on \emph{Model Checking} \cite{Muller-OlmSS99}, \emph{Predicate
  Abstraction} \cite{GrafS97}, \emph{Separation Logic}
\cite{Reynolds02} or \emph{Cyclic proofs} \cite{BrotherstonBC08-short}
in order to prove properties of programs manipulating the heap.
Typically, shape analyses capture aliasing and points-to information,
and build a representation of the heap from which reachability
information can be obtained.  Such analyses are very precise,
sometimes at the cost of (i) limiting the shape of the data structures
which can be analyzed; (ii) simplifying the programming language to be
dealt with; or (iii) reducing scalability.

\subsubsection*{Reachability and Cyclicity analysis}

The oldest notion of reachability dates back to \cite{Nelson83}: his
\emph{reachability predicate} is supposed to tell if a heap location
reaches another one in a linear list.  A reachability-based acyclicity
analysis for C programs was developed by
\cite{DBLP:conf/popl/GhiyaH96}.  That analysis was presented as a
\emph{data-flow} analysis, and the terms ``direction'' and
``interference'', were used for, respectively, reachability and
sharing.  Analyses which compute basically the same information were
presented in more recent work.  \cite{GenaimZ10,GenaimZ13} describe a
formalization of the analysis proposed by
\cite{DBLP:conf/popl/GhiyaH96} in the framework of \emph{Abstract
  Interpretation}, based on a Java-like Object-Oriented language and
provided with soundness proofs.  The same analysis has been also
formalized by means of Abstract Interpretation by \cite{NikolicS14},
which efficiently implement it in the Julia analyzer for Java
(bytecode) and Android\footnote{\texttt{http://www.juliasoft.com}}.
As already discussed in the introduction, the analysis proposed by
\cite{RossignoliS06-short} is less precise since it does not consider
reachability in order to detect cycles.  The present work also builds
upon the results presented in
\cite{ScapinSpotoMScThesis,DBLP:conf/cav/BrockschmidtMOG12}.  The
relation with such works was explained in the introduction, and will
be made even more clear in the rest of the paper, especially in
Section \ref{sec:domainComparison}.

\subsubsection*{Termination and Resource-usage Analysis}

The main goal of most approaches to reachability and cyclicity
analysis is to help \emph{Termination analysis} proving the
termination of loops traversing data structures in the heap.  This is
the case of practically all the papers discussed in the previous
paragraph.  In particular, \cite{ScapinSpotoMScThesis} and
\cite{DBLP:conf/cav/BrockschmidtMOG12} are able to prove termination
even when some kinds of cyclic data structures are traversed.  Proving
termination of a given loop is typically done by finding a
\emph{ranking function} that decreases in every iteration.  For loops
traversing acyclic data structures, the bound is interpreted in terms
of the \emph{depth} of the data structure~\cite{SpotoMP2010} (e.g, the
length of a list, the depth of a tree, etc.).  On the other hand, for
cyclic data structures which are traversed in an acyclic way, the
bound can be interpreted in terms of the acyclic depth, i.e., the
maximal length of acyclic paths.

The abstract domains defined by \cite{ScapinSpotoMScThesis} can assess
that a data structure, although possibly cyclic, might only contain
cycles with certain characteristics; namely, that the fields traversed
by the cycle do not belong to a given set.  This way, it is possible
to prove that traversing a cyclic data structure will terminate,
provided the traversal only concerns fields which are guaranteed not
to appear in cycles.  Importantly, this abstract domain is not able to
deal with the examples of the cyclic tree and the double-linked list
(Section \ref{sec:anExample1} and \ref{sec:anExample2}), since the
field traversed by the loop are involved in cycles.  As a matter of
fact, the abstract domain $\domfrc$ used by \FIELDBASED and presented
in Section \ref{sec:abstractDomains} is strictly more expressive than
the one used by \cite{ScapinSpotoMScThesis}, as proved in Section
\ref{sec:negation}.

\cite{DBLP:conf/cav/BrockschmidtMOG12} address a similar problem from
a similar point of view: termination can be proved in cases where it
is guaranteed that any cycle \emph{must} traverse some set of fields.
Their work can prove the termination of a loop traversing a
double-linked list by building a \emph{Termination Graph} and proving
properties which entail program termination.  Such an analysis has
been implemented in the AProVE tool \cite{DBLP:conf/cade/GieslST06}.
It is important to point out that the property they manage is strictly
less expressive than the one represented by $\domfrc$, as discussed in
Section \ref{sec:requirement}, and that the example of the cyclic tree
cannot be dealt with.

Being closely related to Termination analysis, \emph{Resource-usage
  analysis} \cite{Wegbreit75,caslog-short,AlbertAGPZ12,AlbertGM13}
also benefits from precise cyclicity results, since the mechanisms
which are used to compute upper or lower bounds on the resource
consumption of a program are similar to those used to prove its
termination.  In fact, ranking functions can be used to provide bounds
on the number of loop iterations.

 % example1
\subsection{Example: tree with edges to parent nodes}
\label{sec:anExample1}

This code fragment works on a class \lstinline!Tree! with fields
\lstinline!left!, \lstinline!right! (pointing to the left and right
sub-tree, respectively), and \lstinline!parent! (the link to the
parent node).  The procedure \lstinline!join! takes two trees
\lstinline!l! and \lstinline!r!, and builds a new tree whose left and
right branch are \lstinline!l! and \lstinline!r!, respectively.

\begin{minipage}{45mm} 
  \begin{lstlisting}
  class Tree {
    Tree left;
    Tree right;
    Tree parent;
  }
  \end{lstlisting}
\end{minipage}
\begin{minipage}{75mm}
  \begin{lstlisting}[firstnumber=6]
  Tree join(Tree l, Tree r) {
    Tree t;  t := new Tree;
    t.left := l;
    t.right := r;
    if (l!=null) then l.parent := t;
    if (r!=null) then r.parent := t;
    return t;
  }
  \end{lstlisting}
\end{minipage}

\noindent
In general, the \lstinline!parent! link makes this data structure
cyclic.  However, a loop traversing a tree either root-down (e.g.,
\lstinline^while (x!=null) x:=x.left^) or leaf-up (e.g.,
\lstinline^while (x!=null) x:=x.parent^) will certainly terminate.  In
order to gather the necessary information to prove termination of such
loops, it is not enough to study which fields are never involved in
cycles, as \cite{ScapinSpotoMScThesis} does, since all fields of
\lstinline!Tree! can be involved in some cycle.  Moreover, to know
that some set of fields must be traversed by all cycles, as done by
\cite{DBLP:conf/cav/BrockschmidtMOG12}, is also not enough, since the
only field which must be obligatorily traversed is \lstinline!parent!,
so that, for example, a leaf-up traversal would be imprecisely taken
as potentially non-terminating since it actually traverses all
mandatory fields.  However, termination could be proved by detecting
that every cycle must involve \lstinline!parent! together with at
least one between \lstinline!left! and \lstinline!right!.  This
example is further discussed in Section \ref{sec:backExample1}.

 % example2
\subsection{Example: double-linked list}
\label{sec:anExample2}

Consider this code fragment, working on a class \lstinline!Node! with
fields \lstinline!n! and \lstinline!p!, pointing to the next and
previous element of the list, respectively.

\begin{minipage}{6cm}
  \begin{lstlisting}
  i := 1;
  tmp := new Node;
  while (i<10) {
    x := new Node;
    x.n := tmp;
    tmp.p := x;
    tmp := x;
    i := i+1;
  }
  \end{lstlisting}
\end{minipage}
\begin{minipage}{6cm}
  \begin{lstlisting}[firstnumber=10]
  while (x!=null) {
    x := x.n;
  }
  \end{lstlisting}
\end{minipage}

\noindent
The code on the left-hand side creates a double-linked list with ten
elements.  It is clear that such a list is a cyclic data structure;
however, any cycle will certainly involve at least once both
\lstinline!n! and \lstinline!p!.  Consequently, the loop on the
right-hand side will terminate because it only traverses
\lstinline!n!; in other words, it will never entirely traverse a
cycle.  Most standard termination analyzers reject the second loop as
potentially diverging, since acyclicity of the data structure pointed
to by \xx cannot be proved (indeed, it is cyclic).  On the other hand,
the presented approach analyzes the loop in lines 3--9 and infers that
any cyclic path must traverse both \lstinline!n!  and \lstinline!p!,
thus making possible to prove that the loop in lines 10--12
terminates.  A similar piece of information is obtained by
\cite{DBLP:conf/cav/BrockschmidtMOG12}.  This example will be further
discussed in Section \ref{sec:backExample2}.

 % oo-language

\section{A simple object-oriented language}
\label{sec:language}

This section defines the syntax and the denotational semantics of a
simplified version of Java.  Class, method, field, and variable names
are taken from a set $\identifiers$ of valid \emph{identifiers}.
A \emph{program} consists of a set of classes
$\classes\subseteq\identifiers$ partially ordered by the
\emph{subclass} relation $\prec$.
Following Java, a \emph{class declaration} takes the form
``\lstinline!class $\class_1$ [extends $\class_2$] { $t_1~\field_1$;$\ldots$ $t_n~\field_n$; $\methoddef_1$ $\ldots$ $\methoddef_k$}!''
where
each ``$t_i~\field_i$'' declares the field $\field_i$ to have type
$t_i\in\classes\cup\{\integer\}$,
and each $\methoddef_j$ is a method definition.  The optional
statement
``\lstinline!extends $\class_2$!'' declares $\class_1$ to be a
subclass of $\class_2$.
A \emph{method definition} takes the form
``\lstinline!$t$ $\method$($t_1~w_1$,$\ldots$,$t_n~w_n$) {$t_{n+1}~w_{n+1}$;$\ldots$$t_{n+p}~w_{n+p}$; $\mathit{com}$}!''
where: $\method\in\identifiers$ is the method name;
$t\in\classes\cup\{\integer\}$ is the type of the return value;
$w_1,\ldots,w_n \in\identifiers$ are the formal parameters;
$w_{n+1},\ldots,w_{n+p}\in\identifiers$ are local variables; $t_{n+k}$
is the \emph{declared type} of $w_{n+k}$, hereafter denoted by
$\dectype(w_{n+k})$; and the command $\mathit{com}$ follows this
grammar:
\[
\begin{array}{rl}
  \mathit{exp}~{:}{:}{=} & n \mid \nil \mid  v \mid v.\field \mid \mathit{exp}_1 \oplus \mathit{exp}_2 \mid \newk{\class} \mid v.\method\mbox{\lstinline!(!}\bar{v}\mbox{\lstinline!)!} \\
  \mathit{com}~{:}{:}{=} & \skipc \mid
  v\assign \mathit{exp} \mid 
  v.\field\assign \mathit{exp} \mid 
  \mathit{com_1}\mbox{\lstinline!;!}\mathit{com_2} \mid\\
  & \ifte~\mathit{exp}~\iftethen~\mathit{com_1}~[\ifteelse~\mathit{com_2}] \mid 
  \while~\mathit{exp}~\whilebody~\mathit{com} \mid
  \return~\mathit{exp}
\end{array} 
\]
where $v,\method,\field\in\identifiers$; $\bar{v} \in \identifiers^*$;
$n\in\mathbb{Z}$; $\class\in\classes$; and $\oplus$ is a binary
operator on $\integer$.
For simplicity, and without loss of generality, \emph{conditions} in
\ifte and \while statements are assumed not to have side effects.
A \emph{method signature} $\class.\method(t_1,\ldots,t_{n}){:}t$
refers to a method $\method$ defined in class $\class$, taking $n$
parameters of type $t_1,\ldots,t_{n} \in \classes\cup\{\integer\}$,
and returning a value of type $t$.
Given a signature $\methodsig$, let $\body{\methodsig}$ be its code
$\mathit{com}$ (i.e., the command appearing in its definition);
$\inp{\methodsig}$ be its set of input variables
$\{\this,w_1,\ldots,w_n\}$, where $\this$ refers to the object
receiving the call; $\locals{\methodsig}$ be its set of local
variables $\{w_{n+1},\ldots,w_{n+m}\}$; and
$\scope{\methodsig}=\inp{\methodsig}\cup\locals{\methodsig}$.  Given a
program, $\fields$ denotes the set of fields declared in
it\footnote{For simplicity, \lstinline!int! fields will be often
  ignored since they have no impact on the heap.}.

A \emph{type environment} $\typenv$ is a partial map from $\variables$
to $\classes\cup\{\integer\}$ which associates types to variables at a
given program point.  Abusing notation, when it is clear from the
context, type environments will be confused with sets of variables
when types are not important; i.e., $v \in \typenv$ will stand for $v
\in \dom(\typenv)$.
A \emph{state} over $\typenv$ is a pair consisting of a frame and a
heap.
A \emph{heap} $\heap$ is a partial mapping from an infinite and
totally ordered set $\locations$ of memory locations to objects;
$\heap(\ell)$ is the object bound to $\ell\in\locations$ in the heap
$\heap$.
An \emph{object} $o \in \objects$ is a pair consisting of a class tag
$\objtag{o}\in\classes$, and a frame $\objframe{o}$ which maps its
fields into $\values=\mathbb{Z}\cup\locations\cup\{\semnil\}$.  For
simplicity, it is assumed that no two fields $\class.\field$ and
$\class'.\field$ with the same field name can be declared in a
program, so that $\field$ will be usually a shorthand for
$\class.\field$;
this is not a significant restriction w.r.t.~Java since the actual
field to which a Java expression $v.\field$ may refer to can be (and
actually is) known statically.
Shorthands are used: $o.\field$ for $\objframe{o}(\field)$;
$\heap[\ell\mapsto o]$ to modify the heap $\heap$ such that a location
$\ell$ contains the object $o$; and $\heap[\ell.\field \mapsto
  \mathit{val}]$ to modify the value of the field $\field$ of the
object $\heap(\ell)$ to $\mathit{val}\in\values$.
A \emph{frame} $\frm$ maps variables in $\dom(\typenv)$ to $\values$.
For $v \in \dom(\typenv)$, $\frm(v)$ refers to the value of $v$, and
$\frm[v\mapsto \mathit{val}]$ is the frame where $v$ has been set to
$\mathit{val}$, or defined to be $\mathit{val}$ if
$v\not\in\dom(\frm)$.
The set of states over $\typenv$ is
\[
\states{\typenv}=\left\{\state{\frm}{\heap} \left|
    \begin{array}{rl}
      1. & \frm\text{ is a frame over } \typenv \text{, }
      \heap\text{ is a heap, and both are well-typed}  \\
      2. & \codom(\frm)\cap\locations\subseteq\dom(\heap)\\
      3. & \forall\ell\in\dom(\heap).~\codom(\objframe{\heap(\ell)})\cap
      \locations\subseteq\dom(\heap) \\
    \end{array}
  \right.\!\!\right\}
\]
Given $\statesym \in \states{\typenv}$, $\statef{\statesym}$ and
$\statem{\statesym}$ refer to its frame and its heap,
respectively. 
The lattice
$\condom=\tuple{\wp(\states{\typenv}),\top,\bot,\cap,\cup}$ defines
the \emph{concrete domain}, where $\top{=}\states{\typenv}$ and
$\bot{=}\emptyset$.

A \emph{denotation} $\den$ over type environments $\typenv_1$ and
$\typenv_2$ is a partial map from $\states{\typenv_1}$ to
$\states{\typenv_2}$: it describes how the state changes when some
code is executed.
The set of denotations from $\typenv_1$ to $\typenv_2$ is
$\denset{\typenv_1}{\typenv_2}$.
An \emph{interpretation} $\interp$ is a special denotation which gives
a meaning to methods in terms of their input and output variables: it
maps methods to denotations, such that $\interp(\methodsig) \in
\denset{\inp{\methodsig}}{\{\out\}}$ for each $\methodsig$.  The
variable $\out$ is a special variable denoting the return value of
methods.  Let $\interpretations$ be the set of all interpretations.

 % den-semantics-fig.tex

\begin{figure}[t]

\begin{center}
\begin{minipage}{12cm}
\(
\small
\begin{array}{@{}r@{~}l@{}}
  \einter{\typenv}{\interp}{n}(\statesym) =& 
     \state{\statef{\statesym}[\res \seq n]}{\statem{\statesym}} \\
  \einter{\typenv}{\interp}{\nil}(\statesym) =& 
     \state{\statef{\statesym}[\res\seq \semnil]}{\statem{\statesym}} \\
  \einter{\typenv}{\interp}{\newk{\class}}(\statesym) =& 
     \state{\statef{\statesym}[\res \seq \ell]}
           {\statem{\statesym}[\ell \seq \mathit{newobj}(\class)]}
      \mbox{ where } \ell \notin \dom(\statem{\statesym}) \\
  \einter{\typenv}{\interp}{v}(\statesym) =& 
     \state{\statef{\statesym}[\res\seq \statef{\statesym}(v)]}{\statem{\statesym}}\\
  \einter{\typenv}{\interp}{v.\field}(\statesym) =& 
     \state{\statef{\statesym}[\res\seq \statem{\statesym}(\statef{\statesym}(v)).\field]}{\statem{\statesym}}\\
  \einter{\typenv}{\interp}{\mathit{exp}_1{\oplus}\mathit{exp}_2}(\statesym) =&  
     \state{\statef{\statesym}\left[\res\seq \statef{\statesym}_1(\res)  \oplus  \statef{\statesym}_2(\res)\right]}
           {\statem{\statesym}_2}\mbox{ where } \\
  & ~~~~\statesym_1=\einter{\typenv}{\interp}{\mathit{exp}_1}(\statesym) \mbox{ and } \statesym_2=\einter{\typenv}{\interp}{\mathit{exp}_2}\left(\state{\statef{\statesym}}{\statem{\statesym}_1}\right)\\
  \einter{\typenv}{\interp}{v_0.\method(v_1,\ldots,v_n)}(\statesym) =&
   \state{\statef{\statesym}\left[\res \seq \statef{\statesym}_2(out)\right]}{\statem{\statesym}_2}
   \mbox{ where }  \statesym_2=\interp(\methodsig)(\statesym_1) \mbox { s.t. $\statesym_1$ is }\\ 
      &
       \statem{\statesym}_1 = \statem{\statesym}; \statef{\statesym}_1(this)=\statef{\statesym}(v_0);
       \forall 1{\le}i{\le}n.~\statef{\statesym}_1(w_i)=\statef{\statesym}(v_i); \\
      & \mbox{and } \methodsig = \lookup(\statesym,v_0.\method(v_1,\ldots,v_n)); 
  \\
\hline
  \cinter{\typenv}{\interp}{\skipc}(\statesym) =& \statesym
  \\
  \cinter{\typenv}{\interp}{v\assign\mathit{exp}}(\statesym) =&
     \state{\statef{\statesym}\left[v \seq \statef{\statesym}_e(\res)\right]}
           {\statem{\statesym}_e} 
  \\
  \cinter{\typenv}{\interp}{v.\field\assign\mathit{exp}}(\statesym) =&
     \state{\statef{\statesym}}
           {\statem{\statesym}\left[\ell.\field \seq \statef{\statesym}_e(\res)\right]}
           \mbox{ where } \ell=\statef{\statesym}(v)
  \\
  \cinter{\typenv}{\interp}{
    \begin{array}{@{}rl@{}}
      \ifte~\mathit{exp} & \iftethen~\mathit{com}_1 \\
                         & \ifteelse~\mathit{com}_2 
    \end{array}}(\statesym)
  =& 
   \mbox{if $\statef{\statesym}_e(\res) \neq 0$
               then $\cinter{\typenv}{\interp}{\mathit{com}_1}(\statesym)$ 
               else $\cinter{\typenv}{\interp}{\mathit{com}_2}(\statesym)$ }
  \\
  \cinter{\typenv}{\interp}{\while~\mathit{exp}~\whilebody~\mathit{com}}(\statesym)
  =& \den(\statesym) \mbox{ where } \den \mbox{ is the least fixpoint
    of }\\
  &   \mbox{$\lambda w.\lambda\statesym.$ 
    if $\statef{\statesym}_e(\res) \neq 0$ 
    then $w(\cinter{\typenv}{\interp}{\mathit{com}}(\statesym))$ else $\statesym$}
  \\
  \cinter{\typenv}{\interp}{\return~\mathit{exp}}(\statesym) =& 
    \state{\statef{\statesym}\left[\out \seq \statef{\statesym}_e(\res)\right]}
           {\statem{\statesym}_e} 
  \\
  \cinter{\typenv}{\interp}{\mathit{com}_1;\mathit{com}_2}(\statesym) =&
     \cinter{\typenv}{\interp}{\mathit{com}_2}(\cinter{\typenv}{\interp}{\mathit{com}_1}(\statesym))
  \\
\end{array}
\)
\end{minipage}
\end{center}
\caption{Denotations for expressions and commands.  The state
  $\statesym_e$ is $\einter{\typenv}{\interp}{\mathit{exp}}(\statesym)$.}
\label{fig:den-semantics}
\end{figure}

Denotations for expressions and commands are depicted in
Figure~\ref{fig:den-semantics}.
An expression denotation $\einter{\typenv}{\iota}{\mathit{exp}}$ maps
states from $\states{\typenv}$ to states from
$\states{\typenv\cup\{\res\}}$, where $\res$ is a special variable for
storing the value of $\mathit{exp}$.
A command denotation $\cinter{\typenv}{\iota}{com}$ maps states to
states, in presence of $\iota\in\interpretations$.
The function $\newobj{\class}$ creates a new instance of $\class$ with
$\integer$ fields initialized to $0$ and reference fields initialized
to $\semnil$, while $\newloc{\statem{\statesym}}$ returns the first
\emph{free} location, i.e., the first $\ell \notin
\dom(\statem{\statesym})$ according to the total ordering on
locations.  The function $\lookup$ resolves the method call according
to the runtime type of the object, and returns the signature of the
method to be invoked.
The \emph{concrete denotational semantics} of a program is defined as
the \emph{least fixpoint} (\emph{lfp}) of the following transformer of
interpretations~\cite{BossiGabbrielliLeviMartelli94}.

\begin{definition}
  \label{def:conc-den-semantics}
  
  The denotational semantics of a program $P$ is the \emph{lfp} of

  \vspace{-4mm}
  \[
  \tp{P}(\iota) = \left\{
    \methodsig\mapsto\lambda\sigma{\in}\states{\inp{\methodsig}}.
    \project {
      \cinter{\scope{\methodsig}\cup\{\out\}}{\iota}{\body{\methodsig}}
      (\extend{\sigma}{\methodsig}) } {\typenv{\setminus}\out}
  \right\}_{\methodsig \in P}
  \]

  \vspace{-2mm}
  \noindent
  where
  $\extend{\sigma}{\methodsig}=\state{\statef{\statesym}[\forall
    v\in\locals{\methodsig}\cup\{\out\}. v\mapsto
    0/\semnil]}{\statem{\statesym}}$.
\end{definition}

\noindent
The denotation for a method signature $\methodsig\in P$ is computed by
$\tp{P}$ as follows: it
(1) extends (using $\extend{\sigma}{\methodsig}$) the input state
$\statesym \in \states{\inp{\methodsig}}$ such that local variables
are set to $0$ or $\semnil$;
(2) computes the denotation of the code of \methodsig, using
$\cinter{\scope{\methodsig}\cup\{\out\}}{\iota}{\_}$; and
(3) restricts the resulting denotation to $\out$, using
$\exists\typenv{\setminus}\out$.

 % domain

\section{The abstract domains}
\label{sec:abstractDomains}

This section formalizes the analysis \FIELDBASED by means of Abstract
Interpretation~\cite{Cousot77}, relying on the notion of
\emph{abstract domain}.
The following example shows a class hierarchy which will be used in
the rest of this section.

\begin{example}[(class hierarchy)]
  \label{ex:classHierarchy}
  Let the class hierarchy $\langle \classes, \subclass \rangle$ under
  study be defined as follows.  Objects of class \employee model
  employees, which can be of level 1 (\levelone) or 2 (\leveltwo).  An
  employee has one main device (\md), which is a laptop (\laptop);
  level-2 employees also have a tablet (\tablet) as an accessory
  device (\ad) which is associated (\lnk) to a laptop.  Devices
  (\device) are also associated to their owner (\owner).

  \noindent
  \begin{minipage}{65mm}
    \begin{lstlisting}[numbers=none]
class $\employee$ { $\laptop$ $\md$; }
class $\levelone$ extends $\employee$ { }
class $\leveltwo$ extends $\employee$ { $\tablet$ $\ad$; }
class $\device$ { $\employee$ $\owner$; }
class $\laptop$ extends $\device$ { }
class $\tablet$ extends $\device$ { $\laptop$ $\lnk$; }
    \end{lstlisting}
  \end{minipage}
  \begin{minipage}{55mm}
    \begin{tikzpicture}
      \tikzstyle{classnode} = [circle,draw,inner sep=0.5pt]
      \node[classnode] (employee) at (0.3,0) {$\employee$};
      \node[classnode] (level1) at (-0.2,-1) {$\levelone$};
      \node[classnode] (level2) at (0.8,-1) {$\leveltwo$};
      \draw (employee) -- (level1);
      \draw (employee) -- (level2);
      
      \node[classnode] (device) at (4,0) {$\device$};
      \node[classnode] (laptop) at (3,-1) {$\laptop$};
      \node[classnode] (tablet) at (5,-1) {$\tablet$};
      \draw (device) -- (laptop);
      \draw (device) -- (tablet);
      
      \draw[dotted,->] (employee) -> node[below] {$\md$} (laptop);
      \draw[dotted,->] (level2) .. controls (2,-1.7) and (4,-1.7) .. node[below] {$\ad$} (tablet);
      \draw[dotted,->] (tablet) -> node[above] {$\lnk$} (laptop);
      \draw[dotted,->] (device) -> node[above] {$\owner$} (employee);
    \end{tikzpicture}
  \end{minipage}
  
  \noindent
  In the figure, solid lines correspond to $\subclass$; dotted lines
  represent fields.
\end{example}

\subsection{Background in Logic}
\label{sec:backgroundInLogic}

A \emph{Boolean function} is a function $f : \bool^n \mapsto \bool$
with $n\geq 0$, and can be represented as a \emph{propositional
  formula} over a set $X$ with cardinality $n$.  In this paper,
Boolean functions and propositional formul\ae~will be used
interchangeably.  Moreover, a \emph{truth assignment} of Boolean
variables will be often represented as the set of variables which are
true under that assignment.  In this framework, $X$ will be the set
$\fpropositions = \{ \fproposition{\field}~|~\field \in \fields \}$ of
propositions $\fproposition{\field}$ corresponding to program fields.
Such propositions are called \emph{f-propositions}.  Propositional
formul\ae~over $\fpropositions$ are called \emph{path-formul\ae}.  As
usual, a truth assignment $\finterpretation \subseteq \fpropositions$
is a \emph{model} of a path-formula $\pformula$ if $\pformula$
evaluates to $\true$ under $\finterpretation$.  The set of models of
$\pformula$ is denoted by $\fmodels{\pformula}$.

The path-formula $\onlyfields{\finterpretation}$ is defined as
$\bigwedge \{ ~\fproposition{\field}~|~\fproposition{\field} {\in}
\finterpretation ~\} \wedge \bigwedge \{~ \lnot
\fproposition{\field}~|~ \fproposition{\field} {\in}
\fpropositions{\setminus}\finterpretation~ \}$, and represents the
formula whose only model is $\finterpretation$.  An important special
case is $\pformulaempty = \bigwedge \{ \lnot \fproposition{\field}|
\fproposition{\field} {\in} \fpropositions \}$; moreover,
$\onlyfields{\fproposition{\field_1},..\fproposition{\field_k}}$ will
be a shorthand for $\onlyfields{\{
  \fproposition{\field_1}..\fproposition{\field_k} \}}$.

A formula $\pformula$ is \emph{monotone} if, for every two assignments
$\finterpretation$ and $\finterpretation'$, $\finterpretation
\subseteq \finterpretation'$ and $\finterpretation \in
\fmodels{\pformula}$ imply $\finterpretation' \in
\fmodels{\pformula}$.  It is \emph{positive} if $\fpropositions \in
\fmodels{\pformula}$.  It is \emph{definite} if, for every two
assignments $\finterpretation'$ and $\finterpretation''$,
$\finterpretation' \in \fmodels{\pformula}$ and $\finterpretation''
\in \fmodels{\pformula}$ implies $\finterpretation' \cap
\finterpretation'' \in \fmodels{\pformula}$.  Finally, set conjunction
$\bigwedge X$ and set disjunction $\bigvee X$ will be, respectively,
$\true$ and $\false$ whenever $X = \emptyset$.  $\pformulae$ denotes
the set of all path-formul\ae.

\subsection{Paths, cycles, and fields}
\label{sec:pathCyclesFields}

The abstract domains used by \FIELDBASED are based on the notion of
\emph{reachable heap locations}, i.e., the part of the heap which can
be reached starting from a given location (or the variable pointing to
it).  Given a heap $\heap$, a \emph{path} $\hpath$ from
$\ell'\in\dom(\heap)$ to $\ell''\in\dom(\heap)$ is a sequence
$\tuple{\ell_0,..,\ell_k}$ of locations such that (1) $k \geq 0$; (2)
$\ell_0 = \ell'$; (3) $\ell_k = \ell''$; and (4) for every $0\leq
i\leq k{-}1$, it holds that $\ell_{i+1} \in
\codom(\objframe{\heap(\ell_i)})$, i.e., $\ell_{i+1}$ is the location
bounded to a field of the object to which $\ell_i$ is bound.  The
\emph{length} of a path $\tuple{\ell_0,..,\ell_k}$ is $k$;
\emph{empty} paths are those with length 0.  A \emph{cycle} is a path
from $\ell$ to $\ell$ itself; it is an \emph{empty} cycle if its
length is 0.  Given $\hpath_1 = \tuple{\ell_0,..,\ell_k}$ and
$\hpath_2 = \tuple{\ell_k,..,\ell_m}$, the \emph{concatenation}
$\hpath_1 \hpconcat \hpath_2$ is the path
$\tuple{\ell_1,..,\ell_k,..,\ell_m}$.

\begin{definition}[(reachable locations, similar to \cite{RossignoliS06-short})]
  \label{def:reachability}
  The set of all reachable locations from $\ell\in\dom(\heap)$ is
  $\reachable{\heap}{\ell}{=}\cup\{ \reachablei{i}{\heap}{\ell} \mid
  i\geq 0 \}$,
  where
  $\reachablei{0}{\heap}{\ell} = \{\ell\}$, and
  $\reachablei{i+1}{\heap}{\ell}$ is
  $\cup\{\codom(\objframe{\heap(\ell')}) \cap \locations \mid \ell'
  \in \reachablei{i}{\heap}{\ell}\}$.
\end{definition}

\noindent
The rest of this section is developed in the context of a type
environment $\typenv$, which will be often left implicit.
$\FIELDBASED$ considers \emph{fields} or \emph{field identifiers} when
collecting information about paths; to this end, domains introduced in
Sections \ref{sec:theFieldReachabilityDomain} and
\ref{sec:theFieldCyclicityDomain} are based on the notion of
\emph{field-reachable heap locations}, i.e., the part of the heap
which can be reached from a location by traversing (dereferencing)
certain fields.

\begin{definition}[(field traversal)]
  \label{def:fieldTraversal}
  A path $\hpath$ is said to \emph{traverse} a field $\field \in
  \fields$ in the state $\statesym$ if (1) it is a path in
  $\statem{\statesym}$; (2) $\hpath = \tuple{\ell_0, .., \ell_i, ..,
    \ell_{i+1}, .., \ell_k}$ with $k>i\geq 0$; (3) an object $o$ of
  class $\class'\subclasseq \class$ (i.e., $\objtag{o} = \class'$) is
  stored in $\ell_i$ (i.e., $\statem{\statesym}(\ell_i) = o$); and (4)
  $o.\field$ points to the location $\ell_{i+1}$, i.e.,
  $\objframe{o}(\field) = \ell_{i+1}$.
\end{definition}

\begin{example}[(field traversal)]
  \label{ex:fieldTraversal}
  The path depicted below is compatible with the class hierarchy of
  Example \ref{ex:classHierarchy}, and traverses fields
  $\leveltwo.\ad$, $\tablet.\lnk$ and $\device.\owner$.
  \begin{center}
    \begin{tikzpicture}
      \tikzstyle{objnode} = [rectangle,draw,inner sep=1pt]
      \node[objnode] (l1) at (0,0) {$o_1:\leveltwo$};
      \node[objnode] (l2) at (2.5,0) {$o_2:\tablet$};
      \node[objnode] (l3) at (5,0) {$o_3:\laptop$};
      \node[objnode] (l4) at (7.5,0) {$o_4:\levelone$};
      \draw[->] (l1) -- node[above] {\ad} (l2);
      \draw[->] (l2) -- node[above] {\lnk} (l3);
      \draw[->] (l3) -- node[above] {\owner} (l4);      
    \end{tikzpicture}
  \end{center}
\end{example}

\begin{definition}[(p-satisfaction)]
  \label{def:pathsAndPathFormulae}
  A path $\hpath$ is said to \emph{p-satisfy} an f-proposition
  $\fproposition{\field}$ iff it traverses $\field$.  Given a
  path-formula $\pformula$, the p-satisfaction of $\pformula$ by
  $\hpath$, written $\traverses{\hpath}{\pformula}$, follows the usual
  logical rules:
  \[ \begin{array}{r@{~~}c@{~~}l@{\qquad}r@{~~}c@{~~}l}
    \traverses{\hpath}{\fproposition{\field}} & \mbox{iff} &
    \hpath\mbox{ traverses }\field &
    \traverses{\hpath}{\lnot \pformula} & \mbox{iff} &
    \traverses{\hpath}{\pformula} ~\mbox{does not hold}\\
    \traverses{\hpath}{\pformula' \wedge \pformula''} & \mbox{iff} &
    \traverses{\hpath}{\pformula'} \mbox{ and
    }\traverses{\hpath}{\pformula''} &
    \traverses{\hpath}{\pformula' \vee \pformula''} & \mbox{iff} &
    \traverses{\hpath}{\pformula'} \mbox{ or
    }\traverses{\hpath}{\pformula''}
  \end{array} \]
\end{definition}

\noindent
As usual, $\true$ stands for a tautology, and $\false$ stands for a
contradiction.  Ordering on path-formul\ae~is logical implication:
$\pformula' \leq \pformula''$ iff $\pformula' \Rightarrow \pformula''$
is valid.  The meaning is straightforward: for every $\hpath$, if
$\pformula' \leq \pformula''$ and $\traverses{\hpath}{\pformula'}$,
then $\traverses{\hpath}{\pformula''}$.

\begin{example}[(p-satisfaction)]
  \label{ex:pathFormulae}
  The path of Example \ref{ex:fieldTraversal} p-satisfies
  $\fproposition{\ad} \wedge \fproposition{\owner}$, and any
  path-formula which is implied by it, such as $\fproposition{\ad}$.
  On the other hand, it does not p-satisfy $\lnot
  \fproposition{\owner} \vee \fproposition{\md}$.
\end{example}

A truth assignment $\finterpretation \subseteq \fpropositions$ is said
to be \emph{viable} if there exists some path $\hpath$ in some state
$\statesym$ which p-satisfies $\onlyfields{\finterpretation}$.  To
rule out non-viable truth assignments allows obtaining a Galois
insertion (i.e., without superfluous elements in the abstract domain)
rather than a Galois connection in the definition of the abstract
domains for reachability and cyclicity.

\begin{lemma}[(viability)]
  \label{lemma:decideViability}
  The viability of a truth assignment $\finterpretation$ is
  decidable\footnote{Proofs are available in Appendix
    \ref{sec:proofs}.}.
\end{lemma}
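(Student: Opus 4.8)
The plan is to characterize viability in terms of a finite, decidable condition on the class hierarchy and the field declarations, and then show that this condition can be checked algorithmically. The key observation is that, by Definition~\ref{def:fieldTraversal}, a truth assignment $\finterpretation \subseteq \fpropositions$ being viable means exactly that there is a single path $\hpath = \tuple{\ell_0, .., \ell_k}$ in some heap $\heap$ that traverses \emph{every} field $\field$ with $\fproposition{\field} \in \finterpretation$ and \emph{no} field $\field'$ with $\fproposition{\field'} \notin \finterpretation$. First I would unfold this: the path consists of a sequence of ``steps'', each step going from a location holding an object $o_i$ of some class $\class_i$ to the next location along one specific field $\field_i$ declared in (a superclass of) $\class_i$; the multiset of fields used must cover $\finterpretation$ and stay within $\finterpretation$. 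Since repeating a step does not change the set of traversed fields, and since we are free to choose the heap, the only real constraints are (i) each field used must be in $\finterpretation$, and (ii) consecutive steps must be ``compatible'': the field $\field_i$ traversed from $\ell_i$ points to $\ell_{i+1}$, whose object must have a class from which the next field $\field_{i+1}$ can be traversed. Because field types are declared in the class hierarchy, compatibility is a purely static, finite relation.

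The next step is to reduce viability to a reachability question in a finite directed graph built from the program's class hierarchy and field declarations. I would define a graph whose nodes are classes (or class/field pairs) and whose edges record ``if an object of class $\class$ is reached, one may traverse field $\field$ (declared in $\class$ or an ancestor) and land on an object whose runtime class is any subclass of $\dectype(\field)$''. Restricting edges to fields $\field$ with $\fproposition{\field} \in \finterpretation$ gives a finite subgraph $\cG_{\finterpretation}$. Then $\finterpretation$ is viable if and only if there is a walk in $\cG_{\finterpretation}$ that uses every field in $\finterpretation$ at least once --- equivalently, since walks may revisit nodes and edges freely, if and only if for each $\field$ with $\fproposition{\field} \in \finterpretation$ there is an edge labelled $\field$ lying on some walk, and these edge-occurrences can be stitched together. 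I would argue that stitching is always possible once each required field occurs on \emph{some} walk, because one can concatenate the individual walks by inserting connecting walks between their endpoints --- and if no connecting walk exists between two required-field edges, then no global path uses both, so $\finterpretation$ is not viable. This makes the condition: $\finterpretation$ is viable iff the required-field edges all lie in a single strongly-connected-reachability component arrangement that can be linearized, which is a finite graph property.

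The main obstacle I expect is getting the ``stitching'' argument exactly right: it is tempting to think any collection of required edges, each individually reachable, can be combined into one path, but this fails if, say, traversing $\ffield$ forces you into a subtree of the class hierarchy from which $\gfield$ is unreachable and vice versa. So the precise statement must be something like: $\finterpretation$ is viable iff there is an ordering $\field_1, \dots, \field_m$ of $\{\field \mid \fproposition{\field} \in \finterpretation\}$ and a choice of $\field_j$-edges in $\cG_{\finterpretation}$ such that the head of the $\field_j$-edge can reach the tail of the $\field_{j+1}$-edge within $\cG_{\finterpretation}$. Checking this is decidable: there are finitely many orderings and finitely many edge choices, and reachability in a finite graph is decidable. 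A cleaner formulation, which I would ultimately adopt, avoids enumerating orderings: build the ``reachability closure'' relation on edges of $\cG_{\finterpretation}$, and observe that a path covering all required edges exists iff the required edges, together with the reachability relation among them, admit a Hamiltonian-path-like traversal --- but since edges/nodes may be revisited, this collapses to: the required edges all lie in one ``mutually reachable in some order'' cluster, checkable by computing, for each pair of required edges, whether one's head reaches the other's tail in $\cG_{\finterpretation}$, and then checking this pairwise-reachability digraph on the (at most $|\fields|$) required edges has a Hamiltonian path --- decidable since $|\fields|$ is fixed and finite. Either way, finiteness of $\classes$ and $\fields$ reduces everything to decidable finite-graph reachability, establishing the claim; I would then note that the subclass relation $\subclasseq$ and the declared types $\dectype(\cdot)$ are exactly the static data needed to build $\cG_{\finterpretation}$.
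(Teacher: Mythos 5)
Your proposal is correct and follows essentially the same route as the paper's proof: both reduce viability to reachability in a finite graph built from the class hierarchy and field declarations restricted to the fields of $\finterpretation$, and both then enumerate orderings of the required fields (the paper's permutations of the field set, your Hamiltonian-path check on the meta-graph of required edges) to verify that consecutive traversals can be ``stitched'' via connecting walks. The decidability argument --- finiteness of $\classes$ and $\fields$ plus decidable reachability closure in a finite graph --- is the same in both.
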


Note that viability is not related to an assignment that satisfies a
formula, but rather to a property of the assignment itself: that it
can represent a ``real'' path in a ``real'' heap according to class
declarations.

\begin{example}[(viability of truth assignments)]
  \label{ex:decideViability}
  Given the class hierarchy introduced in Example
  \ref{ex:classHierarchy}, the truth assignment $\{ \ad, \lnk, \owner
  \}$ is viable, as shown by the path of Example
  \ref{ex:fieldTraversal}.  On the other hand, $\{ \md, \lnk \}$ is
  not viable.  In fact, a path only traversing $\md$ and $\lnk$ should
  contain at least one \tablet object $o_\tablet$ and one \laptop
  object $o_\laptop$ since \lnk must be traversed.  It must also
  include one \employee (or a subclass) object $o_\employee$ since \md
  must also be traversed.  Suppose $o_\tablet$ is the first object on
  the path: then the second must be $o_\laptop$ since \owner cannot be
  traversed, and there is no way to reach $o_\employee$.  On the other
  hand, if the $o_\laptop$ is the first object, then no other object
  can be reached without traversing \owner.  Finally, if $o_\employee$
  is the first object, then the second one must be $o_\laptop$, and,
  again, no other object can be reached from it.
\end{example}

\begin{definition}[(equivalence)]
  \label{def:equivalenceRelationOnPathFormulae}
  The set $\pformulae$ of path-formul\ae~can be partitioned according
  to the following equivalence relation: $\pformula$ and $\pformulag$
  are equivalent unless there is a path in some state which
  p-satisfies one and only one of them.  Note that this relation is
  ``coarser'' than (i.e., implied by) standard logical equivalence
  since the discriminating path must be compatible with the class
  hierarchy.
\end{definition}

\begin{lemma}[(equivalence)]
  \label{lemma:decideEquivalence}
  The equivalence of two path-formul\ae~is decidable.
\end{lemma}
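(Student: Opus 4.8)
The plan is to reduce the equivalence of $\pformula$ and $\pformulag$ to a finite check by exploiting the fact that p-satisfaction of a path-formula by a path $\hpath$ depends only on the set $\finterpretation_\hpath \subseteq \fpropositions$ of f-propositions that $\hpath$ p-satisfies. Indeed, by Definition \ref{def:pathsAndPathFormulae}, $\traverses{\hpath}{\pformula}$ holds iff $\finterpretation_\hpath \in \fmodels{\pformula}$, and likewise for $\pformulag$. Hence a path $\hpath$ discriminates between $\pformula$ and $\pformulag$ exactly when its associated assignment $\finterpretation_\hpath$ lies in the symmetric difference $\fmodels{\pformula}\bigtriangleup\fmodels{\pformulag}$. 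Consequently, $\pformula$ and $\pformulag$ are equivalent (in the sense of Definition \ref{def:equivalenceRelationOnPathFormulae}) if and only if there is no \emph{viable} truth assignment $\finterpretation$ with $\finterpretation \in \fmodels{\pformula}\bigtriangleup\fmodels{\pformulag}$.

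First I would make the above correspondence precise: observe that for any path $\hpath$ in any state, the assignment $\finterpretation_\hpath$ is by construction viable (it is p-satisfied by $\hpath$ itself), and conversely every viable assignment arises as $\finterpretation_\hpath$ for some path $\hpath$ in some state. So the existence of a discriminating path is equivalent to the existence of a viable assignment in the symmetric difference of the two model sets. Second, since $\fpropositions$ is finite (it has one f-proposition per program field), there are only finitely many truth assignments $\finterpretation \subseteq \fpropositions$ to consider; for each of them, membership in $\fmodels{\pformula}$ and in $\fmodels{\pformulag}$ is decidable by simply evaluating the formulae under $\finterpretation$, and viability is decidable by Lemma \ref{lemma:decideViability}. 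The decision procedure therefore enumerates all assignments, discards the non-viable ones, and reports ``equivalent'' iff none of the surviving assignments separates $\pformula$ from $\pformulag$; this terminates and is correct by the characterization above.

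The only delicate step is the equivalence between ``there is a discriminating path in some state'' and ``there is a viable assignment in the symmetric difference'': the forward direction is immediate (take $\finterpretation = \finterpretation_\hpath$), while the backward direction needs that a viable $\finterpretation$ really is realized by a path whose p-satisfied f-propositions are \emph{exactly} $\finterpretation$ — which is precisely what the definition of viability (via $\onlyfields{\finterpretation}$) guarantees, since $\onlyfields{\finterpretation}$ asserts both the presence of each $\fproposition{\field}\in\finterpretation$ and the absence of each $\fproposition{\field}\notin\finterpretation$. Once this is in place, the remaining reasoning is just the finiteness of $\wp(\fpropositions)$ together with the decidability of formula evaluation and of viability, so no further obstacle remains.
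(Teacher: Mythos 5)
Your proof is correct and follows essentially the same route as the paper's: since $\fpropositions$ is finite, one enumerates all truth assignments, keeps only the viable ones (decidable by Lemma~\ref{lemma:decideViability}), and checks whether any of them is a model of exactly one of the two formul\ae. You merely spell out more explicitly the correspondence between discriminating paths and viable assignments in the symmetric difference of the model sets, which the paper leaves implicit.
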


In the following, $\pformulaeq$ will be $\pformulae$ with the
equivalence relation of Definition
\ref{def:equivalenceRelationOnPathFormulae}.

\subsection{The Field-Reachability domain}
\label{sec:theFieldReachabilityDomain}

First, the definition of field-reachability between program variables
is given.

\begin{definition}[(field-reachability on variables)]
  \label{def:fieldReachabilityBetweenVariables}
  A variable $v$ is said to \emph{reach} another variable $w$ in
  $\statesym$ if there exists a path from $\statef{\statesym}(v)$ to
  $\statef{\statesym}(w)$.  Moreover, given some $\pformula$, $v$ is
  said to \emph{$\pformula$-reach} $w$ in $\statesym$ if \emph{every}
  path from $\statef{\statesym}(v)$ to $\statef{\statesym}(w)$
  p-satisfies $\pformula$.  This definition implies that any variable
  $v$ $\false$-reaches $w$ if and only if there is no path between
  them.
\end{definition}

\begin{example}[(reachability)]
  \label{ex:fieldReachabilityBetweenVariables}
  Consider the heap depicted below, based on Example
  \ref{ex:classHierarchy}.
  \begin{center}
    \begin{tikzpicture}
      \tikzstyle{objnode} = [rectangle,draw,inner sep=1pt]
      \node (v) at (0,1) {$v$};
      \node (w) at (8,1) {$w$};

      \node[objnode] (o1) at (0,0) {$o_1:\leveltwo$};
      \node[objnode] (o2) at (4,0.5) {$o_2:\laptop$};
      \node[objnode] (o3) at (4,-0.5) {$o_3:\tablet$};
      \node[objnode] (o4) at (8,0) {$o_4:\levelone$};

      \draw[dotted] (v) -- (o1);
      \draw[dotted] (w) -- (o4);

      \draw[->] (o1) -- node[above] {\md} (o2);
      \draw[->] (o1) -- node[above] {\ad} (o3);
      \draw[->] (o3) -- node[auto] {\lnk} (o2);

      \draw[->] (o2) -- node[above] {\owner} (o4);
    \end{tikzpicture}
  \end{center}
  Among the path-formul\ae~$\pformula$ such that $v$
  $\pformula$-reaches $w$, there are:

  \medskip
  \begin{center}
    \begin{tabular}{lcl}
      $\fproposition{\md} \vee
      \fproposition{\ad}$ & : & all paths traverse either
      $\md$ or $\ad$ \\
      $\fproposition{\owner}$ & : & all paths traverse
      $\owner$ \\
      $\lnot \fproposition{\lnk} \vee \lnot \fproposition{\md}$
      & : & at most one between $\lnk$ and $\md$ is traversed
    \end{tabular}
  \end{center}
\end{example}

An extension of the equivalence relation on path-formul\ae~is needed
here: $\pformulaeq^2$ is a function which takes a pair of variables
$(v,w)$, and returns the set $\pformulae$ equipped by the following
equivalence relation $\equiv^{v,w}$: $\pformula \equiv^{v,w}
\pformulag$ unless there is a path from $\statef{\statesym}(v)$ to
$\statef{\statesym}(w)$ in some state which p-satisfies one and only
one between $\pformula$ and $\pformulag$.  The only difference
w.r.t.~the original $\equiv$ is that the path must connect $v$ to $w$.

The reachability abstract domain is formalized similarly to $\domr$
\cite{GenaimZ13}, and is actually a refinement of it (Section
\ref{sec:domainComparison}).
The next definition shows the lattice of abstract values representing
reachability between variables.  In the following, functions are often
represented by \emph{$\lambda$-notation}, and $\typenv$ is omitted.

\begin{definition}
  \label{def:abs-dom-reach}
  The \emph{field-reachability abstract domain} is the complete
  lattice
  \[ \domfr=\tuple{\reachsetf{\typenv},~ \sqsubseteqfr,~\botfr,~\topfr,~
    \sqcapfr,~\sqcupfr} \]
  \begin{itemize}
  \item the set $\reachsetf{\typenv}$ is the set of functions whose
    domain is $\variables \times \variables$, and that return an
    element of $\pformulaeq^2(v,w) = \pformulae_{\equiv^{v,w}}$ for a
    pair of arguments $(v,w)$;
  \item $\sqsubseteqfr$ is $\leq$ on path-formul\ae, applied
    point-wise;
  \item $\botfr = \lambda(v,w).\false$ and $\topfr =
    \lambda(v,w).\true$;
  \item $f' {\sqcapfr} f'' = \lambda(v,w). f'(v,w) \wedge f''(v,w)$
    and $f' {\sqcupfr} f'' = \lambda(v,w). f'(v,w) \vee f''(v,w)$.
  \end{itemize}
\end{definition}

The meaning of an abstract value $\abselemfr$ is the following: it
represents all the states where, for every $v$ and $w$ (possibly the
same variable), all paths from $v$ to $w$ p-satisfy $\pformula =
\abselemfr(v,w)$.  Note that $\pformula \not\equiv \false$ does not
mean that there is some path from $v$ to $w$ in a concrete state: this
a ``possible'' analysis, so that non-reachability is always a
possibility.  On the other hand, $\abselemfr(v,w) = \false$ excludes
reachability since no path p-satisfies $\false$.  The bottom $\botfr$
models the (non-empty) set of all states where all reference variables
are \nil, whereas $\topfr$ represents $\states{\typenv}$.  Note that
$\abselemfr(v,v) \geq \pformulaempty$ (recall that $\pformulaempty$ is
$\bigwedge \{ \lnot \fproposition{\field}~|~ \fproposition{\field} \in
\fpropositions \}$) whenever $v$ is not \nil.  Unlike $\domr$, this
abstract domain can also represent \emph{aliasing} \cite{Hind01}
because empty paths are also considered.  This explains the different
definition of the bottom element in $\domfr$ and $\domr$.

\begin{example}[(abstract values)]
  \label{ex:abstractState}
  In a program where $\class.\field_1$ and $\class.\field_2$ are the
  only fields, the abstract value $\abselemfr$ such that
  \[ \begin{array}{rcl}
    \abselemfr(v,v) & = & (\lnot \fproposition{\field_1} {\wedge} \lnot
    \fproposition{\field_2}) \vee (\fproposition{\field_1} {\wedge}
    \fproposition{\field_2}) \\
    \abselemfr(v,w) = \abselemfr(v,w') & = & \fproposition{\field_1}
    \\
    \abselemfr(w,v) = \abselemfr(w',v) & = & \false \\
    \abselemfr(w,w) = \abselemfr(w,w') = \abselemfr(w',w)
    = \abselemfr(w',w') & = & \lnot \fproposition{\field_1} {\wedge} \lnot
    \fproposition{\field_2}
  \end{array} \]
  represents heaps (a) and (b), but not (c).  The last line allows $w$
  and $w'$ to alias, and this is compatible with all heaps: in the
  first, $w'$ is $\nil$, so that there are no paths starting from it,
  not even empty paths; in the second, they actually alias; in the
  third, self-aliasing holds for both, but they do not alias with each
  other.  Heap (a) is represented by $\abselemfr(v,v)$ since $v$ is
  not cyclic (only self-aliasing), by $\abselemfr(v,w)$ because the
  path from $v$ to $w$ actually traverses $\field_1$, and by
  $\abselemfr(v,w')$ since there are no paths between $v$ and $w'$.
  Note that $\abselemfr(v,w)$ also allows paths to traverse
  $\field_2$, as in this case.  Heap (b) is also represented by
  $\abselemfr(v,v)$ because $v$ is self-reachable by a path traversing
  both fields, and there is no other path only traversing one of them.
  It is also represented by $\abselemfr(v,w)$ and $\abselemfr(v,w')$
  since $v$ does not need to actually reach $w$ or $w'$.  Heap (c) is
  not represented by $\abselemfr(w,v)$ because $\false$ means that
  there can be no reachability from $w$ to $v$.
 
  \begin{center}
    \begin{tikzpicture}
      \tikzstyle{objnode} = [rectangle,draw,inner sep=1pt]

      \node at (0.75,-2.5) {(a)};

      \node (va) at (0,0) {$v$};
      \node (wa) at (1.5,0) {$w$};
      \node[objnode] (l1a) at (0,-0.7) {$o_1:\class$};
      \node[objnode] (l2a) at (1.5,-0.7) {$o_2:\class$};
      \node[objnode] (l3a) at (1.5,-1.9) {$o_3:\class$};
      \draw[dotted] (va) -- (l1a);
      \draw[dotted] (wa) -- (l2a);
      \draw[->] (l1a) -- node[left] {$\field_1$} (l3a);
      \draw[->] (l3a) -- node[right] {$\field_2$} (l2a);
      
      \node at (4.75,-2.5) {(b)};
      
      \node (vb) at (4,0) {$v$};
      \node (wb) at (5.2,0) {$w$};
      \node (w1b) at (5.8,0) {$w'$};
      \node[objnode] (l1b) at (4,-0.7) {$o_1:\class$};
      \node[objnode] (l2b) at (5.5,-0.7) {$o_2:\class$};
      \node[objnode] (l3b) at (4,-1.9) {$o_3:\class$};
      \node[objnode] (l4b) at (5.5,-1.9) {$o_4:\class$};
      \draw[dotted] (vb) -- (l1b);
      \draw[dotted] (wb) -- (l2b);
      \draw[dotted] (w1b) -- (l2b);
      \draw[->] (l1b) .. controls (3.7,-1.3) .. node[left] {$\field_1$} (l3b);
      \draw[->] (l3b) .. controls (4.3,-1.3) .. node[right] {$\field_2$} (l1b);
      \draw[->] (l2b) -- node[right] {$\field_2$} (l4b);
      
      \node at (8.75,-2.5) {(c)};
      
      \node (vc) at (8,0) {$v$};
      \node (wc) at (9.5,0) {$w$};
      \node (w1c) at (8.5,-1.9) {$w'$};
      \node[objnode] (l1c) at (8,-0.7) {$o_1:\class$};
      \node[objnode] (l2c) at (9.5,-0.7) {$o_2:\class$};
      \node[objnode] (l3c) at (9.5,-1.9) {$o_3:\class$};
      \draw[dotted] (vc) -- (l1c);
      \draw[dotted] (wc) -- (l2c);
      \draw[dotted] (w1c) -- (l3c);
      \draw[->] (l2c) -- node[below] {$\field_2$} (l1c);
    \end{tikzpicture}
  \end{center}

  \noindent
  In general, the representation of most path-formul\ae~can be
  simplified by using the $\onlyfields{}$-notation: for example,
  $\abselemfr(v,v)$ can be written as $\pformulaempty \vee
  \onlyfields{\fproposition{\field_1},\fproposition{\field_2}}$.
\end{example}

\begin{definition}[(abstraction and concretization)]
  The \emph{abstraction} and \emph{concretization} functions between
  $\domfr$ and the concrete domain $\condom$ are:
  \[
  \begin{array}{rcl}
    \alphafr(\statesym) & = & \lambda(v,w).~\bigwedge~\{~
    \pformula~|~\mbox{$v$
      $\pformula$-reaches $w$ in}~\statesym~\} \\
    \alphafr(\concelem) & = & \lambda(v,w).~\bigvee~\{~
    \alphafr(\statesym)(v,w)~|~\statesym \in \concelem~\} \\
    \gammafr(\abselemfr) & = & \{~\statesym\in\states{\typenv} ~|~ ~\forall
    v,w\in\typenv.~\exists \pformula \leq \abselemfr(v,w).~\mbox{$v$
      $\pformula$-reaches $w$ in $\statesym$}~\}
  \end{array}
  \]
\end{definition}

$\alphafr$ is computed as follows: for every $\statesym$, the
conjunction of all the $\pformula$ s.t.~$v$ $\pformula$-reaches $w$
comes to be the \emph{strongest condition} p-satisfied by all paths
from $v$ to $w$ in $\statesym$ (recall that $\pformula$-reachability
means that \emph{all} paths p-satisfy $\pformula$).  All strongest
conditions are combined by disjunction on states.  On the other hand,
$\gammafr$ is the adjoint function required by Lemma
\ref{lemma:gi-reach}.  As expected, given $\abselemfr$
s.t.~$\abselemfr(v,w) = \pformulag$, a state where $v$ does not reach
$w$ is still compatible with $\gammafr(\abselemfr)$ (i.e., it belongs
to the concretization unless other variables make it incompatible)
since $\false \leq \pformulag$, and $v$ $\false$-reaches $w$.

\begin{lemma}[(insertion)]
  \label{lemma:gi-reach}
  $\alphafr$ and $\gammafr$ define a \emph{Galois insertion} between
  $\domfr$ and $\condom$.
\end{lemma}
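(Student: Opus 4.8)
To prove that $\alphafr$ and $\gammafr$ form a Galois insertion between $\domfr$ and $\condom$, I would proceed in three stages: (1) verify that both maps are monotone; (2) establish the adjunction $\alphafr(\concelem) \sqsubseteqfr \abselemfr \iff \concelem \subseteq \gammafr(\abselemfr)$, which gives a Galois \emph{connection}; and (3) show that $\alphafr \circ \gammafr$ is the identity on $\domfr$, which upgrades the connection to an \emph{insertion}. Monotonicity of $\gammafr$ is immediate from its definition: if $\abselemfr \sqsubseteqfr \abselemfr'$, i.e.\ $\abselemfr(v,w) \leq \abselemfr'(v,w)$ for all $v,w$, then any $\pformula \leq \abselemfr(v,w)$ also satisfies $\pformula \leq \abselemfr'(v,w)$, so $\gammafr(\abselemfr) \subseteq \gammafr(\abselemfr')$. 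Monotonicity of $\alphafr$ on sets of states follows because $\alphafr(\concelem)$ is a disjunction indexed by $\statesym \in \concelem$, and enlarging $\concelem$ can only add disjuncts, weakening the formula pointwise.

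For the adjunction, I would unfold both sides. Fix $\concelem$ and $\abselemfr$. The left-hand side $\alphafr(\concelem) \sqsubseteqfr \abselemfr$ says that for every $v,w$ and every $\statesym \in \concelem$, the strongest path-formula $\alphafr(\statesym)(v,w) = \bigwedge\{\pformula \mid v\ \pformula\text{-reaches } w \text{ in } \statesym\}$ implies $\abselemfr(v,w)$. Since $\alphafr(\statesym)(v,w)$ is itself a formula that is $\pformula$-reached (all paths from $v$ to $w$ in $\statesym$ p-satisfy the conjunction of everything they p-satisfy, in particular the conjunction of \emph{all} $\pformula$ they satisfy — here I would invoke that $\traverses{\hpath}{\cdot}$ respects conjunction, Definition~\ref{def:pathsAndPathFormulae}), taking $\pformula := \alphafr(\statesym)(v,w)$ witnesses membership $\statesym \in \gammafr(\abselemfr)$. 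Conversely, if $\concelem \subseteq \gammafr(\abselemfr)$, then for each $\statesym \in \concelem$ and each $v,w$ there is some $\pformula \leq \abselemfr(v,w)$ with $v\ \pformula$-reaching $w$; since $\alphafr(\statesym)(v,w)$ is the conjunction of \emph{all} such formulas, it is $\leq \pformula \leq \abselemfr(v,w)$, and disjoining over $\statesym \in \concelem$ still gives $\alphafr(\concelem)(v,w) \leq \abselemfr(v,w)$. This closes the adjunction; I should be careful to do the computation per pair $(v,w)$ and to remember that $\leq$ on $\pformulaeq^2(v,w)$ is implication \emph{restricted to paths connecting $v$ to $w$}, which is exactly what $\gammafr$ and $\alphafr$ quantify over, so no mismatch arises.

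For the insertion property, I must show $\alphafr(\gammafr(\abselemfr)) = \abselemfr$, i.e.\ for every $v,w$, $\bigvee\{\alphafr(\statesym)(v,w) \mid \statesym \in \gammafr(\abselemfr)\} \equiv^{v,w} \abselemfr(v,w)$. The inequality $\leq$ is a consequence of the Galois connection (counit). For $\geq$, I need, for the fixed equivalence class $\abselemfr(v,w) = \pformula$, to exhibit states in $\gammafr(\abselemfr)$ whose $\alphafr$-images disjoin up to $\pformula$: concretely, for each model $\finterpretation \in \fmodels{\pformula}$ that is \emph{viable}, build a state $\statesym_\finterpretation$ containing a single path from $\statef{\statesym}(v)$ to $\statef{\statesym}(w)$ that p-satisfies exactly $\onlyfields{\finterpretation}$ (viability, Lemma~\ref{lemma:decideViability}, is precisely what guarantees such a path and heap exist), with no other reachability, so that $\alphafr(\statesym_\finterpretation)(v,w) = \onlyfields{\finterpretation}$; then $\bigvee_\finterpretation \onlyfields{\finterpretation}$ over viable models of $\pformula$ is equivalent to $\pformula$ \emph{in the sense of $\equiv^{v,w}$}, because two path-formulae agree under $\equiv^{v,w}$ iff they have the same viable $(v,w)$-reachable models. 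Here I would have to be mildly careful about paths to \emph{other} pairs $(v',w')$ in $\statesym_\finterpretation$ — these should be made trivial (all other variables $\nil$, or pointing into disjoint acyclic single-object fragments) so that the other components of $\alphafr(\statesym_\finterpretation)$ are $\leq$ the corresponding $\abselemfr(v',w')$, keeping $\statesym_\finterpretation \in \gammafr(\abselemfr)$.

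**Main obstacle.** The delicate part is the surjectivity/insertion step: constructing, for a given viable $\finterpretation$, a \emph{single} state that realizes the path-formula $\onlyfields{\finterpretation}$ for the pair $(v,w)$ while simultaneously not violating any of the (possibly many) constraints $\abselemfr(v',w')$ for the other pairs, and then arguing the disjunction of these states' abstractions recovers $\abselemfr(v,w)$ \emph{up to $\equiv^{v,w}$} rather than up to strict logical equivalence. Getting the quantification right — that it suffices to range over viable models, and that the coarser equivalence $\pformulaeq^2$ is exactly what makes $\alphafr \circ \gammafr$ the identity (this is the whole reason $\pformulaeq$ was introduced rather than $\pformulae$) — is where the argument needs the most care, and is essentially the payoff of Lemmas~\ref{lemma:decideViability} and~\ref{lemma:decideEquivalence}.
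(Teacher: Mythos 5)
Your proposal is correct and rests on the same essential ingredients as the paper's proof, but it packages them via a genuinely different decomposition. The paper establishes the Galois connection by proving the two composition inequalities $\concelem \subseteq \gammafr(\alphafr(\concelem))$ and $\alphafr(\gammafr(\abselemfr)) \sqsubseteqfr \abselemfr$ directly, and then upgrades to an \emph{insertion} by citing an auxiliary appendix lemma asserting monotonicity of both maps together with injectivity of $\gammafr$; you instead verify monotonicity, prove the adjunction biconditional $\alphafr(\concelem) \sqsubseteqfr \abselemfr \iff \concelem \subseteq \gammafr(\abselemfr)$, and then show $\alphafr \circ \gammafr = \mathrm{id}$ directly. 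These are textbook-equivalent characterizations, and the substantive content lines up: your observation that $\alphafr(\statesym)(v,w)$ is itself a formula with which $v$ reaches $w$ is exactly what drives the paper's first inequality, and your witness states $\statesym_\finterpretation$ --- a single path realizing a viable model $\finterpretation$, with all other variables made harmless --- are precisely the states the paper constructs to prove injectivity of $\gammafr$ in its auxiliary lemma. What your route buys is that the role of viability and of the coarsened equivalence $\equiv^{v,w}$ in making $\alphafr \circ \gammafr$ the identity becomes fully explicit, which the paper leaves buried in the appendix; what it costs is that you must check $\statesym_\finterpretation \in \gammafr(\abselemfr)$ against \emph{every} pair, including $(v,v)$ and $(w,w)$, where the unavoidable empty paths silently require $\pformulaempty \leq \abselemfr(v,v)$ --- a point your construction and the paper's claim that the witness state ``clearly belongs to'' the concretization both pass over with the same implicit assumption, so it is not a gap relative to the paper's own standard of rigor.
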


\subsection{The Field-Cyclicity domain}
\label{sec:theFieldCyclicityDomain}

The abstract domain $\domfc$ for cyclicity is similar to $\domfr$, so
that most technical details will not be repeated.  The following
definition is similar to Definition
\ref{def:fieldReachabilityBetweenVariables}.

\begin{definition}[(field-cyclic variables)]
  \label{def:fieldCyclicity}
  A variable $v$ is said to be \emph{cyclic} in a state $\statesym$ if
  there exists a path from $\statef{\statesym}(v)$ containing a cycle.
  Given a path-formula $\pformula$, $v$ is said to be
  $\pformula$-\emph{cyclic} in $\statesym$ if \emph{all} cycles
  reachable from $\statef{\statesym}(v)$ in $\statesym$ p-satisfy
  $\pformula$.
\end{definition}

Note that the p-satisfaction of $\pformula$ is not required for the
whole path starting at $\statef{\statesym}(v)$; rather, it must hold
when only the cyclic part of the path is considered.

A new extension of the equivalence relation on path-formul\ae~is
needed in order to deal with cyclicity: $\pformulaeq^1$ will be a
function which takes a variable $v$, and returns the set $\pformulae$
equipped by the equivalence relation $\equiv^v$: $\pformula \equiv^v
\pformulag$ unless there is a path in some state which (1) starts at
$\statef{\statesym}(v)$; and (2) contains a cycle which p-satisfies
one and only one between $\pformula$ and $\pformulag$.

\begin{definition}[(cyclicity abstract domain)]
  \label{def:abs-dom-cyc}
  The abstract domain for field-cyclicity is similar to the
  field-reachability domain: it is the complete lattice
  \[ \domfc=\tuple{\cycsetf{\typenv},~ \sqsubseteqfc,~\botfc,~\topfc,~
    \sqcapfc,~ \sqcupfc} \] where $\cycsetf{\typenv}$ is the set of
  functions mapping each reference variable $v$ to an element of
  $\pformulaeq^1(v)$; $\sqsubseteqfc$ is $\leq$, applied point-wise;
  $\botfc = \lambda v.\false$, and $\topfc = \lambda v. \true$; and
  $\sqcapfc$ and $\sqcupfc$ are, respectively, $\wedge$ and $\vee$,
  applied point-wise.
\end{definition}

\begin{definition}[(abstraction and concretization)]
  The functions 
  \[
  \begin{array}{rcl}
    \alphafc(\statesym) & = & \lambda v.~\bigwedge~\{~
    \pformula~|~\mbox{$v$ is $\pformula$-cyclic in}~\statesym~\} \\
    \alphafc(\concelem) & = & \lambda
    v.~\bigvee~\{~\alphafc(\statesym)(v)~|~\statesym \in \concelem~\}
    \\
    \gammafc(\abselemfc) & = & \{~\statesym\in\states{\typenv} ~|~
    ~\forall v\in\typenv.~ \exists \pformula \leq
    \abselemfc(v).~v~\mbox{is $\pformula$-cyclic in}~\statesym~\}
  \end{array}
  \]
  are the \emph{abstraction} and \emph{concretization} functions
  between $\domfc$ and $\condom$.
\end{definition}

\begin{lemma}[(insertion)]
  \label{lemma:gi-cyc}
  $\alphafc$ and $\gammafc$ define a \emph{Galois insertion} between
  $\domfc$ and $\condom$.
\end{lemma}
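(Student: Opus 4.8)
The plan is to mirror the structure of the proof of Lemma \ref{lemma:gi-reach}, adapting each step from reachability (pairs $(v,w)$ and paths between them) to cyclicity (single variables $v$ and cycles reachable from $\statef{\statesym}(v)$). First I would recall that to establish a Galois \emph{connection} it suffices to show that both $\alphafc$ and $\gammafc$ are monotone and that the adjunction $\alphafc(\concelem) \sqsubseteqfc \abselemfc \iff \concelem \subseteq \gammafc(\abselemfc)$ holds for all $\concelem \in \wp(\states{\typenv})$ and $\abselemfc \in \cycsetf{\typenv}$; and to upgrade to a Galois \emph{insertion} I must additionally show that $\alphafc \circ \gammafc = \mathit{id}$, i.e.\ $\alphafc(\gammafc(\abselemfc)) = \abselemfc$ for every abstract element, which is exactly the place where the use of $\pformulaeq^1(v)$ (path-formul\ae\ modulo the cyclicity-equivalence $\equiv^v$, rather than plain logical equivalence) is essential.

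For the adjunction, I would unfold definitions. Fix $\concelem$ and $\abselemfc$. Suppose $\concelem \subseteq \gammafc(\abselemfc)$; then for every $\statesym \in \concelem$ and every $v$, there is $\pformula \leq \abselemfc(v)$ with $v$ being $\pformula$-cyclic in $\statesym$, hence (since $\pformula$-cyclicity with a weaker formula is implied by $\pformula$-cyclicity with a stronger one) $v$ is $\abselemfc(v)$-cyclic in $\statesym$; but $\alphafc(\statesym)(v) = \bigwedge\{\pformulag \mid v \text{ is } \pformulag\text{-cyclic in } \statesym\}$ is by construction the \emph{strongest} such formula, so $\alphafc(\statesym)(v) \leq \abselemfc(v)$, and taking the disjunction over $\statesym \in \concelem$ gives $\alphafc(\concelem)(v) \leq \abselemfc(v)$, i.e.\ $\alphafc(\concelem) \sqsubseteqfc \abselemfc$. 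Conversely, if $\alphafc(\concelem) \sqsubseteqfc \abselemfc$, then for each $\statesym \in \concelem$ and each $v$ we have $\alphafc(\statesym)(v) \leq \alphafc(\concelem)(v) \leq \abselemfc(v)$; since $v$ is trivially $\alphafc(\statesym)(v)$-cyclic in $\statesym$ (every reachable cycle p-satisfies the conjunction of all formulas it p-satisfies), witnessing $\pformula := \alphafc(\statesym)(v) \leq \abselemfc(v)$ shows $\statesym \in \gammafc(\abselemfc)$. Monotonicity of both maps is then immediate, or follows from the adjunction in the standard way. One subtle point I would be careful about: when $v$ is \nil\ in $\statesym$ there are no cycles reachable from it, so every $\pformula$ (including $\false$) vacuously has $v$ $\pformula$-cyclic, hence $\alphafc(\statesym)(v) = \bigwedge \pformulae = \false$ — consistent with $\botfc = \lambda v.\false$.

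The main obstacle, and the heart of the lemma, is $\alphafc(\gammafc(\abselemfc)) = \abselemfc$. The inequality $\alphafc(\gammafc(\abselemfc)) \sqsubseteqfc \abselemfc$ follows from the adjunction applied with $\concelem = \gammafc(\abselemfc)$ together with $\gammafc(\abselemfc) \subseteq \gammafc(\abselemfc)$. For the reverse, fix $v$ and let $\pformula = \abselemfc(v)$; I must exhibit, for the formula $\pformula$, enough states in $\gammafc(\abselemfc)$ so that the disjunction $\bigvee_{\statesym \in \gammafc(\abselemfc)} \alphafc(\statesym)(v)$ is $\geq \pformula$ modulo $\equiv^v$ — equivalently, that for every viable truth assignment (in the sense relevant to cyclicity: one realizable by a cycle on a path from $v$ in some heap compatible with the class hierarchy) that is a model of $\pformula$, there is a state $\statesym \in \gammafc(\abselemfc)$ in which the cyclic part of some path from $v$ p-satisfies exactly that assignment, so that $\alphafc(\statesym)(v)$ contributes that model to the disjunction. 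This is where I would invoke the decidability/realizability machinery behind Lemma \ref{lemma:decideViability} and the definition of $\equiv^v$: non-viable assignments are precisely the ones quotiented away in $\pformulaeq^1(v)$, so they need not be realized, and viable ones can be realized by explicitly constructing a small heap (a chain from $v$ into a cycle whose objects are chosen to traverse exactly the required fields, padded so as not to introduce spurious smaller cycles), taking care that the constructed state also satisfies the constraints imposed by $\abselemfc(u)$ for the other variables $u$ — which can always be arranged by sending every other reference variable to \nil, since $\abselemfc(u) \geq \false$ and $u$ then vacuously satisfies it. Assembling these witness states shows $\alphafc(\gammafc(\abselemfc))(v) \geq \abselemfc(v)$ in $\pformulaeq^1(v)$, completing the insertion.
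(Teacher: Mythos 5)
Your proof is correct and, in substance, follows the same route as the paper: the paper's own proof of this lemma is a one-liner deferring to the proof of Lemma \ref{lemma:gi-reach}, plus ``easy'' monotonicity and an injectivity result for $\gammafc$ analogous to Lemma \ref{lemma:gammaMonotonicityUnicity}, and your argument is essentially that adaptation spelled out. The only differences are organizational. For the connection, the paper proves the two composite inequalities $\concelem \subseteq \gammafc(\alphafc(\concelem))$ and $\alphafc(\gammafc(\abselemfc)) \sqsubseteqfc \abselemfc$ separately, whereas you verify the adjunction equivalence directly; these are interchangeable. For the upgrade to an insertion, the paper goes through injectivity of $\gammafc$ (a ``unicity'' lemma: distinct abstract values have distinct concretizations, witnessed by a state whose heap contains a single cycle realizing a viable assignment that models one formula but not the other), whereas you prove $\alphafc \circ \gammafc = \mathrm{id}$ directly by realizing every viable model of $\abselemfc(v)$ with such a witness state; given a Galois connection the two conditions are equivalent, and both rest on the same heap construction and on the quotient $\pformulaeq^1(v)$. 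One shared caveat worth making explicit: in any state where $v$ is non-\nil, the empty cycle at $\statef{\statesym}(v)$ is reachable, so your witness state necessarily makes $\emptyset$ a model of $\alphafc(\statesym)(v)$; membership of that state in $\gammafc(\abselemfc)$ therefore tacitly requires $\emptyset$ to be a model of $\abselemfc(v)$ whenever $\abselemfc(v)$ admits any model at all (cf.\ the paper's remark that non-nullity forces $\abselemfc(v) \geq \pformulaempty$). The paper's unicity argument needs exactly the same care, so this is not a gap specific to your proof, but it is the one point where the construction should be stated with that restriction in mind.
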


An abstract value such that $\abselemfc(v) = \pformula$ represents
states where all cyclic sub-paths of paths starting at
$\statef{\statesym}(v)$, if any, have to p-satisfy $\pformula$.
Similarly to reachability, the non-nullity of $v$ implies that
$\abselemfc(v) \geq \pformulaempty$ since there always exists an empty
path from $\statef{\statesym}(v)$ to $\statef{\statesym}(v)$.

\begin{example}
  Let $\abselemfc$ be an abstract state, and $\abselemfc(v) =
  \pformulaempty \vee
  \onlyfields{\fproposition{\field_1},\fproposition{\field_2}}$ be the
  path-formula whose only models are $\emptyset$ and
  $\{\fproposition{\field_1},\fproposition{\field_2}\}$.  Consider the
  three heaps below.

  \begin{minipage}{3cm}
    \begin{center}
      \begin{tikzpicture}
        \tikzstyle{objnode} = [rectangle,draw,inner sep=1pt]
        
        \node (v) at (0,0) {$v$};
        \node[objnode] (o1) at (0,-0.7) {$o_1:\class$};
        \draw[dotted] (v) -- (o1);
      \end{tikzpicture}
    \end{center}
  \end{minipage}
  \begin{minipage}{45mm}
    \begin{center}
      \begin{tikzpicture}
        \tikzstyle{objnode} = [rectangle,draw,inner sep=1pt]
        
        \node (v) at (0,0) {$v$};
        \node[objnode] (o1) at (0,-0.7) {$o_1:\class$};
        \draw[dotted] (v) -- (o1);
        \node[objnode] (o2) at (0,-1.7) {$o_2:\class$};
        \node[objnode] (o3) at (2,-1.7) {$o_3:\class$};
        \draw[->] (o1) -- node[left] {$\field_3$} (o2);
        \draw[->] (o2) .. controls (1,-1.5) .. node[above]
        {$\field_1$} (o3);
        \draw[->] (o3) .. controls (1,-1.9) .. node[below]
        {$\field_2$} (o2);
      \end{tikzpicture}
    \end{center}
  \end{minipage}
  \begin{minipage}{4cm}
    \begin{center}
      \begin{tikzpicture}
        \tikzstyle{objnode} = [rectangle,draw,inner sep=1pt]
        
        \node (v) at (0,0) {$v$};
        \node[objnode] (o1) at (0,-0.7) {$o_1:\class$};
        \node[objnode] (o2) at (2,-0.7) {$o_2:\class$};
        \node[objnode] (o3) at (2,-1.7) {$o_3:\class$};
        \draw[dotted] (v) -- (o1);
        \draw[->] (o1) .. controls (1,-0.5) .. node[above]
        {$\field_1$} (o2);
        \draw[->] (o2) .. controls (1,-0.9) .. node[below]
        {$\field_1$} (o1);
        \draw[->] (o2) -- node[right] {$\field_2$} (o3);
        \draw[->] (o3) .. controls (0.5,-1.5) .. node[below]
        {$\field_1$} (o1);
      \end{tikzpicture}
    \end{center}
  \end{minipage}

  \vspace{3mm}

  \noindent
  The heap depicted on the left is correctly represented by this
  abstract value because the empty cycle from $\statef{\statesym}(v)$
  to $\statef{\statesym}(v)$ p-satisfies $\emptyset$ which is a model
  of $\abselemfc(v)$.  The second heap is also represented because the
  only non-trivial cycle starts from $o_2$ and traverses both
  $\field_1$ and $\field_2$; note that $\abselemfc(v)$ does not need
  to account for $\field_3$ since this field is not traversed by the
  cycle.  On the other hand, the heap on the right is not correctly
  represented because there is a cycle only traversing $\field_1$, and
  $\{\fproposition{\field_1}\}$ is not a model of $\abselemfc(v)$.
\end{example}

\subsection{The reduced product}
\label{sec:theReducedProduct}

The (direct) product of the abstract domains presented in this section
is the set of pairs $\abselemfrc = (\abselemfr,\abselemfc)$, and the
theory of Abstract Interpretation guarantees that it identifies a
Galois connection with $\gamma(\abselemfr,\abselemfc) =
\gammafr(\abselemfr) \cap \gammafc(\abselemfc)$.  In the following,
$\abselemfrc(v_1,v_2)$ will be a shorthand for $\abselemfr(v_1,v_2)$,
where $\abselemfr$ is the reachability part of $\abselemfrc$, and
$\abselemfrc(v)$ will stand for $\abselemfc(v)$, where $\abselemfc$ is
the cyclicity part of $\abselemfrc$.

Usually, the \emph{reduced product} \cite{Cousot79} is more
interesting than the direct product since it happens to generate a
Galois insertion.  It is obtained by ``unifying'' (by means of an
equivalence relation) abstract values with the same concretization
(i.e., representing the same set of concrete states).  Two different
abstract values are mapped to the same set of states when
discrepancies between them do not ``include'' or ``exclude'' any
state.  This happens when $\abselemfr$ contains information which is
not compatible with $\abselemfc$, similarly to the abstract domains
used in the reachability-based analysis $\REACHBASED$ described in
Section \ref{sec:introduction} \cite[Lemma 4.7]{GenaimZ13}.

\begin{example}[(reachability vs.~cyclicity)]
  \label{ex:reducedProduct}
  Let $\abselemfrc' = (\abselemfr',\abselemfc)$ and $\abselemfrc'' =
  (\abselemfr'',\abselemfc)$ only differ in the self-reachability part
  about \xx; i.e., the cyclicity part is the same, and
  $\abselemfc(\xx) = \abselemfr'(\xx,\xx) = \pformula$, but
  $\abselemfr''(\xx,\xx) = \pformulag > \pformula$.  In this case,
  there is, in general, a set $X$ of states which are represented by
  $\abselemfr''$ but not by $\abselemfr'$.  In such states, this
  happens because of paths from \xx to \xx which p-satisfy
  $\pformulag$ but \emph{not} $\pformula$.  However, states in $X$ are
  incompatible with $\abselemfc$ since a path from \xx to \xx is a
  cycle, but $\abselemfc(\xx) = \pformula$ would not be p-satisfied by
  such a path.  Therefore, the difference $X$ between
  $\gammafr(\abselemfr'(\xx,\xx))$ and
  $\gammafr(\abselemfr''(\xx,\xx))$ is a set of concrete states which
  are \emph{not} represented by $\gammafc(\abselemfc(\xx))$, so that
  $\gammafr(\abselemfr') \cap \gammafc(\abselemfc) =
  \gammafr(\abselemfr'') \cap \gammafc(\abselemfc)$.  As a conclusion,
  both $\abselemfrc'$ and $\abselemfrc''$ actually represent the same
  states.
\end{example}

\begin{definition}[(normal form)]
  \label{def:reducedProductNormalForm}
  An abstract value $(\abselemfr,\abselemfc)$ is \emph{in normal form}
  if, for every $v \in \typenv$, $\abselemfc(v) \geq \abselemfr(v,v)$.
  The \emph{normalization} $\normalize{\abselemfr,\abselemfc}$ is
  defined as
  \[ \begin{array}{rcl@{\qquad\qquad}rcl@{\qquad}l}
    \left(\normalize{\abselemfr,\abselemfc}\right)(v) & = &
    \abselemfc(v) \vee \abselemfr(v,v) & \left(\normalize{\abselemfr,\abselemfc}\right)(v,w) & = &
    \abselemfr(v,w)
  \end{array} \]
\end{definition}

The \emph{reduced product} of the reachability and cyclicity domains
is the set of normal-form pairs $(\abselemfr,\abselemfc)$, as proved
in the following lemma.

\begin{lemma}[(reduced product)]
  \label{lemma:reducedProduct}
  The lattice based on

  \centerline{
    $\{~(\abselemfr,\abselemfc)~|~\abselemfr{\in}\reachsetf{\typenv},
    \abselemfc{\in}\cycsetf{\typenv}, (\abselemfr,\abselemfc)~\mbox{is
      in normal form}~\}$}

  \noindent
  with $\gammafrc(\abselemfr,\abselemfc) = \gammafr(\abselemfr) \cap
  \gammafc(\abselemfc)$ is the reduced product between $\domfr$ and
  $\domfc$.
\end{lemma}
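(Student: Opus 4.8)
The plan is to show that the set of normal-form pairs, with the stated concretization, forms the reduced product by verifying the standard characterization: (i) every direct-product value $(\abselemfr,\abselemfc)$ has the same concretization as its normalization $\normalize{\abselemfr,\abselemfc}$, which is in normal form; and (ii) two distinct normal-form values always have distinct concretizations (so no further collapsing is possible), which together with the fact that $\gammafrc$ is a valid concretization — i.e.~admits an adjoint abstraction — yields a Galois insertion. Since the reduced product is by definition the quotient of the direct product under ``same concretization'', establishing (i) and (ii) identifies the normal-form pairs as exactly one representative per equivalence class.

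First I would prove that $\gammafrc(\abselemfr,\abselemfc) = \gammafrc(\normalize{\abselemfr,\abselemfc})$. The $\supseteq$ direction is immediate from monotonicity of $\gammafc$ in each argument, since $\left(\normalize{\abselemfr,\abselemfc}\right)(v) = \abselemfc(v) \vee \abselemfr(v,v) \geq \abselemfc(v)$ while the reachability component is unchanged. For $\subseteq$, take $\statesym \in \gammafr(\abselemfr) \cap \gammafc(\abselemfc)$; I must show $\statesym \in \gammafc(\normalize{\abselemfr,\abselemfc})$, i.e.~that for every $v$ there is some $\pformula \leq \abselemfc(v) \vee \abselemfr(v,v)$ with $v$ being $\pformula$-cyclic in $\statesym$. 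The witness from $\gammafc(\abselemfc)$ already gives $\pformula \leq \abselemfc(v) \leq \abselemfc(v) \vee \abselemfr(v,v)$, so this direction is actually trivial — the real content is that $\normalize{}$ does not shrink the concretization, which is exactly Example \ref{ex:reducedProduct} generalized: any state separating $\abselemfr(v,v)$ from the larger $\abselemfr(v,v)\vee\abselemfc(v)$ on the self-reachability of $v$ would contain a cyclic path from $v$ to $v$ p-satisfying something outside $\abselemfr(v,v)$, hence would already be excluded by $\gammafc(\abselemfc)$ unless that something is also $\leq \abselemfc(v)$ — so no state is lost. This argument, lifted from a single variable to all of $\typenv$ and made rigorous using the characterization of cycles from $v$ to $v$ as exactly the paths from $\statef{\statesym}(v)$ to $\statef{\statesym}(v)$, is the technical core of part (i).

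Next, for injectivity of $\gammafrc$ on normal-form pairs (part (ii)): suppose $(\abselemfr,\abselemfc)$ and $(\abselemfr',\abselemfc')$ are both in normal form with the same concretization. Because $\domfr$ and $\domfc$ are each in a Galois \emph{insertion} with $\condom$ (Lemmas \ref{lemma:gi-reach} and \ref{lemma:gi-cyc}), I can recover $\abselemfr = \alphafr(\gammafr(\abselemfr))$ from $\gammafr(\abselemfr)$ and likewise for the cyclicity part — the difficulty is that $\gammafrc$ intersects the two concretizations, so I cannot directly read off each component. The argument is: $\gammafrc(\abselemfr,\abselemfc)$ determines $\abselemfr(v,w)$ for $v \neq w$ by reading reachability facts that the cyclicity component cannot affect (one exhibits, for each $\pformula \not\leq \abselemfr(v,w)$, a state in the concretization with a $v$–$w$ path violating $\pformula$, built freely on the non-self part of the heap so as not to disturb any cyclicity constraint), it determines $\abselemfc(v) \vee \abselemfr(v,v)$ (the normalized self-component) by reading cyclicity facts, and these together with normal form — $\abselemfc(v) = \abselemfc(v) \vee \abselemfr(v,v)$ and $\abselemfr(v,v) \leq \abselemfc(v)$, so in fact $\abselemfr(v,v)$ is also pinned down once the off-diagonal behavior and the paths-to-$v$-through-$v$ behavior are fixed — recover both components uniquely.

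The main obstacle I expect is part (ii), specifically constructing, for each discriminating path-formula, an actual concrete state in $\states{\typenv}$ that realizes the required (non-)reachability or (non-)cyclicity \emph{without} inadvertently violating one of the other (polynomially many) constraints imposed by the abstract value on the other variable pairs; this is where viability of truth assignments (Lemma \ref{lemma:decideViability}) and the equivalence relations $\equiv^{v,w}$, $\equiv^{v}$ do the work, since the abstract value only constrains behavior up to these, and one needs the witness paths to be simultaneously embeddable into one well-typed heap satisfying conditions 1–3 of $\states{\typenv}$. I would isolate this as a separate construction lemma (``every normal-form value is the abstraction of its own concretization'') and then conclude that $\gammafrc$ together with $\alphafrc := \normalize{\alphafr(\cdot),\alphafc(\cdot)}$ forms a Galois insertion whose image is precisely the normal-form pairs, which is the statement of the reduced product.
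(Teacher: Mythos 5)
Your part (ii) is, in substance, the paper's entire proof: the paper establishes only the injectivity of $\gammafrc$ on normal-form pairs, and does so by exactly the witness construction you describe --- for a discriminating pair $(v,w)$ one builds a state containing a single path realizing a truth assignment that separates the two reachability formul\ae~(with no other reachability or cyclicity in the heap, which is where viability and the relations $\equiv^{v,w}$, $\equiv^{v}$ do their work, via the auxiliary Lemma~\ref{lemma:gammaMonotonicityUnicity}); the normal-form hypothesis $\abselemfc(v) \geq \abselemfr(v,v)$ is invoked precisely to make the diagonal case legal, and for a discriminating cyclicity formula the witness cycle is placed at a location reachable from $\statef{\statesym}(v)$ but distinct from it, so that neither reachability component is disturbed. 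That half of your plan is sound and coincides with the paper's route.

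The gap is in part (i). The identity $\gammafrc(\abselemfr,\abselemfc) = \gammafrc(\normalize{\abselemfr,\abselemfc})$ is false for the normalization of Definition~\ref{def:reducedProductNormalForm}, which enlarges the \emph{cyclicity} component to $\abselemfc(v) \vee \abselemfr(v,v)$. Take $\abselemfr(v,v) = \pformulaempty \vee \onlyfields{\fproposition{\gfield}}$, $\abselemfc(v) = \pformulaempty$, everything else $\true$, and a state in which $v$ points to $o_1$, $o_1.\ffield$ points to $o_2$, and $o_2.\gfield$ points back to $o_2$: the $\gfield$-self-loop is a cycle reachable from $\statef{\statesym}(v)$ but is not a path from $\statef{\statesym}(v)$ to itself, so the state is excluded by $\gammafc(\abselemfc)$ yet admitted by the normalized value --- the concretization strictly grows. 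Your justification conflates the two diagonal entries: the redundancy of Example~\ref{ex:reducedProduct} runs in the opposite direction (every path from $v$ to $v$ is a cycle reachable from $v$, so self-reachability in excess of $\abselemfc(v)$ is vacuous; the converse fails for cycles not passing through $\statef{\statesym}(v)$). The repair is to normalize by shrinking the reachability diagonal instead, replacing $\abselemfr(v,v)$ by $\abselemfr(v,v) \wedge \abselemfc(v)$: this lands in normal form and preserves $\gammafrc$, because any path from $\statef{\statesym}(v)$ to itself already p-satisfies the witness formula supplied by $\gammafc(\abselemfc)$. With that correction your two-part strategy (every equivalence class contains a normal-form representative, and $\gammafrc$ is injective on normal forms) goes through, and is in fact more complete than the paper's proof, which omits part (i) altogether.
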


In the following, operators on abstract values will be extended to the
reduced product.  For example, since their domains are disjoint,
$\abselemfc \sqcupf \abselemfr$ will be the function $f$ such that (1)
$f(v) = \abselemfc(v)$; and (2) $f(v,w) = \abselemfr(v,w)$.  Moreover,
reachability and cyclicity abstract values can be mixed: e.g.,
$\abselemfrc \sqcupf \abselemfr$ will be the function $g$ such that
(1) $g(v) = \abselemfrc(v)$ (i.e., the cyclicity component); and (2)
$g(v,w) = \abselemfrc(v,w) \vee \abselemfr(v,w)$.

\subsection{Comparison with other approaches to the problem}
\label{sec:domainComparison}

This section refers to reachability; its extension to cyclicity is
straightforward.  The domain $\domfr$ presented in Section
\ref{sec:theFieldReachabilityDomain} is very expressive since it can
predicate a number of interesting facts about paths.  This section
compares $\domfrc$ with a number of abstract domains which are meant
to tackle the same problem.

\subsubsection{An abstract domain without field information}
\label{sec:withoutFields}

Such an abstract domain was inspired by a static analysis for C
programs \cite{DBLP:conf/popl/GhiyaH96}, and formalized as an abstract
domain by \cite{GenaimZ13}.  It is structurally similar to $\domfr$,
but field information is not considered.

\begin{definition}[(without fields \cite{GenaimZ13})]
  \label{def:abs-dom-withoutFields}
  This abstract domain is the complete lattice
  $\domr=\tuple{\wp(X^\rightsquigarrow),\subseteq,
    \emptyset,X,\cap,\cup}$, where
  \[
  X^\rightsquigarrow = \left\{\REACHES{v}{w} ~\left| \begin{array}{l} v, w \in
        \dom(\typenv)\mbox{, and there exist
          $\class_1{\subclasseq}\typenv(v)$ and
          $\class_2{\subclasseq}\typenv(w)$} \\ \mbox{such that
          $\class_2$ is reachable from $\class_1$} \end{array}
    \right. \right\}
  \]
  where the notion of reachability between classes is taken from
  \cite{spoto:pair_sharing}: ``$\class_2$ is reachable from
  $\class_1$'' means that it is possible to have a heap where an
  object of class $\class_1$ reaches an object of class $\class_2$.
\end{definition}

An abstract value $\abselemr$ is a set of statements
$\REACHES{\_}{\_}$: if $\REACHES{v}{w} \notin \abselemr$, then the
concretization of $\abselemr$ will not include any state where $v$
reaches $w$.  $\domr$ is an abstraction of $\domfr$.

\begin{lemma}
  \label{lemma:gi-withoutFields}
  The abstract domain $\domr$ is an abstraction of $\domfr$.
\end{lemma}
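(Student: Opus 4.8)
The plan is to exhibit a Galois insertion (or at least a Galois connection, since "abstraction" in the paper's sense means there is a Galois connection with the abstract side sitting above) between $\domfr$ and $\domr$, composed appropriately with the concrete domain. Concretely, I would define an abstraction map $\alpha : \reachsetf{\typenv} \to \wp(X^\rightsquigarrow)$ and a concretization map $\gamma : \wp(X^\rightsquigarrow) \to \reachsetf{\typenv}$ and show they form a Galois connection, with $\alpha$ surjective. The natural choice is: given $\abselemfr \in \reachsetf{\typenv}$, let $\alpha(\abselemfr) = \{\, \REACHES{v}{w} \mid \abselemfr(v,w) \not\equiv \false \,\}$, i.e., keep exactly those ordered pairs for which some path (with any field behaviour) is still permitted. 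In the other direction, given $\abselemr \subseteq X^\rightsquigarrow$, let $\gamma(\abselemr)(v,w) = \true$ if $\REACHES{v}{w} \in \abselemr$ and $\gamma(\abselemr)(v,w) = \false$ otherwise. Intuitively, $\domr$ is the quotient of $\domfr$ that remembers only whether a path-formula is $\false$ or not.

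First I would check that $\alpha$ and $\gamma$ are monotone: $\alpha$ is monotone because $\abselemfr \sqsubseteqfr \abselemfr'$ means $\abselemfr(v,w) \Rightarrow \abselemfr'(v,w)$ pointwise, so if the left side is non-$\false$ (has a model) then so does the right side; $\gamma$ is monotone because $\abselemr \subseteq \abselemr'$ directly gives the pointwise $\leq$ between the $\{\false,\true\}$-valued functions. Then I would verify the Galois connection inequalities: $\abselemfr \sqsubseteqfr \gamma(\alpha(\abselemfr))$, which holds because whenever $\abselemfr(v,w) \not\equiv \false$ we have $\gamma(\alpha(\abselemfr))(v,w) = \true \geq \abselemfr(v,w)$, and whenever $\abselemfr(v,w) \equiv \false$ both sides are $\false$; and $\alpha(\gamma(\abselemr)) \subseteq \abselemr$ (in fact equality), which holds since $\gamma(\abselemr)(v,w)$ is non-$\false$ exactly when $\REACHES{v}{w} \in \abselemr$. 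Equality here also shows $\alpha$ is surjective, giving a Galois insertion, which is the appropriate statement since $\domr$ as defined has no superfluous elements once we observe every $\abselemr \subseteq X^\rightsquigarrow$ arises as $\alpha$ of some $\abselemfr$ (take $\gamma(\abselemr)$ itself).

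One subtlety I would need to address is the rôle of $X^\rightsquigarrow$: its definition restricts statements $\REACHES{v}{w}$ to those pairs where $w$'s declared class (or a subclass) is reachable in the class hierarchy from $v$'s. I must check that $\alpha(\abselemfr)$ never produces a statement outside $X^\rightsquigarrow$. This follows from the type-correctness built into $\states{\typenv}$: if $\abselemfr(v,w) \not\equiv \false$ under the equivalence $\equiv^{v,w}$, then (by the way $\pformulaeq^2$ and $\equiv^{v,w}$ are set up, distinguishing formulas only via paths in genuine states) there is a genuine well-typed state with a path from $\statef{\statesym}(v)$ to $\statef{\statesym}(w)$, which forces the corresponding class-reachability to hold. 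I would make this precise by appealing to the well-typedness conditions in the definition of $\states{\typenv}$ together with the notion of reachability between classes imported from \cite{spoto:pair_sharing}. Conversely, if $\abselemfr(v,w) \equiv \false$ but the pair is in $X^\rightsquigarrow$, $\alpha$ simply omits it, which is consistent.

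The main obstacle I anticipate is not the lattice bookkeeping but getting the equivalence relations right: $\abselemfr(v,w)$ lives in $\pformulae_{\equiv^{v,w}}$, not in plain propositional logic, so "$\not\equiv \false$" must be read as "not equivalent to $\false$ under $\equiv^{v,w}$", i.e., "there really is a discriminating path", and I must make sure this is exactly the condition that makes the $\domr$-side statement sound. I would handle this by recalling (Definition~\ref{def:equivalenceRelationOnPathFormulae} and its extension for pairs) that $\pformula \equiv^{v,w} \false$ iff no state has a path from $v$ to $w$ p-satisfying $\pformula$ — and since $\false$ is p-satisfied by no path at all, $\pformula \equiv^{v,w} \false$ iff $\pformula$ is itself a contradiction \emph{or} every state that would witness a model of $\pformula$ has no $v$-to-$w$ path. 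Either way, $\abselemfr(v,w) \equiv^{v,w} \false$ is precisely the semantic condition "no reachability from $v$ to $w$ in any represented state", which is what $\REACHES{v}{w} \notin \abselemr$ encodes. Once this correspondence is pinned down, soundness of the composed abstraction $\domfr \to \domr \to \condom$ against $\domfr \to \condom$ is immediate by composition of Galois insertions, and I would close by noting $\gamma_r^\typenv \circ \gamma = \gammafr$ restricted appropriately, or more cleanly, that $\gammafr(\gamma(\abselemr)) = \gammar(\abselemr)$, witnessing that $\domr$ computes a sound over-approximation of everything $\domfr$ does.
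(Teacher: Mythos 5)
Your overall strategy (exhibit an adjoint pair between $\domfr$ and $\domr$ and check the two Galois inequalities plus surjectivity) is the same as the paper's, and your handling of the $X^\rightsquigarrow$ membership question matches the remark in the paper's proof. However, there is a genuine semantic error in both of your maps: you have ignored the fact that $\domfr$ counts \emph{empty} paths (and hence represents aliasing), whereas $\domr$ only speaks about paths of length at least $1$. The paper flags exactly this discrepancy when it introduces $\domfr$ (``this explains the different definition of the bottom element in $\domfr$ and $\domr$''), and its proof routes everything through the formula $\pformula_\vee = \bigvee_{\field \in \fields} \fproposition{\field}$, whose models are precisely the non-empty truth assignments. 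Concretely: your $\gamma$ sends $\REACHES{v}{w} \notin \abselemr$ to $\false$, which excludes \emph{all} paths from $v$ to $w$, including the empty one; but a state in which $v$ and $w$ alias (or in which $v=w$ is simply non-null) is perfectly consistent with $\REACHES{v}{w} \notin \abselemr$ in $\domr$'s semantics. So your claimed identity $\gammafr(\gamma(\abselemr)) = \gammar(\abselemr)$ fails, and the map you build does not witness that $\domr$, with its intended meaning, abstracts $\domfr$. The paper instead sets $\gamma(\abselemr)(v,w) = \lnot\pformula_\vee$ when the statement is absent, which still admits the empty model.

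The dual problem afflicts your $\alpha$: by testing ``$\abselemfr(v,w) \not\equiv \false$'' you would emit $\REACHES{v}{w}$ even when $\abselemfr(v,w) \equiv \pformulaempty$, i.e., when only aliasing is possible and no path of length $\geq 1$ exists. Besides losing precision, this breaks the adjunction against the correct concretization: since $\lnot\pformula_\vee \not\equiv \false$ (the empty assignment is a viable model for a non-null variable), your $\alpha$ applied to the correct $\gamma(\abselemr)$ would re-insert every absent statement, violating $\alpha(\gamma(\abselemr)) \subseteq \abselemr$. The fix is the paper's test $\abselemfr(v,w) \wedge \pformula_\vee \neq \false$, i.e., keep $\REACHES{v}{w}$ exactly when the path-formula has some \emph{non-empty} model. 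With these two corrections the rest of your argument (monotonicity, the two inequalities, equality in the second one giving the insertion) goes through as you describe.
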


As already mentioned, $\domfr$ is also able to represent
\emph{aliasing} \cite{Hind01}.  Indeed, it is also a refinement of the
standard abstract domain for aliasing analysis.

\begin{lemma}
  \label{lemma:gi-aliasing}
  The abstract domain $\domfr$ is a refinement of the aliasing domain.
\end{lemma}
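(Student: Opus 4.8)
The plan is to set up a Galois connection between $\domfr$ and the standard may-aliasing domain and to check that it composes with $\gammafr$ to give back the aliasing concretization, so that the aliasing domain is an abstraction of $\domfr$ --- equivalently, $\domfr$ refines it. The argument parallels the one for $\domr$ in Lemma \ref{lemma:gi-withoutFields}. First I would fix the target domain $\domalias$: an abstract value $\abselema$ is a set of pairs $\ALIAS{v}{w}$ with $v,w\in\typenv$ whose declared types admit a common subclass (the analogue of the ``reachable classes'' condition of Definition \ref{def:abs-dom-withoutFields}), ordered by $\subseteq$, with concretization $\gamma_a(\abselema) = \{\statesym\in\states{\typenv} \mid \forall v,w.~\statef{\statesym}(v){=}\statef{\statesym}(w){\in}\locations \Rightarrow \ALIAS{v}{w}\in\abselema\}$. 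The single fact driving the proof is this: in any state $\statesym$ there is an empty path from $\statef{\statesym}(v)$ to $\statef{\statesym}(w)$ iff $v$ and $w$ alias in $\statesym$, and the empty path p-satisfies exactly $\pformulaempty$; hence $\emptyset\in\fmodels{\alphafr(\statesym)(v,w)}$ iff $v$ and $w$ alias in $\statesym$. I would also record the purely logical counterpart: $\pformulaempty\leq\pformula$ iff $\emptyset\in\fmodels{\pformula}$, and $\pformula\leq\lnot\pformulaempty$ iff $\emptyset\notin\fmodels{\pformula}$, since $\emptyset$ is the unique model of $\pformulaempty$.

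Next I would define the two maps. The abstraction $\alpha_\flat:\domfr\to\domalias$ sends $\abselemfr$ to $\{\ALIAS{v}{w} \mid \emptyset\in\fmodels{\abselemfr(v,w)}\}$; the concretization $\gamma_\flat:\domalias\to\domfr$ sends $\abselema$ to the function mapping $(v,w)$ to $\true$ if $\ALIAS{v}{w}\in\abselema$ and to $\lnot\pformulaempty$ otherwise. Monotonicity of both is immediate from the pointwise order, and the adjunction $\alpha_\flat(\abselemfr)\subseteq\abselema \iff \abselemfr\sqsubseteqfr\gamma_\flat(\abselema)$ follows by a case split on whether $\ALIAS{v}{w}\in\abselema$, using in the non-trivial case that $\abselemfr(v,w)\leq\lnot\pformulaempty$ means precisely $\emptyset\notin\fmodels{\abselemfr(v,w)}$. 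Then I would verify $\gammafr\circ\gamma_\flat=\gamma_a$: for the inclusion $\subseteq$, given $\statesym$ in $\gamma_a(\abselema)$ and any $v,w$, the path-formula $\pformula=\alphafr(\statesym)(v,w)$ is p-satisfied by all paths from $v$ to $w$ (so $v$ $\pformula$-reaches $w$ in $\statesym$), and if $\ALIAS{v}{w}\notin\abselema$ then $v,w$ do not alias in $\statesym$, so $\emptyset\notin\fmodels{\pformula}$, i.e.\ $\pformula\leq\lnot\pformulaempty=\gamma_\flat(\abselema)(v,w)$; for $\supseteq$, if $v,w$ alias in $\statesym\in\gammafr(\gamma_\flat(\abselema))$ then the witnessing formula $\pformula\leq\gamma_\flat(\abselema)(v,w)$ is p-satisfied by the empty path, so $\emptyset\in\fmodels{\gamma_\flat(\abselema)(v,w)}$, which forces $\ALIAS{v}{w}\in\abselema$.

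The main obstacle will be bookkeeping rather than a conceptual difficulty. One has to check that $\alpha_\flat$ and $\gamma_\flat$ are well defined on the quotiented domains: ``$\emptyset\in\fmodels{\pformula}$'' and ``$\pformula\leq\lnot\pformulaempty$'' depend only on the $\equiv^{v,w}$-class of $\pformula$, because whenever $v,w$ can alias at all the empty path from $\statef{\statesym}(v)$ to $\statef{\statesym}(w)$ is a legitimate discriminating path for $\equiv^{v,w}$; and pairs of type-incompatible variables must be excluded from the universe of $\domalias$ (exactly as for $\domr$) so that $\gamma_\flat$ is injective and the connection is in fact a Galois insertion. Once this is in place, the logical heart of the proof is the one-line chain ``empty path $\Leftrightarrow$ aliasing $\Leftrightarrow$ $\emptyset$ is a model'', already isolated in the first step.
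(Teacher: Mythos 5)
Your proposal follows essentially the same route as the paper's proof: a Galois insertion between $\domfr$ and the aliasing domain driven by the equivalence ``empty path $\Leftrightarrow$ aliasing $\Leftrightarrow$ $\emptyset\in\fmodels{\abselemfr(v,w)}$'', with the abstraction collecting exactly the pairs whose path-formula admits the empty model. The only divergence is that the paper's concretization assigns $\false$ (rather than your $\lnot\pformulaempty$) to non-aliasing pairs; your choice is the one that actually makes $\gamma(\alpha(\abselemfr))\sqsupseteqfr\abselemfr$ go through when $\abselemfr(v,w)$ has only non-empty models, and it mirrors the paper's own use of $\lnot\pformula_\vee$ in the proof of Lemma~\ref{lemma:gi-withoutFields}.
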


A sound abstract semantics based on $\domr$ has been proposed by
\cite{GenaimZ13}; very similar analyses can be found in the works by
\cite{DBLP:conf/popl/GhiyaH96}, and by \cite{NikolicS14}.  In terms
of precision, $\domfr$ is more precise than $\domr$ since the field
information can rule out states where paths do not p-satisfy a given
formula.  As discussed in the introduction, this is more evident when
dealing with cyclicity, since such an extra information about cycles
can lead to prove the termination of algorithms which traverse cyclic
data structures.  In terms of efficiency, it is clear that an abstract
semantics based on $\domfr$ and $\domfc$ instead of $\domr$ and
$\domc$ is more expensive since (1) operators on path-formul\ae~are
more complex (e.g., $\odot$ in Section \ref{sec:abstractSemantics});
and (2) the convergence of the global fixpoint (Section
\ref{sec:fixpoint}) can be slower; in fact, for every pair $(v,w)$,
$\domfr$ allows ascending chains of path-formul\ae~whose length is
exponential on the number of fields, while $\domr$ only allows 2-long
chains (``does not reach'' $<$ ``reaches'').

\subsubsection{An abstract domain based solely on class reachability}
\label{sec:onlyClasses}

Another abstract domain which can be studied is the one where just the
class hierarchy is considered: a variable $v$ is regarded as
potentially reaching $w$ whenever the class of $w$ is reachable from
the class of $v$ \cite{spoto:pair_sharing}.  Such a domain will be
denoted by $\domkr$; needless to say, it is an abstraction of both
$\domr$ and (by transitivity) $\domfr$.

\begin{definition}[(class-based)]
  \label{def:abs-dom-onlyClasses}
  The domain $\domkr$ is defined as the lattice
  \[ \domkr = \tuple{ \wp(X^{\classes \times\classes})_\equiv,
    \subseteq, \emptyset, X^{\classes \times\classes}, \cap, \cup } \]
  where (1) $X^{\classes \times\classes}$ is the set of all pairs
  $(\class_1, \class_2)$ s.t.~$\class_2$ is reachable from $\class_1$;
  and (2) the equivalence relation is such that $S_1 \equiv S_2$ are
  equivalent if they have the same downward closure
  w.r.t.~$\subclasseq$.  Formally: let $S^\subclasseq = \{
  (\class'_1,\class'_2)~|~\exists (\class_1,\class_2) \in S, \class'_1
  \subclasseq \class_1, \class'_2 \subclasseq \class_2 \}$, i.e.,
  pairs obtained by adding all subclasses of classes belonging to a
  pair.  Then $S_1 \equiv S_2$ iff $S_1^\subclasseq =
  S_2^\subclasseq$.
\end{definition}

An abstract value $\abselemkr$ contains pairs of classes, and the
intended meaning is that it represents all the states where a path
goes from a $\class_1$ object to a $\class_2$ object only if
$(\class_1,\class_2) \in \abselemkr$.  As discussed below, this is a
very rough approximation of the concrete semantics.  Due to how the
equivalence relation is defined, abstract values can be considered as
being closed on subclasses, like $S^\subclasseq$.

\begin{lemma}
  \label{lemma:gc-onlyClasses}
  $\domkr$ is an abstraction of $\domr$.
\end{lemma}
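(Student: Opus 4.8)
The plan is to exhibit a Galois connection between $\domr$ (Definition \ref{def:abs-dom-withoutFields}) and $\domkr$ (Definition \ref{def:abs-dom-onlyClasses}), which is what "abstraction" means throughout this section. So I need an abstraction map $\alpha: \domr \to \domkr$ and a concretization $\gamma: \domkr \to \domr$ that form an adjunction, with $\domkr$ sitting above $\domr$. Since the reachability abstract values in $\domr$ talk about pairs of \emph{variables} whereas those in $\domkr$ talk about pairs of \emph{classes}, the connecting glue is the declared-type environment $\typenv$: a variable $v$ has declared type $\typenv(v)$, and the only thing $\domkr$ can see about $v$ is that type (and its subclasses, by the closure built into the equivalence relation).

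\textbf{Construction of the maps.} First I would define, for $\abselemr \in \domr$ (a set of statements $\REACHES{v}{w}$), the abstraction $\alpha(\abselemr) = \{ (\class_1, \class_2) \mid \exists\, \REACHES{v}{w} \in \abselemr,\ \class_1 \subclasseq \typenv(v),\ \class_2 \subclasseq \typenv(w)\}$, taken up to the equivalence $\equiv$ of Definition \ref{def:abs-dom-onlyClasses} (it is already $\subclasseq$-downward closed by construction, so it is a canonical representative). In the other direction, for $\abselemkr \in \domkr$ I would define $\gamma(\abselemkr) = \{ \REACHES{v}{w} \mid (\typenv(v), \typenv(w)) \in \abselemkr^{\subclasseq} \}$ — i.e.\ keep exactly those variable-level reachability statements whose declared types are permitted by $\abselemkr$. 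One should check that $\gamma(\abselemkr)$ really lands in $X^\rightsquigarrow$: every $\REACHES{v}{w}$ it contains already satisfies the side condition of Definition \ref{def:abs-dom-withoutFields}, because $(\typenv(v),\typenv(w)) \in \abselemkr^{\subclasseq}$ means some $(\class_1,\class_2)$ with $\class_1 \subclasseq \typenv(v)$, $\class_2 \subclasseq \typenv(w)$ has $\class_2$ reachable from $\class_1$, hence also reachable with the roles of $v,w$'s types, so it is a legal statement.

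\textbf{The adjunction.} Then I would verify the Galois-connection inequality $\alpha(\abselemr) \sqsubseteq \abselemkr \iff \abselemr \subseteq \gamma(\abselemkr)$, where $\sqsubseteq$ on $\domkr$ is $\subseteq$ on downward closures. For the forward direction: if $\alpha(\abselemr)^{\subclasseq} \subseteq \abselemkr^{\subclasseq}$, take any $\REACHES{v}{w} \in \abselemr$; then $(\typenv(v), \typenv(w)) \in \alpha(\abselemr)^{\subclasseq}$ (indeed $(\typenv(v),\typenv(w)) \in \alpha(\abselemr)$ directly), hence it is in $\abselemkr^{\subclasseq}$, so $\REACHES{v}{w} \in \gamma(\abselemkr)$. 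For the backward direction: if $\abselemr \subseteq \gamma(\abselemkr)$, take $(\class_1,\class_2) \in \alpha(\abselemr)$ coming from some $\REACHES{v}{w} \in \abselemr$ with $\class_1 \subclasseq \typenv(v)$, $\class_2 \subclasseq \typenv(w)$; since $\REACHES{v}{w} \in \gamma(\abselemkr)$ we get $(\typenv(v),\typenv(w)) \in \abselemkr^{\subclasseq}$, and then $(\class_1,\class_2) \in \abselemkr^{\subclasseq}$ by transitivity of $\subclasseq$ and the downward-closure definition — so $\alpha(\abselemr)^{\subclasseq} \subseteq \abselemkr^{\subclasseq}$. Monotonicity of both maps and the fact that $\alpha$ preserves $\emptyset$ and $\cup$ are routine and can be stated in one line.

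\textbf{Main obstacle.} The only genuinely delicate point is keeping the subclass closure bookkeeping straight: $\domkr$ identifies abstract values with the same $\subclasseq$-downward closure, so every (in)equality and well-definedness claim has to be phrased modulo that closure, and one must confirm $\alpha(\abselemr)$ is closure-invariant under replacing $\abselemr$ by an equal element (which is vacuous since $\domr$ has no such equivalence) and that $\gamma$ respects the $\equiv$ on the source, i.e.\ $\abselemkr_1^{\subclasseq} = \abselemkr_2^{\subclasseq} \implies \gamma(\abselemkr_1) = \gamma(\abselemkr_2)$, which is immediate because $\gamma$ only consults $\abselemkr^{\subclasseq}$. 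A secondary, purely "sanity" check is that the $X^{\classes\times\classes}$ used as the top of $\domkr$ is consistent with the reachability-between-classes side condition in $X^\rightsquigarrow$ — but since both use the same notion from \cite{spoto:pair_sharing}, this is automatic. Everything else is mechanical, so I would keep the write-up short and foreground the closure argument.
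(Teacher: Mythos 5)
Your construction is essentially the paper's own: your $\alpha$ is just the $\subclasseq$-downward-closed representative of the paper's $\{(\dectype(v),\dectype(w))\mid \REACHES{v}{w}\in\abselemr\}$ (equivalent under the $\equiv$ of Definition~\ref{def:abs-dom-onlyClasses}), your $\gamma$ is literally the paper's, and the only difference is that you verify the adjunction biconditional directly where the paper checks the two composites $\gamma(\alpha(\abselemr))\supseteq\abselemr$ and $\alpha(\gamma(\abselemkr))=\abselemkr$ (the latter also yielding the insertion property). One minor slip in your well-definedness aside: $(\typenv(v),\typenv(w))\in\abselemkr^{\subclasseq}$ gives some $(\class_1,\class_2)\in\abselemkr$ with $\typenv(v)\subclasseq\class_1$ and $\typenv(w)\subclasseq\class_2$ --- the subclass relations point the opposite way from what you wrote --- so membership of $\REACHES{v}{w}$ in $X^{\rightsquigarrow}$ is not as immediate as you claim; this does not affect the adjunction argument (and the paper omits the check entirely), but the justification as stated is inverted.
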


It is straightforward to see that an abstract semantics based on
$\domkr$ and the corresponding $\domkc$ would be much less precise
than any other approaches to reachability/cyclicity analysis discussed
in this paper: a variable of type $\class$ is condemned to be
potentially cyclic as long as there is some \emph{possibility} to
create a cycle starting from a $\class$ object.  On the other hand,
the analysis is fully computable: it is only necessary to take the
class hierarchy into account.  This also implies that the
reachability/cyclicity information does \emph{not} depend on the
program point: it can be computed once and used whenever needed.  The
interest of $\domkc$ is mainly theoretical because its lack of
precision makes it impractical as the core of a static analyzer.
However, it could be used as a first approximation which rules out
some paths or cycles without the need of running more precise, but
more expensive analyses like the other ones discussed here.

\subsubsection{Abstract domains with restrictions on path-formul\ae}
\label{sec:restrictedPFormulae}

The abstract domains introduced in this section are very similar to
$\domfr$, the only difference being the restriction of
path-formul\ae~to some specific class of propositional formul\ae.
Domains $\dompr$, $\dommr$, and $\domdr$ restrict path-formul\ae~to,
respectively, \emph{positive}, \emph{monotone}, and \emph{definite}
Boolean functions (Section \ref{sec:backgroundInLogic}).

The domain $\dompr$ deals with positive Boolean functions with the
addition of the bottom element $\false$.  The class of
path-formul\ae~that can be represented includes monotone functions
(note that the addition of $\false$ is needed to have this property),
so that $\dompr$ can be easily proved to be a refinement of $\dommr$.

The restriction to monotone Boolean functions makes sense because a
monotone function (with the exception of $\false$, which is $p \wedge
\lnot p$ for some $p$, and $\true$) can be represented by a
\emph{conjunctive normal form} where all literals are positive.  In
terms of paths and fields, a monotone formula can say that paths have
to traverse a field, but not that they do \emph{not} have to.
Monotonicity implies that if a path $\hpath$ p-satisfies a monotone
path-formula $\pformula$, then any path which contains $\hpath$ as a
part of it will also p-satisfy $\pformula$.

Finally, the use of $\domdr$ can be motivated by the fact that, given
a definite formula $\pformula$, and two paths p-satisfying it and
sharing a common part in the heap, their common part is guaranteed to
p-satisfy $\pformula$.  For example, let $\fields$ be $\{ \ffield,
\gfield, \hfield \}$; in this case, the formula
$\fproposition{\gfield}$ is definite.  Consider the heap depicted
below: both $\hpath_1$ and $\hpath_2$ p-satisfy
$\fproposition{\gfield}$, and their intersection $\hpath$ is also
guaranteed to p-satisfy it.
\begin{center}
  \begin{tikzpicture}
    \node(p1) at (1.5,1.2) {$\hpath_1$};
    \node(p) at (3,0.8) {$\hpath$};
    \node(p2) at (4.5,1.3) {$\hpath_2$};
    \node[rectangle,draw] (l1) at (0,0) {$o_1:\class$};
    \node[rectangle,draw] (l2) at (2,0) {$o_2:\class$};
    \node[rectangle,draw] (l3) at (4,0) {$o_3:\class$};
    \node[rectangle,draw] (l4) at (6,0) {$o_4:\class$};
    \draw[->] (l1) -- node[above] {\ffield} (l2);
    \draw[->] (l2) -- node[above] {\gfield} (l3);
    \draw[->] (l3) -- node[above] {\hfield} (l4);
    \draw[dashed] (0,0.3) -- (0,1.0) -- (4,1.0) -- (4,0.3);
    \draw[dashed] (2.1,0.3) -- (2.1,0.6) -- (3.9,0.6) -- (3.9,0.3);
    \draw[dashed] (2,0.3) -- (2,1.1) -- (6,1.1) -- (6,0.3);
  \end{tikzpicture}
\end{center}

The rest of this section will formally define $\dommr$ and demonstrate
that it is a strict abstraction of $\domfr$; similar results can be
also proved for $\dompr$ and $\domdr$.

\begin{definition}[(monotone reachability)]
  \label{def:abs-dom-monotone}
  The \emph{monotone field-reachability abstract domain} is the
  complete lattice
  \[ \dommr=\tuple{\reachsetmr,~ \sqsubseteqmr,~\botmr,~\topmr,~
    \sqcapmr,~\sqcupmr} \]
  \begin{itemize}
  \item $\reachsetmr$ is the set of functions from
    $\variables\times\variables$ to monotone path-formul\ae, equipped
    with an equivalence relation similar to $\equiv^{v,w}$;
  \item $\sqsubseteqmr$ is $\leq$ on path-formul\ae, applied
    point-wise;
  \item $\botmr = \lambda(v,w).\false$, and $\topmr =
    \lambda(v,w).\true$;
  \item $\sqcapmr$ is $\wedge$ applied point-wise, and $\sqcupmr$ is
    $\vee$ applied point-wise.
  \end{itemize}
\end{definition}

\begin{lemma}
  \label{lemma:gc-monotonic}
  The following abstraction and concretization functions define a
  Galois connection between $\domfr$ and $\dommr$: the latter strictly
  abstracts the former.
  \[ \begin{array}{rcl} (\alpha(\abselemfr))(v,w) & = &
    \left\{ \begin{array}{ll}
        \false & \mbox{if}~\abselemfr(v,w){\models} \false \\
        \true & \mbox{if}~\true{\models}\abselemfr(v,w) \\
        \bigwedge \{~ \fproposition{\field_1} \vee..\vee
        \fproposition{\field_k}~|~ \abselemfr(v,w){\models}
        \fproposition{\field_1} \vee..\vee\fproposition{\field_k}~\} &
        \mbox{otherwise} \end{array} \right.  \\ \gamma(\abselemmr) &
    = & \abselemmr
  \end{array} \]
\end{lemma}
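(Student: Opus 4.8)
The plan is to verify three things in sequence: (i) that $\alpha$ as defined actually lands in $\reachsetmr$, i.e.\ that its output is always a monotone path-formula; (ii) that the pair $(\alpha,\gamma)$ forms a Galois connection between $\domfr$ and $\dommr$; and (iii) that the connection is strict, i.e.\ $\alpha$ is not injective, so $\dommr$ is a \emph{proper} abstraction. For (i), I would argue pointwise on $(v,w)$. If $\abselemfr(v,w)\models\false$ the value is $\false$, which is in the domain by fiat; if $\true\models\abselemfr(v,w)$ the value is $\true$, which is monotone. In the remaining case, the value is a conjunction of clauses each of which is a disjunction of \emph{positive} literals $\fproposition{\field_i}$; any such CNF-with-positive-literals formula is monotone, because enlarging a truth assignment can only keep satisfied clauses satisfied. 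I would also note the conjunction is well-defined even if the index set is empty (giving $\true$), consistent with the stated conventions on $\bigwedge$.

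For (ii), since $\gamma = \mathrm{id}$ on $\reachsetmr$ (viewing $\reachsetmr \subseteq \reachsetf{\typenv}$ via the inclusion of monotone path-formul\ae~into all path-formul\ae), the Galois condition $\alpha(\abselemfr) \sqsubseteqmr \abselemmr \iff \abselemfr \sqsubseteqfr \gamma(\abselemmr)$ reduces to showing that, for each fixed $(v,w)$, $\alpha(\abselemfr)(v,w)$ is the \emph{least} monotone path-formula implied by $\abselemfr(v,w)$ — equivalently, that $\alpha$ restricted to each coordinate is the lower adjoint of the inclusion of monotone formul\ae~into $\pformulaeq^2(v,w)$ ordered by $\leq$. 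The key computation is that $\bigwedge\{\,C \mid C \text{ a positive clause}, \abselemfr(v,w)\models C\,\}$ is exactly the strongest monotone consequence of $\abselemfr(v,w)$: it is implied by $\abselemfr(v,w)$ (a conjunction of its consequences), it is monotone by step (i), and any monotone $\pformula \geq \abselemfr(v,w)$ has a CNF all of whose clauses are positive and are consequences of $\abselemfr(v,w)$, hence are conjuncts of $\alpha(\abselemfr)(v,w)$, so $\alpha(\abselemfr)(v,w) \leq \pformula$. The two degenerate cases ($\false$, $\true$) must be checked to be consistent with this characterisation — e.g.\ when $\abselemfr(v,w)$ is a contradiction its strongest monotone consequence is $\false$, and when it is a tautology the set of positive-clause consequences is empty so the conjunction is $\true$. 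One subtlety to address explicitly: all of this must be read modulo the equivalence $\equiv^{v,w}$ (viability-based equivalence from Definition \ref{def:equivalenceRelationOnPathFormulae} / its variant), since both domains quotient by it; I would point out that logical implication refines $\equiv^{v,w}$, so the adjunction descends to the quotient, and the "similar equivalence relation" on $\reachsetmr$ is precisely the restriction of $\equiv^{v,w}$ to monotone formul\ae.

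For (iii), strictness, I would exhibit a single counterexample: take $\fields$ to contain at least one field $\ffield$, let $v,w$ be variables, and compare the abstract value $\abselemfr_1$ with $\abselemfr_1(v,w) = \fproposition{\ffield}$ against $\abselemfr_2$ with $\abselemfr_2(v,w) = \true$ but agreeing with $\abselemfr_1$ elsewhere, choosing the fields so that $\{\ffield\}$ and $\{\}$ are both viable truth assignments (so the two are genuinely distinct in $\domfr$). Then $\alpha(\abselemfr_1)(v,w) = \fproposition{\ffield}$ but — wait, that is monotone, so that pair does not collapse. The right counterexample uses a \emph{non-monotone} distinction: take $\abselemfr_1(v,w) = \lnot\fproposition{\ffield}$ and $\abselemfr_2(v,w) = \true$; provided both $\emptyset$ and $\{\ffield\}$ (or a suitable viable superset) are viable paths between suitable classes, these are distinct in $\domfr$, yet $\alpha$ sends both to $\true$ on coordinate $(v,w)$ since $\lnot\fproposition{\ffield}$ has no positive-clause consequences and is not a contradiction. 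Hence $\alpha$ is not injective and the abstraction is strict. The main obstacle I anticipate is not any single step but the bookkeeping around the viability equivalence: one must be careful that "the strongest monotone consequence" is well-defined in $\pformulaeq^2(v,w)$ (a path-formula and its $\equiv^{v,w}$-class can have different sets of positive-clause consequences if one is careless about which clauses are "real"), and that the counterexample for strictness uses genuinely $\equiv^{v,w}$-inequivalent formul\ae, which is why the viability of the relevant assignments must be invoked explicitly via Lemma \ref{lemma:decideViability}'s underlying notion.
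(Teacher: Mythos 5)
Your proposal is correct and follows essentially the same route as the paper's proof: the paper establishes the two composition laws $\gamma(\alpha(\abselemfr))\sqsupseteqfr\abselemfr$ and $\alpha(\gamma(\abselemmr))=\abselemmr$, whereas you prove the equivalent adjunction statement that $\alpha(\abselemfr)(v,w)$ is the strongest monotone consequence of $\abselemfr(v,w)$, but both rest on the same two facts (a conjunction of entailed positive clauses is entailed by the formula, and a monotone formula other than $\true$ and $\false$ has an all-positive CNF). Your explicit well-definedness check and strictness counterexample ($\lnot\fproposition{\ffield}$ versus $\true$, with the required viability caveat) are welcome additions that the paper's own proof leaves to the surrounding text and the example following the lemma.
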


$\dommr$ is strictly more abstract than $\domfr$, as shown by the
following example.

\begin{example}[(monotone reachability)]
  Part (a) of the figure below shows a heap where $v$ can reach $w$ by
  traversing two paths.
  \begin{center}
    \begin{minipage}{4cm}
      \begin{center}
        \begin{tikzpicture}
          \node (v) at (0,0.8) {$v$};
          \node (w) at (2,0.8) {$w$};
          \node[rectangle,draw] (l1) at (0,0) {$o_1:\class$};
          \node[rectangle,draw] (l2) at (2,0) {$o_2:\class$};
          \draw[dotted] (v) -- (l1);
          \draw[dotted] (w) -- (l2);
          \draw[->] (l1) .. controls (1,0.3) .. node[above] {\ffield} (l2);
          \draw[->] (l1) .. controls (1,-0.3) .. node[below] {\gfield} (l2);
        \end{tikzpicture}
      \end{center}
    \end{minipage}
    \begin{minipage}{6cm}
      \begin{center}
        \begin{tikzpicture}
          \node (v) at (0,0.8) {$v$};
          \node (w) at (4,0.8) {$w$};
          \node[rectangle,draw] (l1) at (0,0) {$o_1:\class$};
          \node[rectangle,draw] (l2) at (2,0) {$o_2:\class$};
          \node[rectangle,draw] (l3) at (2,-1) {$o_3:\class$};
          \node[rectangle,draw] (l4) at (4,-1) {$o_4:\class$};
          \node[rectangle,draw] (l5) at (4,0) {$o_5:\class$};
          \draw[dotted] (v) -- (l1);
          \draw[dotted] (w) -- (l5);
          \draw[->] (l1) -- node[above] {\ffield} (l2);
          \draw[->] (l2) -- node[auto] {\hfield} (l3);
          \draw[->] (l3) -- node[above] {\hfield} (l4);
          \draw[->] (l4) -- node[auto] {\gfield} (l5);
        \end{tikzpicture}
      \end{center}
    \end{minipage}

    \vspace{2mm}

    \begin{minipage}{4cm}
      \begin{center}
        (a)
      \end{center}
    \end{minipage}
    \begin{minipage}{6cm}
      \begin{center}
        (b)
      \end{center}
    \end{minipage}
  \end{center}
  The abstract value which best represents such a heap in $\domfr$ is
  $\abselemfr$ such that $\abselemfr(v,w) = (\fproposition{\ffield}
  \vee \fproposition{\gfield}) \wedge (\lnot \fproposition{\ffield}
  \vee \lnot \fproposition{\gfield})$ (\emph{exclusive disjunction}).
  On the other hand, the best abstract value from $\dommr$ would be
  such that $\abselemmr(v,w) = \fproposition{\ffield} \vee
  \fproposition{\gfield}$.  It can be easily seen that $\abselemmr$
  also represents heaps like part (b), where a path
  traverses both $\ffield$ and $\gfield$, whereas $\abselemfr$ does
  not.
\end{example}

As mentioned before, $\dommr$ is an abstraction of $\dompr$, while
$\domdr$ can be compared with neither $\dommr$ nor $\dompr$: for
example, (1) $p \vee q$ is monotone but not definite, whereas $\lnot p
\vee q$ is definite but not monotone; and (2) $p \vee q$ is positive
but, again, not definite, whereas $\lnot p \wedge \lnot q$ is definite
but not positive.

\subsubsection{An domain excluding fields from paths}
\label{sec:negation}

The abstract domain introduced by \cite{ScapinSpotoMScThesis}, which
will be denoted by $\domnr$ in this paper, also considers field
information to improve on existing techniques
\cite{DBLP:conf/popl/GhiyaH96,NikolicS14,GenaimZ13}.  The property
tracked by $\domnr$ is ``there are no paths from $v$ to $w$ which
traverse any field belonging to a set $\mathsf{F}$''.  The following
definition is taken from \cite[Def.~5.1]{ScapinSpotoMScThesis}, and
slightly modified in order to adapt notation and only consider
reachability.

\begin{definition}[(Scapin's)]
  \label{def:abs-dom-negation}
  The complete lattice $\domnr$ is $\tuple{\reachsetnr, \sqsubseteqnr,
    \sqcapnr, \sqcupnr}$, where $\reachsetnr$ is $\wp(\variables
  \times \variables \times \wp(\fields))$, $\sqsubseteqnr$ is
  $\subseteq$, $\sqcapnr$ is $\cap$, and $\sqcupnr$ is $\cup$.
\end{definition}

An abstract value containing a triple $(v,w,\mathsf{B})$\footnote{To
  avoid confusion with path-formul\ae, $\mathsf{B}$ is used here
  instead of the original $\mathsf{F}$ to denote field sets.},
originally expressed as $v \not\rightsquigarrow^{\mathsf{B}} w$,
represents states where $v$ can only reach $w$ without traversing any
$\field \in \mathsf{B}$.

\begin{lemma}
  \label{lemma:gc-negation}
  The following functions define a Galois insertion between $\domfr$
  and $\domnr$: the latter is a strict abstraction of the former.
  \begin{eqnarray*}
    \alpha(\abselemfr) & = & \left\{ v
    \not\rightsquigarrow^{\mathsf{B}} w~|~\forall \field \in
    \mathsf{B}.~\abselemfr(v,w) \models \lnot \fproposition{\field}
    \right\} \\ \gamma(\abselemnr) & = & \lambda
    v,w.~\bigwedge_{\field \in \mathsf{B}}^{} \lnot
    \fproposition{\field}\qquad \mbox{where $\mathsf{B}$ is the
      maximal set s.t.}~v \not\rightsquigarrow^{\mathsf{B}} w \in
    \abselemnr
  \end{eqnarray*}
\end{lemma}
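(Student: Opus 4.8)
The plan is to verify the four standard ingredients of a Galois insertion in turn: that $\alpha$ and $\gamma$ are well-defined and monotone, that they form a Galois connection ($\alpha(\abselemfr) \sqsubseteqnr \abselemnr \iff \abselemfr \sqsubseteqfr \gamma(\abselemnr)$), that $\alpha \circ \gamma$ is the identity on $\domnr$ (which upgrades the connection to an insertion), and finally that the abstraction is strict, i.e.\ $\gamma$ is not surjective onto the ``observable'' part of $\domfr$. For the first point I would note that $\gamma(\abselemnr)$ lands in $\reachsetf{\typenv}$ because each $\bigwedge_{\field \in \mathsf{B}} \lnot \fproposition{\field}$ is a path-formula, and that $\alpha(\abselemfr)$ is a legitimate subset of $\variables \times \variables \times \wp(\fields)$; well-definedness of $\gamma$ also requires observing that, for a fixed pair $(v,w)$, the set of $\mathsf{B}$ with $v \not\rightsquigarrow^{\mathsf{B}} w \in \abselemnr$ is closed under subsets (this is the shape of abstract values in $\domnr$), so ``the maximal such $\mathsf{B}$'' is well-defined — or, if $\domnr$ values are not assumed downward closed, I would instead take the union of all such $\mathsf{B}$, which yields the same conjunction since $\bigwedge_{\field\in\mathsf B_1}\lnot\fproposition\field \sqcapfr \bigwedge_{\field\in\mathsf B_2}\lnot\fproposition\field = \bigwedge_{\field\in\mathsf B_1\cup\mathsf B_2}\lnot\fproposition\field$. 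Monotonicity of both maps is then immediate from the definitions: larger $\abselemfr$ (pointwise $\Rightarrow$) satisfies fewer formulas $\lnot\fproposition\field$, giving a smaller $\alpha$-image; larger $\abselemnr$ gives larger $\mathsf{B}$'s, hence smaller (stronger) conjunctions, i.e.\ a $\sqsubseteqfr$-smaller $\gamma$-image — wait, that is the wrong direction, so I would re-examine: in fact $\domnr$ is ordered by $\subseteq$ meaning ``more non-reachability facts'', which is the \emph{more precise} direction, matching smaller elements of $\domfr$, so $\gamma$ is monotone as a map between the two posets with their stated orders.

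The Galois connection itself is the core computation. Spelling out $\abselemfr \sqsubseteqfr \gamma(\abselemnr)$: for every $(v,w)$ and for $\mathsf{B}$ the maximal set with $v\not\rightsquigarrow^{\mathsf B}w\in\abselemnr$, one needs $\abselemfr(v,w) \Rightarrow \bigwedge_{\field\in\mathsf B}\lnot\fproposition\field$, i.e.\ $\abselemfr(v,w)\models\lnot\fproposition\field$ for every $\field\in\mathsf B$. Spelling out $\alpha(\abselemfr)\subseteq\abselemnr$: every triple $(v,w,\mathsf C)$ with $\abselemfr(v,w)\models\lnot\fproposition\field$ for all $\field\in\mathsf C$ lies in $\abselemnr$. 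Given downward closure of $\domnr$ values, these two conditions are equivalent, and I would write the short two-line argument that each implies the other. The $\alpha\circ\gamma = \mathit{id}$ step is then a direct unfolding: starting from $\abselemnr$, form $\gamma(\abselemnr)$, then $\alpha$ of it recovers exactly the triples $(v,w,\mathsf C)$ with $\mathsf C\subseteq\mathsf B_{v,w}$, which — again by downward closure — is precisely $\abselemnr$. Here I should double-check the edge cases $\abselemfr(v,w)\models\false$ and $\true\models\abselemfr(v,w)$: the former makes $\alpha$ include $(v,w,\fields)$ and all its subsets, consistent with ``$v$ reaches $w$ only without traversing any field'' (i.e.\ only by the empty path or not at all); the latter puts no triple $(v,w,\mathsf C)$ with $\mathsf C\neq\emptyset$ into $\alpha$, and $\gamma$ of the empty-$\mathsf B$ situation is $\bigwedge_{\emptyset}\lnot\fproposition\field = \true$, matching the convention in Section~\ref{sec:backgroundInLogic} that an empty conjunction is $\true$.

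For strictness, I would exhibit a single $\abselemfr$ that is not in the image of $\gamma$, appealing to Definition~\ref{def:equivalenceRelationOnPathFormulae} so that ``not in the image'' is meant up to the observable equivalence $\equiv^{v,w}$, otherwise the claim is trivial but uninteresting. A clean witness, reusing the running class hierarchy of Example~\ref{ex:classHierarchy} or even a two-field program as in Example~\ref{ex:abstractState}: take $\abselemfr(v,w) = \fproposition{\field_1}\vee\fproposition{\field_2}$ (a disjunction that is \emph{not} a conjunction of negative literals and is not equivalent to one, since a path traversing only $\field_1$ p-satisfies it but a negative-literal conjunction forbidding nothing is $\true$ and one forbidding $\field_2$ is falsified). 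Any $\gamma(\abselemnr)(v,w)$ is of the form $\bigwedge_{\field\in\mathsf B}\lnot\fproposition\field$, and no such formula is $\equiv^{v,w}$ to $\fproposition{\field_1}\vee\fproposition{\field_2}$; a discriminating path is one from $v$ to $w$ traversing exactly $\field_1$ (which exists in a suitable heap, hence the equivalence classes genuinely differ). The main obstacle I anticipate is precisely this last step: making the strictness argument airtight requires (i) confirming that the candidate discriminating path is \emph{viable} and actually connects $v$ to $w$ in some concrete state compatible with the class hierarchy, so that $\fproposition{\field_1}\vee\fproposition{\field_2}$ and the negative-literal conjunction really fall in different $\equiv^{v,w}$-classes, and (ii) checking that the chosen $\abselemfr$ is itself a legitimate abstract value (its image is a single $\equiv^{v,w}$-class). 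Both are routine given the decidability results (Lemmas~\ref{lemma:decideViability} and \ref{lemma:decideEquivalence}), but they are the place where the ``strict'' in the statement earns its keep, so I would spell out the heap explicitly rather than wave at it.
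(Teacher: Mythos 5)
Your overall architecture is sound and, in its key computations, coincides with the paper's proof: the paper verifies the two composite conditions directly --- first $\gamma(\alpha(\abselemfr)) \sqsupseteqfr \abselemfr$, which reduces to observing that $(\gamma(\alpha(\abselemfr)))(v,w)$ is exactly the conjunction $\bigwedge\{\lnot\fproposition{\field} \mid \abselemfr(v,w)\models\lnot\fproposition{\field}\}$ and hence is implied by $\abselemfr(v,w)$; and second $\alpha(\gamma(\abselemnr))=\abselemnr$, which uses precisely the downward closure of $\domnr$-values under $\subseteq$ of field sets that you identified (the paper imports it from Scapin's Lemma 4.7 rather than re-deriving it). Your strictness witness is a genuine addition --- the paper's proof stops at the insertion and never argues strictness --- and it works; an even cheaper witness is $\false$ itself, since every formula in the image of $\gamma$ is a conjunction of negative literals, hence admits the empty (aliasing) path as a model, and so no $\gamma(\abselemnr)(v,w)$ can be $\equiv^{v,w}$-equivalent to $\false$.

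The one step that fails as written is your ``core computation'' of the adjunction. You spell it out as $\alpha(\abselemfr)\subseteq\abselemnr \iff \abselemfr\sqsubseteqfr\gamma(\abselemnr)$ and claim the two sides are equivalent given downward closure; they are not. Take $\abselemfr(v,w)=\true$ and let $\abselemnr$ be the downward closure of $\{(v,w,\{\ffield\})\}$ together with all empty-set triples: then $\alpha(\abselemfr)$ contains only empty-set triples, so $\alpha(\abselemfr)\subseteq\abselemnr$ holds, yet $\gamma(\abselemnr)(v,w)=\lnot\fproposition{\ffield}$ and $\true\not\models\lnot\fproposition{\ffield}$, so the right-hand side fails. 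The adjunction that actually holds is $\alpha(\abselemfr)\supseteq\abselemnr \iff \abselemfr\sqsubseteqfr\gamma(\abselemnr)$, i.e.\ the precision order on $\domnr$ must be reverse inclusion (more non-reachability triples means more precise means lower). This is exactly the tension you noticed (``wait, that is the wrong direction''), but you then resolved it the wrong way by declaring both maps monotone for the stated orders. The confusion is partly inherited from Definition~\ref{def:abs-dom-negation}, and the paper's proof is immune to it because it never invokes the adjunction: it only checks the two composites, and $\gamma\circ\alpha\sqsupseteqfr\mathit{id}$ together with $\alpha\circ\gamma=\mathit{id}$ does not depend on which way $\sqsubseteqnr$ points. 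Either adopt the paper's route or flip the inclusion; as it stands your central equivalence is false.
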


The abstract semantics and the complete analysis based on this domain
is described by \cite{ScapinSpotoMScThesis}.  Importantly, it is
\emph{not} able to express the property that every cycle \emph{has} to
traverse certain fields, so that termination of the double-linked-list
or the cyclic-tree example cannot be proved.  However, the convergence
of the global fixpoint is likely to be faster since $\domnr$ only
allows ascending chains of path-formul\ae~whose length is linear on
the number of fields.

\subsubsection{An analysis detecting that all paths have to traverse
  certain fields}
\label{sec:requirement}

The analysis presented by \cite{DBLP:conf/cav/BrockschmidtMOG12} uses
some kind of field-sensitive information in order to prove
termination.  In fact, it is able to detect situations where all
cycles which can occur in a data structure \emph{must} traverse a
certain set of fields, as in the example of Section
\ref{sec:anExample2}.  It is easy to see that such a piece of
information, which is obtained by a component of their work, can be
formalized into an abstract domain which is strictly less refined
than $\dommr$.  Unlike the other domains discussed in this section,
the following definition refers to cyclicity instead of reachability
since cyclicity is represented more explicitly by
\cite{DBLP:conf/cav/BrockschmidtMOG12}.

\begin{definition}
  \label{def:abs-dom-requirement}
  The complete lattice $\domqc$ is $\tuple{\cycsetqc, \sqsubseteqqc,
    \sqcapqc, \sqcupqc}$, where $\cycsetqc$ is the set of partial
  functions from $\variables$ to $\wp(\fields)$.  An abstract value
  $\abselemqc$ represents concrete states where (1) for every $v \in
  \dom(\abselemqc)$ such that $\abselemqc(v) = \mathsf{B}$, $v$ can
  only be cyclic by means of paths which traverse all $\field \in
  \mathsf{B}$; and (2) for every $w \notin \dom(\abselemqc)$, $w$
  cannot be cyclic.  Moreover,
  \begin{itemize}
  \item $\abselemqc^1 \sqsubseteqqc \abselemqc^2$ iff, for every $v
    \in \dom(\abselemqc^1)$, it holds that $v \in \dom(\abselemqc^2)$
    and $\abselemqc^2(v) \subseteq \abselemqc^1(v)$ (i.e.,
    $\abselemqc^1$ allows less variables to be cyclic and, in this
    case, puts stricter conditions on paths);
  \item $\abselemqc = \abselemqc^1 \sqcapqc \abselemqc^2$ is such that
    $\dom(\abselemqc) = \dom(\abselemqc^1) \cap \dom(\abselemqc^2)$,
    and, for every $v \in \dom(\abselemqc^1) \cap \dom(\abselemqc^2)$,
    it holds that $\abselemqc(v) = \abselemqc^1(v) \cup
    \abselemqc^2(v)$;
  \item $\abselemqc = \abselemqc^1 \sqcupqc \abselemqc^2$ is such that
    $\dom(\abselemqc) = \dom(\abselemqc^1) \cup \dom(\abselemqc^2)$,
    and (1) for every $v \in \dom(\abselemqc^1) \cap
    \dom(\abselemqc^2)$, it holds that $\abselemqc(v) =
    \abselemqc^1(v) \cap \abselemqc^2(v)$; (2) for every $v \in
    \dom(\abselemqc^1) \setminus \dom(\abselemqc^2)$, it holds that
    $\abselemqc(v) = \abselemqc^1(v)$; and (3) for every $v \in
    \dom(\abselemqc^2) \setminus \dom(\abselemqc^1)$, it holds that
    $\abselemqc(v) = \abselemqc^2(v)$.
  \end{itemize}
\end{definition}

\begin{lemma}
  \label{lemma:gc-requirement}
  The following functions define a Galois insertion between $\domfc$
  and $\domqc$: the latter is a strict abstraction of the former.
  \begin{eqnarray*}
    \alpha(\abselemfc) & = & \abselemqc~\mbox{with domain}~D =
    \{~v~|~\abselemfc(v) \neq \false \}~\mbox{and such that} \\
    & & \abselemqc(v) = \{~\field~|~\abselemfc(v) \models
    \fproposition{\field}~\} \\
    \gamma(\abselemqc) & = & \lambda
    v.\left\{ \begin{array}{l@{\qquad}l}
        \bigwedge_{\field \in \abselemqc(v)}
        \fproposition{\field} & \mbox{if}~v \in
        \dom(\abselemqc) \\
        \false & \mbox{otherwise}
      \end{array} \right.
  \end{eqnarray*}
\end{lemma}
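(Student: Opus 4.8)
The plan is to establish three things in turn: (i) that $\alpha$ and $\gamma$ are monotone and adjoint between $\domfc$ and $\domqc$; (ii) that $\alpha\circ\gamma$ is the identity on $\domqc$, which upgrades the connection to an insertion; and (iii) that $\gamma\circ\alpha$ is \emph{not} the identity on $\domfc$, which is exactly what ``strict abstraction'' means. As everywhere else, path-formul\ae~and the entailment $\models$ are read modulo the per-variable equivalence $\equiv^v$ of Section~\ref{sec:pathCyclesFields}; keeping track of where this matters is the only subtle point.

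\emph{Adjunction.} First I would record the two order facts behind monotonicity, bearing in mind the non-standard order of $\domqc$ (a \emph{smaller} domain and/or \emph{larger} mandatory-field sets mean a more precise element, Definition~\ref{def:abs-dom-requirement}). If $\abselemfc(v)\models\abselemfc'(v)$ and $\abselemfc(v)\not\equiv\false$, then any cycle realizing $\abselemfc(v)$ realizes $\abselemfc'(v)$, so $\abselemfc'(v)\not\equiv\false$; moreover $\{\,\field\mid\abselemfc'(v)\models\fproposition{\field}\,\}\subseteq\{\,\field\mid\abselemfc(v)\models\fproposition{\field}\,\}$. Together these give $\alpha(\abselemfc)\sqsubseteqqc\alpha(\abselemfc')$. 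Dually, from $\abselemqc^1\sqsubseteqqc\abselemqc^2$ one gets that $\bigwedge_{\field\in\abselemqc^1(v)}\fproposition{\field}$ (more conjuncts) entails $\bigwedge_{\field\in\abselemqc^2(v)}\fproposition{\field}$, and $\false$ entails everything, so $\gamma$ is monotone. Next I would prove $\alpha(\abselemfc)\sqsubseteqqc\abselemqc\iff\abselemfc\sqsubseteqfc\gamma(\abselemqc)$ point-wise. For $(\Leftarrow)$: if $\abselemfc(v)\models\gamma(\abselemqc)(v)$ for all $v$, then $\abselemfc(v)\not\equiv\false$ forces $\gamma(\abselemqc)(v)\not\equiv\false$, hence $v\in\dom(\abselemqc)$; and since $\gamma(\abselemqc)(v)=\bigwedge_{\field\in\abselemqc(v)}\fproposition{\field}$, every $\field\in\abselemqc(v)$ satisfies $\abselemfc(v)\models\fproposition{\field}$, i.e.\ $\abselemqc(v)\subseteq\alpha(\abselemfc)(v)$; so $\alpha(\abselemfc)\sqsubseteqqc\abselemqc$. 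The direction $(\Rightarrow)$ is symmetric: from $\abselemqc(v)\subseteq\alpha(\abselemfc)(v)$ and the definition of $\alpha$, $\abselemfc(v)$ entails each $\fproposition{\field}$ for $\field\in\abselemqc(v)$, hence their conjunction $\gamma(\abselemqc)(v)$, while when $\abselemfc(v)\equiv\false$ it entails anything.

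\emph{Insertion.} To get an insertion I would show $\alpha(\gamma(\abselemqc))=\abselemqc$. With $\mathsf{B}_v=\abselemqc(v)$ for $v\in\dom(\abselemqc)$, one has $\gamma(\abselemqc)(v)=\bigwedge_{\field\in\mathsf{B}_v}\fproposition{\field}$, and the crux is that a conjunction of distinct f-propositions entails $\fproposition{\field}$ \emph{iff} $\field$ occurs in it: $(\Leftarrow)$ is trivial, and $(\Rightarrow)$ holds because the assignment $\mathsf{B}_v$ is itself a model of the conjunction yet falsifies $\fproposition{\field}$ whenever $\field\notin\mathsf{B}_v$. Hence $\alpha(\gamma(\abselemqc))(v)=\mathsf{B}_v$, with domain $\{\,v\mid\gamma(\abselemqc)(v)\not\equiv\false\,\}=\dom(\abselemqc)$. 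I expect this to be the step demanding the most care, precisely because of $\equiv^v$: the argument needs $\bigwedge_{\field\in\mathsf{B}_v}\fproposition{\field}\not\equiv\false$ and, more, that the assignment $\mathsf{B}_v$ is actually realized by a cycle reachable from $v$ (otherwise that conjunction collapses to a strictly larger mandatory set and $\alpha\circ\gamma$ fails to be the identity). So $\domqc$ must be understood to range only over such ``realized'' field sets, mirroring exactly the exclusion of non-viable truth assignments used to turn the connections of Lemmas~\ref{lemma:gi-reach} and~\ref{lemma:gi-cyc} into insertions; without this restriction one still obtains a Galois connection, the non-realized $\abselemqc$'s being the redundant elements.

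\emph{Strictness.} Finally it suffices to exhibit one $\abselemfc\in\domfc$ with $\gamma(\alpha(\abselemfc))\neq\abselemfc$. I would reuse the example of Section~\ref{sec:theFieldCyclicityDomain}: in a program with fields $\field_1,\field_2$, let $\abselemfc(v)=\pformulaempty\vee\onlyfields{\fproposition{\field_1},\fproposition{\field_2}}$ and $\abselemfc(w)=\true$ for $w\neq v$. The models of $\abselemfc(v)$ are $\emptyset$ and $\{\fproposition{\field_1},\fproposition{\field_2}\}$, so it entails neither $\fproposition{\field_1}$ nor $\fproposition{\field_2}$; thus $\alpha(\abselemfc)(v)=\emptyset$ and $\gamma(\alpha(\abselemfc))(v)=\bigwedge\emptyset=\true$, which differs from $\abselemfc(v)$ — e.g., on the cycle traversing only $\field_1$ (the rightmost heap in that example), a model of $\true$ but not of $\abselemfc(v)$. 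Since $\domqc$ can record, per variable, only a single set of mandatory fields — equivalently, only conjunctions of positive f-propositions, never genuine disjunctions or negations — this disjunctive cyclicity information cannot be recovered, which is strictness.
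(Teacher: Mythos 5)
Your proof is correct, and its computational core coincides with the paper's: both arguments ultimately rest on the two facts that $\abselemfc(v)$ entails the conjunction of all f-propositions it entails (which gives the paper its $\gamma(\alpha(\abselemfc))\sqsupseteq\abselemfc$, and gives you one direction of the adjunction), and that a conjunction $\bigwedge_{\field\in\mathsf{B}}\fproposition{\field}$ entails exactly the f-propositions occurring in it (which gives $\alpha\circ\gamma=\mathrm{id}$). The packaging differs: the paper checks the two composite inequalities directly, while you check monotonicity plus the adjunction $\alpha(\abselemfc)\sqsubseteqqc\abselemqc\iff\abselemfc\sqsubseteqfc\gamma(\abselemqc)$; these characterizations are interchangeable here, so this is a presentational rather than substantive difference. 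You do, however, go beyond the published proof in two useful respects. First, you exhibit an explicit witness for strictness — $\pformulaempty\vee\onlyfields{\fproposition{\field_1},\fproposition{\field_2}}$ abstracts to the empty mandatory-field set and concretizes back to $\true$ — whereas the paper's proof never actually discharges the ``strict'' part of the statement. Second, you correctly observe that, because entailment in $\domfc$ is taken modulo $\equiv^v$, the step ``$\mathsf{B}_v$ is a model falsifying $\fproposition{\field}$ for $\field\notin\mathsf{B}_v$'' requires $\mathsf{B}_v$ to be realizable by an actual cycle reachable from $v$; the paper asserts $\alpha(\gamma(\abselemqc))=\abselemqc$ without this caveat, and for a non-realizable $\mathsf{B}_v$ one only gets $\alpha(\gamma(\abselemqc))\sqsubseteqqc\abselemqc$, i.e., a connection rather than an insertion, unless $\domqc$ is restricted or quotiented as you suggest — exactly parallel to the role viability plays in Lemmas~\ref{lemma:gi-reach} and~\ref{lemma:gi-cyc}. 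That is a genuine (if minor) refinement of the published argument, not a gap in yours.
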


As a matter of fact, $\domqc$ is also an abstraction of the cyclicity
counterpart of $\dommr$ since monotone boolean functions can capture
the desired property.  Indeed, the path formul\ae~returned by the
function $\gamma$ presented in Lemma \ref{lemma:gc-requirement} (i.e.,
either $\false$ or $\bigwedge_{\field \in \abselemqc(v)}
\fproposition{\field}$) are monotone.

\subsubsection{Even more expressive abstract domains}
\label{sec:moreExpressive}

Most domains discussed so far follow a similar pattern: an abstract
value assigns to a pair of variables $(v,w)$ (or to a single variable,
in the case of cyclicity) a logical formula which is in charge of
describing a property of all paths between $v$ and $w$.  This
observation leads to consider more refined logics capturing
finer-grained properties of paths.

For example, one could be interested in the \emph{order} in which a
path traverses fields.  Such an order could be either a \emph{total}
or a \emph{partial} order, stating that, for example, every path from
$v$ to $w$ only traverses $\field''$ after traversing $\field'$.
Another potentially interesting property is the (minimum or maximum)
number of occurrences of a given field in a path.  In principle, these
properties can be combined to represent even more precise properties
such as \emph{all paths traverse $\field'$ at least once, and
  $\field''$ at least twice; the first occurrence of $\field''$ comes
  before the first of $\field'$; the second occurrence of $\field''$
  comes after the first of $\field'$}.

To define such domains and discuss their applicability is beyond the
scope of this paper.  Anyway, it is likely that this kind of
properties of paths could be represented by \emph{first-order logic}
or some version of \emph{temporal logic} or \emph{separation logic}.

 % abs-semantics
\section{The field-sensitive abstract semantics}
\label{sec:abstractSemantics}

This section defines an abstract semantics $\FIELDBASED$ based on
$\domfrc$.  The semantics has to take into account any modification to
the heap which may occur at runtime.  In particular, paths can be
created and removed by means of field updates.  On the contrary,
updating a reference variable (not one of its fields) does \emph{not}
modify the heap structure, but has to be reflected anyway in the
resulting abstract values.
An \emph{abstract denotation} $\absden$ from $\typenv_1$ to
$\typenv_2$ is a partial map from $\domfrctau{\typenv_1}$ to
$\domfrctau{\typenv_2}$.  It describes how the abstract input state
changes when a piece of code is executed.  The set of all abstract
denotations from $\typenv_1$ to $\typenv_2$ is denoted by
$\absdenset{\typenv_1}{\typenv_2}$.
As in the concrete setting, interpretations provide abstract
denotations for methods in terms of their input and output arguments.
An \emph{interpretation} $\absinterp$ maps method signatures to
abstract denotations, and is such that $\absinterp(\methodsig) \in
\absdenset{\inp{\methodsig}}{\inp{\methodsig}\cup\{\out\}}$ for every
$\methodsig$.
Note that the range of denotations is $\inp{\methodsig}\cup\{\out\}$,
unlike the concrete semantics where only $\out$ is needed since
changes in the memory are directly observable in the heap.
The set of all abstract interpretations is denoted by
$\absinterpretations$.

\subsection{Preliminaries}
\label{sec:preliminaries}

\subsubsection{Auxiliary analyses}
\label{sec:auxiliaryAnalyses}

$\FIELDBASED$ uses \emph{deep-sharing} and \emph{purity}
\cite{GenaimS08} analyses as pre-existent components; i.e., programs
are assumed to have been analyzed w.r.t.\ these properties using
state-of-the-art tools.  Two reference variables $v$ and $w$
\emph{deep-share} in $\statesym$ iff they both reach a common location
by traversing non-empty paths, i.e.,
$\reachableplus{\statem{\statesym}}{\statef{\statesym}(v)}\cap\reachableplus{\statem{\statesym}}{\statef{\statesym}(w)}
\neq\emptyset$, where $\reachableplus{\cdot}{\cdot}$ is like
$\reachable{\cdot}{\cdot}$ but excludes empty paths.  This property,
written as $\SHARE{v}{w}$, is different from standard sharing
\cite{spoto:pair_sharing} since paths from $\statef{\statesym}(v)$ and
$\statef{\statesym}(w)$ to the common location must have length $\geq
1$.  A variable deep-shares with itself if the depth of the data
structure pointed to by it is at least 2; the relation is symmetric.

\begin{example}[(deep-sharing)]
  In the following heap, $\xx$ deep-shares with itself and with $\yy$;
  $\yy$ deep-shares with itself, with $\xx$, and with $\zz$; $\zz$
  only deep shares with itself and with $\yy$; $\mm_1$ and $\mm_2$
  alias but do not deep-share, not even with themselves.

  \begin{center}
    \begin{tikzpicture}
      \node (x) at (0,-.3) {$\xx$};
      \node (y) at (2,-.3) {$\yy$};
      \node (z) at (4,-.3) {$\zz$};
      \node (m1) at (5.5,-.3) {$\mm_1$};
      \node (m2) at (6.5,-.3) {$\mm_2$};
      \node[rectangle,draw] (l1) at (0,-1) {$o_1:\class$};
      \node[rectangle,draw] (l2) at (2,-1) {$o_2:\class$};
      \node[rectangle,draw] (l3) at (0,-2) {$o_3:\class$};
      \node[rectangle,draw] (l4) at (4,-1) {$o_4:\class$};
      \node[rectangle,draw] (l5) at (2,-2) {$o_5:\class$};
      \node[rectangle,draw] (l6) at (4,-2) {$o_6:\class$};
      \node[rectangle,draw] (l7) at (6,-1) {$o_7:\class$};
      \draw[dotted] (x) -- (l1);
      \draw[dotted] (y) -- (l2);
      \draw[dotted] (z) -- (l4);
      \draw[dotted] (m1) -- (l7);
      \draw[dotted] (m2) -- (l7);
      \draw[->] (l1) -- node[auto] {\ffield} (l3);
      \draw[->] (l2) -- node[auto] {\ffield} (l5);
      \draw[->] (l2) -- node[auto] {\gfield} (l4);
      \draw[->] (l5) -- node[auto] {\ffield} (l3);
      \draw[->] (l4) -- node[auto] {\ffield} (l6);
    \end{tikzpicture}
  \end{center}
\end{example}

\noindent
Note that two variables may deep-share without being reachable from
each other, and one may reach the other without deep-sharing with it.
This property is not exactly like $\ssearrow\!\sswarrow$ of
\cite{DBLP:conf/cav/BrockschmidtMOG12} since it requires both paths to
have length $\geq 1$, not only one of them.  However, it can be (and
actually is, see Section \ref{sec:practicalIssues}) easily implemented
as a variation of standard sharing analysis.  Importantly, it is a
\emph{possible analysis}, i.e., a deep-sharing statement has to be
added to the abstract description of the heap whenever there is the
possibility of deep-sharing.

The $i$-th argument of a method $\method$ is said to be \emph{pure} if
$\method$ does not update the data structure to which the argument
initially pointed.  The analysis proposed by \cite{GenaimS08}, based
on previous work by \cite{spoto:pair_sharing}, can be used as purity
analysis.

For each $\methodsig$, a denotation $\shden{\methodsig}$ is given: for
$\abselemsp\tuple{\methodsig}$ safely describing the deep-sharing and
purity between actual arguments in the input state,
$\abselemsp'\tuple{\methodsig}=\shden{\methodsig}(\abselemsp\tuple{\methodsig})$
is such that (1) if $\SHARE{v}{w}\in \abselemsp'\tuple{\methodsig}$,
then $v$ and $w$ might become deep-sharing during the execution of
$\methodsig$; and (2) $\PURE{v}_i\in \abselemsp'\tuple{\methodsig}$
means that the $i$-th argument might be impure.  In the following, the
domain $\cI_{sp}^{\typenv}$ will combine deep-sharing and purity
information: $\SHARE{v}{w} \in \abselemsp$ means that $\abselemsp$
allows $v$ and $w$ to deep-share; and $\PURE{v}_i \in \abselemsp$
means that $\abselemsp$ allows the $i$-th argument of the method under
consideration to be impure.

\subsubsection{Operations on abstract values}
\label{sec:operationsOnAbstractValues}

\emph{Projection} $\exists v \abselemfrc$ (easily extensible to sets
of variables) of $\abselemfrc$ sets $\abselemfc(v)$,
$\abselemfr(v,v)$, any $\abselemfr(w_1,v)$, and any
$\abselemfr(v,w_2)$ to $\false$, leaving the rest unchanged.

\emph{Renaming} $\abselemfrc[v/w]$ replaces $v$ by $w$: the result
$\abselemfrc'$ is such that $\abselemfc'(w) = \abselemfc(v)$ and
$\abselemfc'(v) = \false$; $\abselemfr'(v,v) = \abselemfr'(v',v) =
\abselemfr'(v,v') = \false$ for every $v'$; moreover,
$\abselemfr'(w,w) = \abselemfr'(v,v)$, $\abselemfr'(w,v') =
\abselemfr(v,v')$ and $\abselemfr'(v',w) = \abselemfr(v',v)$ for every
$v' \neq v$.

\emph{Copy} $\abselemfrc[v+w]$ is similar to renaming but $v$ is not
removed: the result $\abselemfrc'$ is s.t.
\begin{itemize}
\item $\abselemfc'(w) = \abselemfc'(v) = \abselemfc(v)$; 
\item $\abselemfr'(w,w) = \abselemfr'(v,v) = \abselemfr(v,v)$ and
  $\abselemfr'(v,w) = \abselemfr'(w,v) = \abselemfr(v,v)$;
\item $\abselemfr'(w,v') = \abselemfr'(v,v') = \abselemfr(v,v')$ and
  $\abselemfr'(v',w) = \abselemfr'(v',v) = \abselemfr(v',v)$ if
  $v'\notin\{v,w\}$.
\end{itemize}

Finally, \emph{update} $\UPDATE{\abselemfrc}{(v,w)}{\pformula}$ sets
$\abselemfr(v,w)$ to $\pformula$, leaving the rest unchanged, and
$\UPDATE{\abselemfrc}{v}{\pformula}$ sets $\abselemfc(v)$ to
$\pformula$.

\subsubsection{Path-formul\ae}
\label{sec:operationsOnPathFormulae}

The \emph{path-concatenation} operator $\odot: \pformulae \times
\pformulae \mapsto \pformulae$ is used to combine formul\ae~when
concatenating paths.  The path-formula $\pformula \odot \pformulag$
has the following models: $\{~\finterpretation' \cup
\finterpretation''~|~\finterpretation'{\in}\fmodels{\pformula}~\wedge~
\finterpretation''{\in}\fmodels{\pformulag}~\wedge~
\finterpretation'~\mbox{and}~\finterpretation''~\mbox{are viable}~\}$.
In other words, the models of $\pformula \odot \pformulag$ are
obtained by ``concatenating'' the models of $\pformula$ with those of
$\pformulag$.  This makes sense because of the following lemma.

\begin{lemma}[(path-concatenation)]
  \label{lemma:odotAndPathConcatenation}
  Let $\hpath'$ and $\hpath''$ be two paths such that the last
  location of $\hpath'$ is the first of $\hpath''$.  Then,
  $\traverses{\hpath'}{\pformula}$ and
  $\traverses{\hpath''}{\pformulag}$ imply $\traverses{\hpath'
    \hpconcat \hpath''}{\pformula \odot \pformulag}$.
\end{lemma}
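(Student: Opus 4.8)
The plan is to reduce the whole statement to one combinatorial fact: a path p-satisfies a path-formula exactly when the \emph{set of f-propositions of the fields it traverses} is a propositional model of that formula. Working, as the statement implicitly requires, inside a fixed heap in which $\hpath'$ and $\hpath''$ — and therefore $\hpath'\hpconcat\hpath''$ — are all paths, write $\mathit{tr}(\hpath) = \{~\fproposition{\field}~|~\hpath~\mbox{traverses}~\field~\} \subseteq \fpropositions$ for any path $\hpath$. First I would prove, by structural induction on $\pformula$ from Definition~\ref{def:pathsAndPathFormulae}, the sub-claim that $\traverses{\hpath}{\pformula}$ holds iff $\mathit{tr}(\hpath) \in \fmodels{\pformula}$: the base case $\pformula = \fproposition{\field}$ is immediate from the definition of ``traverse'', the cases $\true$ and $\false$ are trivial, and the $\lnot$, $\wedge$, $\vee$ cases follow because p-satisfaction and propositional satisfaction interpret the connectives identically.

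Next I would record two easy observations. (i) By the definition of concatenation, the edges of $\hpath'\hpconcat\hpath''$ are exactly those of $\hpath'$ followed by those of $\hpath''$ (the shared middle location contributes no extra edge), and whether a given edge ``traverses'' a field depends only on the object at its source in the common heap; hence a field is traversed by $\hpath'\hpconcat\hpath''$ iff it is traversed by $\hpath'$ or by $\hpath''$, i.e.\ $\mathit{tr}(\hpath'\hpconcat\hpath'') = \mathit{tr}(\hpath') \cup \mathit{tr}(\hpath'')$. (ii) Both $\mathit{tr}(\hpath')$ and $\mathit{tr}(\hpath'')$ are viable: since the unique model of $\onlyfields{\mathit{tr}(\hpath')}$ is $\mathit{tr}(\hpath')$ itself, the sub-claim gives $\traverses{\hpath'}{\onlyfields{\mathit{tr}(\hpath')}}$, so $\hpath'$ is itself a witnessing path for the viability of $\mathit{tr}(\hpath')$ (and likewise for $\hpath''$); this also covers empty paths, where the set is $\emptyset$ and the formula is $\pformulaempty$.

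Then assembling is routine. From $\traverses{\hpath'}{\pformula}$ and $\traverses{\hpath''}{\pformulag}$ the sub-claim yields $\mathit{tr}(\hpath') \in \fmodels{\pformula}$ and $\mathit{tr}(\hpath'') \in \fmodels{\pformulag}$; combining this with observation (ii) and the definition of $\odot$, the union $\mathit{tr}(\hpath') \cup \mathit{tr}(\hpath'')$ is a model of $\pformula \odot \pformulag$; by observation (i) this union is $\mathit{tr}(\hpath'\hpconcat\hpath'')$; and applying the converse direction of the sub-claim gives $\traverses{\hpath'\hpconcat\hpath''}{\pformula \odot \pformulag}$, which is the claim.

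There is no deep obstacle here. The two points needing a little care are, first, stating the sub-claim as an ``iff'' — its converse direction is precisely what lets us pass from membership in $\fmodels{\pformula\odot\pformulag}$ back to p-satisfaction in the last step — and, second, verifying the viability side condition built into the definition of $\odot$, which is exactly observation (ii) and is the only place where the hypothesis that $\hpath'$ and $\hpath''$ are genuine paths (rather than arbitrary field sets) is used.
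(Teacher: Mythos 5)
Your proof is correct and follows essentially the same route as the paper's: both arguments reduce p-satisfaction to membership of the traversed-field set in the formula's models, observe that the traversed-field set of the concatenation is the union of those of the sub-paths, and conclude by the definition of $\odot$. Your explicit treatment of the viability side condition (each $\mathit{tr}(\hpath)$ is viable because $\hpath$ itself witnesses it) is a detail the paper's proof leaves implicit, but it is the same argument, just spelled out more carefully.
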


It is easy to see that $\odot$ preserves equivalence of
path-formul\ae: if $F_1 \equiv F_2$ and $G_1 \equiv G_2$, then $F_1
\odot F_2 \equiv G_1 \odot G_2$ since only viable models are
considered.

The \emph{path-difference} operator $\ominus: \pformulae \times
\pformulae \mapsto \pformulae$ defines $\fmodels{\pformula \ominus
  \pformulag}$ to be

\centerline{$\{ \finterpretation'{\setminus}
  X~|~\finterpretation'{\in}\fmodels{\pformula} \wedge
  \finterpretation''{\in}\fmodels{\pformulag} \wedge
  X{\subseteq}\finterpretation'' \wedge
  \finterpretation'~\mbox{and}~\finterpretation''~\mbox{are
    viable}\}$}
\noindent
Note that every model of $\pformula$ is still a model of $\pformula
\ominus \pformulag$, since $\emptyset$ is a subset of all sets.  The
use of this operation is motivated by Lemmas \ref{lemma:ominus2} and
\ref{lemma:ominus3}: $\ominus$ models path difference.

\begin{lemma}
  \label{lemma:ominus2}
  Let $\hpath$ be $\hpath' \hpconcat \hpath''$; let
  $\traverses{\hpath}{\pformula}$ and
  $\traverses{\hpath'}{\pformulag}$.  Then,
  $\traverses{\hpath''}{\pformula \ominus \pformulag}$.
\end{lemma}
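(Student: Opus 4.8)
The plan is to reduce the claim to one elementary observation about p-satisfaction. Write $\mathit{tr}(\hpath)$ for the truth assignment $\{~\fproposition{\field}~|~\hpath\mbox{ traverses }\field~\} \subseteq \fpropositions$ determined by a path $\hpath$ (in the ambient state $\statesym$, which is fixed throughout, since $\hpath$, $\hpath'$, $\hpath''$ all live in the same heap, so Definition~\ref{def:fieldTraversal} always refers to the same class tags). By structural induction on $\pformula$, directly from Definition~\ref{def:pathsAndPathFormulae}, one gets that $\traverses{\hpath}{\pformula}$ holds \emph{iff} $\mathit{tr}(\hpath) \in \fmodels{\pformula}$: the base case is the very definition of field traversal, and the connective cases are immediate. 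As a by-product, $\mathit{tr}(\hpath)$ is always \emph{viable}: taking $\finterpretation = \mathit{tr}(\hpath)$ in the definition of viability, $\hpath$ itself p-satisfies $\onlyfields{\mathit{tr}(\hpath)}$. I would also record the purely combinatorial fact that $\mathit{tr}(\hpath) = \mathit{tr}(\hpath') \cup \mathit{tr}(\hpath'')$ when $\hpath = \hpath' \hpconcat \hpath''$: the steps of the concatenation are exactly those of $\hpath'$ followed by those of $\hpath''$, and whether a given step traverses a field depends only on the state; in particular $\mathit{tr}(\hpath'') \subseteq \mathit{tr}(\hpath)$.

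With these in hand, the proof becomes a matter of exhibiting the right witnesses in the definition of $\ominus$. Since $\traverses{\hpath}{\pformula}$ and $\traverses{\hpath'}{\pformulag}$, the assignments $\omega' := \mathit{tr}(\hpath)$ and $\omega'' := \mathit{tr}(\hpath')$ are viable models of, respectively, $\pformula$ and $\pformulag$. I would then set $X := \mathit{tr}(\hpath') \setminus \mathit{tr}(\hpath'')$, so that $X \subseteq \mathit{tr}(\hpath') = \omega''$, and check by a short set computation — using $\mathit{tr}(\hpath) = \mathit{tr}(\hpath') \cup \mathit{tr}(\hpath'')$ — that $\omega' \setminus X = \mathit{tr}(\hpath'')$. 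Hence $\mathit{tr}(\hpath'')$ has the form $\omega' \setminus X$ with $\omega' \in \fmodels{\pformula}$ viable, $\omega'' \in \fmodels{\pformulag}$ viable and $X \subseteq \omega''$, so $\mathit{tr}(\hpath'') \in \fmodels{\pformula \ominus \pformulag}$ by the definition of $\ominus$, and therefore $\traverses{\hpath''}{\pformula \ominus \pformulag}$ by the initial observation.

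I do not expect a serious obstacle here; the content is essentially bookkeeping. The one point to get right is the choice of $X$: we subtract from $\mathit{tr}(\hpath)$ exactly the fields that the prefix $\hpath'$ traverses but the suffix $\hpath''$ does not. That set is indeed a subset of $\omega'' = \mathit{tr}(\hpath')$, as the definition of $\ominus$ demands, and removing it from $\mathit{tr}(\hpath)$ leaves precisely $\mathit{tr}(\hpath'')$ because $\mathit{tr}(\hpath'') \subseteq \mathit{tr}(\hpath)$. Notably, no assumption that $\pformula$ or $\pformulag$ is monotone, positive, or definite is needed, and the degenerate situations (either $\hpath'$ or $\hpath''$ empty) are handled uniformly by the same argument.
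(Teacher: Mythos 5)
Your proposal is correct and follows essentially the same route as the paper's own proof: both identify the traversed-field sets of $\hpath$, $\hpath'$, $\hpath''$, use $\mathit{tr}(\hpath)=\mathit{tr}(\hpath')\cup\mathit{tr}(\hpath'')$, and exhibit the removed set $X=\mathit{tr}(\hpath')\setminus\mathit{tr}(\hpath'')$ as the witness required by the definition of $\ominus$. Your version is merely more explicit about the equivalence between p-satisfaction and membership in $\fmodels{\cdot}$ and about viability of the witnessing assignments, both of which the paper leaves implicit.
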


\begin{lemma}
  \label{lemma:ominus3}
  Let $\hpath$ be $\tuple{\ell_0,\ell_1..,\ell_k}$ and $\hpath'$ be
  $\tuple{\ell_1,..,\ell_k}$.  Let the path from $\ell_0$ to $\ell_1$
  traverse $\field$, and $\hpath$ p-satisfy $\pformula$.  Then,
  $\traverses{\hpath'}{\pformula \ominus \onlyfieldsp{\field}}$.
\end{lemma}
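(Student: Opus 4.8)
The plan is to derive Lemma~\ref{lemma:ominus3} as the special case of Lemma~\ref{lemma:ominus2} in which the removed prefix is a single field edge and the subtracted formula is $\onlyfieldsp{\field}$. I would write $\hpath_0=\tuple{\ell_0,\ell_1}$ for the one-edge prefix of $\hpath$ and keep the name $\hpath'=\tuple{\ell_1,\ldots,\ell_k}$ for the suffix. By the definition of path concatenation in Section~\ref{sec:pathCyclesFields}, since the last location of $\hpath_0$ equals the first location of $\hpath'$, we have $\hpath=\hpath_0\hpconcat\hpath'$. So I would invoke Lemma~\ref{lemma:ominus2} on this decomposition, with $\pformula$ for the whole path and $\onlyfieldsp{\field}$ for the prefix, and it would deliver $\traverses{\hpath'}{\pformula\ominus\onlyfieldsp{\field}}$, which is exactly the claim, $\hpath'$ being the suffix.

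The one hypothesis of Lemma~\ref{lemma:ominus2} not handed to us for free is $\traverses{\hpath_0}{\onlyfieldsp{\field}}$, so most of the work is in establishing it. By Definition~\ref{def:fieldTraversal}, $\hpath_0$ contains the single consecutive pair $(\ell_0,\ell_1)$, and it traverses exactly those fields $\field'$ with $\objframe{\statem{\statesym}(\ell_0)}(\field')=\ell_1$; since $\hpath_0$ records a single dereference of $\field$, this set of fields is precisely $\{\fproposition{\field}\}$. I would then use the fact (a routine structural induction on the logical connectives, following the rules of Definition~\ref{def:pathsAndPathFormulae}) that a path p-satisfies a path-formula iff the set of fields it traverses is a model of that formula; since $\{\fproposition{\field}\}$ is the unique model of $\onlyfieldsp{\field}$, this yields $\traverses{\hpath_0}{\onlyfieldsp{\field}}$, and combining it with $\traverses{\hpath}{\pformula}$ through Lemma~\ref{lemma:ominus2} finishes the argument.

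The main obstacle --- indeed the only delicate point --- is the claim that $\hpath_0$ traverses \emph{only} $\field$. This holds whenever the object stored at $\ell_0$ has no further reference field also pointing to $\ell_1$, which is the situation in every intended use of the lemma, where $\hpath_0$ is the edge induced by a dereference $\xx:=\xx.\field$ and ``the edge from $\ell_0$ to $\ell_1$ traverses $\field$'' is meant as ``it traverses precisely $\field$''; under that reading $\traverses{\hpath_0}{\onlyfieldsp{\field}}$ is immediate and the remainder is bookkeeping. If one instead allowed the object at $\ell_0$ to reach $\ell_1$ through some field $\gfield\neq\field$ as well, then $\hpath_0$ would p-satisfy $\onlyfields{\fproposition{\field},\fproposition{\gfield}}$ rather than $\onlyfieldsp{\field}$, and the conclusion would have to be weakened accordingly; so I would keep the statement as a corollary of Lemma~\ref{lemma:ominus2} with the single-$\field$ reading made explicit, rather than unfolding the definition of $\ominus$ directly, which buys nothing here.
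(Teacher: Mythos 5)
Your proof is correct, but it takes a different route from the paper's. The paper proves Lemma~\ref{lemma:ominus3} directly, by a two-case analysis on whether the suffix $\hpath'$ traverses $\field$ again: if it does, the traversed-field set of $\hpath'$ equals that of $\hpath$ and is a model of $\pformula\ominus\onlyfieldsp{\field}$ because every model of $\pformula$ survives $\ominus$ (take $X=\emptyset$); if it does not, the set is the model of $\pformula$ with $\field$ removed, which is again a model by taking $X=\{\fproposition{\field}\}$. You instead factor $\hpath$ as $\hpath_0\hpconcat\hpath'$ with $\hpath_0=\tuple{\ell_0,\ell_1}$, establish $\traverses{\hpath_0}{\onlyfieldsp{\field}}$, and invoke Lemma~\ref{lemma:ominus2}; this is shorter and avoids re-doing a case split that is already buried inside the proof of Lemma~\ref{lemma:ominus2} (your two cases correspond to $X=\emptyset$ and $X=\{\fproposition{\field}\}$ there), at the price of making the lemma depend on Lemma~\ref{lemma:ominus2} rather than being self-contained. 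Both arguments rest on the same tacit premise, which you are right to single out: that the edge from $\ell_0$ to $\ell_1$ traverses \emph{only} $\field$, i.e.\ no other field of the object at $\ell_0$ also points to $\ell_1$. The paper's proof glosses over this exactly where it asserts, in its second case, that the traversed set of $\hpath'$ is ``a model of $\pformula$ from which $\field$ has been removed''; if the first edge also traversed some $\gfield\neq\field$, more than $\field$ would be removed and neither proof would go through as written. Your explicit flagging of the single-$\field$ reading is therefore not a weakness of your write-up but a point where you are more careful than the source. One small thing you could add for completeness: the viability side-conditions in the definition of $\ominus$ are discharged because the relevant truth assignments are witnessed by the actual paths $\hpath$ and $\hpath_0$, the same silent step the paper takes.
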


 % abs-semantics-fig-exp

\noindent
\begin{figure}[t]
  \begin{minipage}{12cm}
    \[
    \small
    \begin{array}{@{}l@{~}rl@{}}
      (1_e) & \EXPASEMANTICS{\typenv}{\absinterp}{n}(\abselemfrc) =
      & \abselemfrc \\
      (2_e) & \EXPASEMANTICS{\typenv}{\absinterp}{\nil}(\abselemfrc)
      = & \abselemfrc \\
      (3_e) &
      \EXPASEMANTICS{\typenv}{\absinterp}{\newk{\class}}(\abselemfrc)
      = &
    \left(\UPDATE{\abselemfr}{(\res,\res)}{\pformulaempty},\UPDATE{\abselemfc}{\res}{\pformulaempty}\right)   
      \\
      (4_e) & \EXPASEMANTICS{\typenv}{\absinterp}{v}(\abselemfrc) =
      & \mbox{if $\typenv(v){=}\integer$ then $\abselemfrc$ else
        $\abselemfrc[v+\res]$}
      \\
      (5_e) & \EXPASEMANTICS{\typenv}{\absinterp}{\mathit{exp}_1{\oplus}
        \mathit{exp}_2}(\abselemfrc) = & \exists
      \rho.\EXPASEMANTICS{\typenv}{\absinterp}{\mathit{exp}_2}\left(\exists
      \rho.\EXPASEMANTICS{\typenv}{\absinterp}{\mathit{exp}_1}\left(\abselemfrc\right)\right)
      \\
      (6_e) &
      \EXPASEMANTICS{\typenv}{\absinterp}{v.\field}(\abselemfrc) = &
      \mbox{if $\field$ has type \integer then $\abselemfrc$ else}~
      \abselemfrc \sqcupf \abselemfrc'~\mbox{where} \\
      & & \quad \begin{array}{rcl@{\quad}l}
        \abselemfc'(\res) & = & \abselemfc(v) \\
        \abselemfc'(w) & = & \false & ~\mbox{for
          every}~w\neq \rho \\
        \abselemfr'(\res,\res) & = & \abselemfc(v) & \\
        \abselemfr'(\res,w) & = & \abselemfr(v,w)
        \ominus \onlyfieldsp{\field} & ~\mbox{for
          every}~w\neq \rho \\
        \abselemfr'(w,\res) & = & \left\{ \begin{array}{l@{\quad}l}
            \abselemfr(w,v) \odot \onlyfieldsp{\field} \\
            \true \end{array} \right. &
            \begin{array}{l}
            \mbox{if}~\SHARE{w}{v} \notin \abselemsp\\
            \mbox{if}~\SHARE{w}{v} \in \abselemsp \end{array}
         \\
        \abselemfr'(w_1,w_2) & = & \false &
        ~\mbox{in all the other cases}
      \end{array}\\
    \end{array} 
    \]
  \end{minipage}
  
  \caption{The abstract semantics for expressions}
  \label{fig:abs-sem-exp}
\end{figure}

 % abs-semantics-fig-meth

\noindent
\begin{figure}[t]
  \begin{minipage}{12cm}
    \[
    \small
    \begin{array}{@{}l@{~}rl@{}}
      (7_e)
      &\EXPASEMANTICS{\typenv}{\absinterp}{v_0.\method(v_1,..,v_n)}(\abselemfrc)
      =& \abselemfrc ~ \sqcupf ~ \abselemfrc'' ~ \sqcupf ~
      \abselemfrc''' ~ \sqcupf ~
      \abselemfrc''''~\mbox{where}~\bar{v}{=}\{v_0,..,v_n\}~\mbox{and}
      \\ 
      &\multicolumn{2}{l}{~~~~~~~\abselemfrc' {=}
        \project{\abselemfrc}{(\typenv{\setminus}\bar{v})} \qquad
        \qquad \abselemsp'= 
        \project{\abselemsp}{(\typenv{\setminus}\bar{v})}} \\
      &\multicolumn{2}{l}{~~~~~~~
        \abselemfrc'' = \sqcupf~\{ 
        ~(\absinterp(\methodsig)(\abselemfr'[\bar{v}/\inp{\methodsig}]))[\inp{\methodsig}/\bar{v},\out/\res]~
        ~|~~~\methodsig \mbox{ can be called here}~
        \} }
      \\
      &\multicolumn{2}{l}{~~~~~~~\abselemsp''=\cup\{~
        \shden{\methodsig}(\abselemsp'[\bar{v}/\inp{\methodsig}])[\inp{\methodsig}/\bar{v},\out/\res]
        ~|~~~\methodsig \mbox{ can be called here}~\}
      }\\
      &\multicolumn{2}{l}{~~~~~~~\abselemfr^{ij}(w_1,w_2) = \left\{
        \begin{array}{l@{\quad}l}
          \abselemfr(w_1,v_i) \odot \abselemfr''(v_i,v_j) \odot
          \abselemfr(v_j,w_2) & 
          \mbox{if}~\left( \begin{array}{ll}
            \SHARE{w_1}{v_i} \notin \abselemsp \wedge
            \SHARE{v_i}{v_j} \notin \abselemsp'' & \wedge \\
            \abselemfr(v_j,w_2) \neq \false \wedge \PURE{v}_i \in
            \abselemsp''
            \end{array}
          \right) \\
          \abselemfr(w_1,v_i) \odot \true & 
          \mbox{if}~\left( \begin{array}{ll}
            \SHARE{w_1}{v_i} \notin \abselemsp \wedge
            \SHARE{v_i}{v_j} \in \abselemsp'' & \wedge \\
            \abselemfr(v_j,w_2) \neq \false \wedge \PURE{v}_i \in
            \abselemsp''
            \end{array}
          \right) \\
          \true \odot \abselemfr(v_j,w_2) & 
          \mbox{if}~\left( \begin{array}{ll}
            \SHARE{w_1}{v_i} \in \abselemsp \wedge
            \SHARE{v_i}{v_j} \notin \abselemsp'' & \wedge \\
            \abselemfr(v_j,w_2) \neq \false \wedge \PURE{v}_i \in
            \abselemsp''
            \end{array}
          \right) \\
          \true &
          \mbox{if}~\left( \begin{array}{ll}
            \SHARE{w_1}{v_i} \in \abselemsp \wedge
            \SHARE{v_i}{v_j} \in \abselemsp'' & \wedge \\
            \abselemfr(v_j,w_2) \neq \false \wedge \PURE{v}_i \in
            \abselemsp''
            \end{array}
          \right) \\
          \false & \mbox{otherwise}
        \end{array} \right.
      }
      \\
      &\multicolumn{2}{l}{~~~~~~~
        \abselemfr''' = \sqcupf~\{~
        \abselemfr^{ij}~|~i,j \in \{1..n\} \} }
      \\
      &\multicolumn{2}{l}{~~~~~~~
        \abselemfr''''
    = \UPDATE{\abselemfr'''}{(\res,w)}{\vee_{0\leq i\leq
    n} \pformula_i} \quad \mbox{for each}~w}
      \\
      &\multicolumn{2}{l}{~~~~~~~\qquad \mbox{where}~\pformula_k
    = \left\{
    \begin{array}{l@{\qquad}l}
      \true & \mbox{if}~\SHARE{v_k}{\rho} \in \abselemsp'' \\
      (\abselemfr''(\res,v_k)\odot\abselemfr(v_k,w)) \vee
      (\abselemfr(v_k,w) \ominus \abselemfr''(v_k,\res)) &
      \mbox{otherwise}
      \end{array}
    \right.
      }\\
      &\multicolumn{2}{l}{~~~~~~~
        \abselemfc^i(w) = \left\{ \begin{array}{l@{\quad}l}
          \abselemfc''(v_i) &
          \mbox{if}~\PURE{v_i}\in \abselemsp'' \wedge \left(\SHARE{w}{v_i}
          \in \abselemsp \vee \abselemfr(w,v_i) \neq\false
            \vee \abselemfr(v_i,w) \neq\false \right) \\
          \false & \mbox{otherwise} \end{array} \right. }
      \\
      &\multicolumn{2}{l}{~~~~~~~
        \abselemfc''' = \sqcupf~\{~
        \abselemfc^{i}~|~i \in \{1..n\} \} }
      \\
      &\multicolumn{2}{l}{~~~~~~~
      \abselemfc'''' = \UPDATE{\abselemfc'''}{~\res}{\bigvee \{~\abselemfc(v_k)
    ~|~0\leq k \leq n,~\abselemfr''(v_k,\res) \neq \false~\}~}} \\
    \end{array} 
    \]
  \end{minipage}
  
  \caption{The abstract semantics for method calls}
  \label{fig:abs-sem-meth}
\end{figure}

\subsection{Expressions}
\label{sec:abs-sem-exp}

Figures \ref{fig:abs-sem-exp} and \ref{fig:abs-sem-meth} describe how
the abstract semantics $\EXPASEMANTICS{\typenv}{\absinterp}{\_}$ works
on expressions.  It is based on a type environment $\typenv$ (left
implicit) and an interpretation $\absinterp$ on methods.  The special
variable $\res$ represents the result of evaluating the expression.
It is easy to see that p-formul\ae~which are not in normal form are
never generated.

\subsubsection*{Easy cases}

As expected, the evaluation of an $\integer$ value (case $1_e$) or
$\nil$ (case $2_e$) does not modify the current abstract value (i.e.,
there is no new reachability/cyclicity) since $\res$ will not have any
relation with any existing variable (either because it is a number or
because it does not point to a valid heap location).  When a new
object is created (case $3_e$), the sharing information does change,
but this is left implicit.  More importantly, $\res$ is correctly
represented as (1) reaching itself through an empty path (only this
kind of paths can p-satisfy $\pformulaempty$), which means that it
aliases with itself; and (2) similarly, being cyclic because of an
empty path.

In case $4_e$, information about $v$ is copied to $\res$, without
removing the original information as $v$ is still accessible.  In case
$5_e$, $\oplus$ stands for a binary operation on $\integer$;
\emph{side effects} are the only possible source of new information.

\subsubsection*{Field access}

Case $6_e$ is harder: if the declared type of $\field$ is a reference
type, then the new abstract value is obtained by adding to the old one
the following information.
\begin{itemize}
\item The cyclicity information about $v$ contained in $\abselemfrc$
  affects $\abselemfrc'$ in two ways: the path-formula is ``copied''
  into both $\abselemfc'(\res)$ and $\abselemfr'(\res,\res)$
  indicating that the cyclicity of $v$ implies that (1) the data
  structure reachable from $v.\field$ is still possibly cyclic; and
  (2) it is also possible that the location pointed to by $v.\field$
  is part of the cycle, so that it can be reachable from itself.  The
  corresponding path-formula is copied as it is: the new path-formula
  $\abselemfc'(\res) = \abselemfr'(\res,\res)$ is not greater than
  $\abselemfc(v)$ because the set of cycles reachable from $v.\field$
  is a subset of those reachable from $v$, so that they will satisfy
  the same condition (i.e., p-satisfy the same path-formula); on the
  other hand, it cannot be smaller because, by soundness, it is not
  possible to refine the condition.  Note that the definition also
  works if $\abselemfc(v) = \false$: in this case, $v$ was guaranteed
  to be acyclic, so that $\res$ is still guaranteed to be acyclic.
\item The cyclicity of all the other variables is not modified:
  $\abselemfc'(w) = \false$ implies that the final cyclicity
  information $\abselemfc(w) \sqcupfc \abselemfc'(w)$ for any $w$ is
  still the old $\abselemfc(w)$.
\item If $v$ can $\pformula$-reach some $w$ in $\abselemfrc$, then
  $v.\field$ can also reach $w$ since it could be exactly on the path
  from $v$ to $w$.  Therefore, $\abselemfr'(\res,w)$ is set to
  $\pformula'$ where $\pformula'$ is obtained from $\abselemfr(v,w)$
  via $\ominus$ (Section \ref{sec:operationsOnPathFormulae}).  This
  means that it is no longer possible to guarantee that $\field$ will
  be traversed by a path from $\statef{\statesym}(\res)$ to
  $\statef{\statesym}(w)$.
\item Every $w$ possibly deep-sharing with $v$ may reach $\res$.  In
  fact, deep-sharing means that there is a location which is reachable
  from both, and such a location could be exactly the one pointed to
  by $\res$.  In this case, the corresponding path-formula is $\true$
  because it is not possible to put any condition on paths
  (deep-sharing as it is used by $\FIELDBASED$ is field-insensitive).
  On the contrary, if $w$ and $v$ do not deep-share, then the
  following information is added: if $w$ $\pformula$-reaches $v$, then
  $\abselemfr'(w,\res)$ is set to $\pformula \odot
  \onlyfieldsp{\field}$, indicating that any new path from
  $\statef{\statesym}(w)$ to $\statef{\statesym}(\res)$ will traverse
  $\field$ (old paths are already accounted for by $\abselemfr$).
  Note that $\false \odot \pformula = \false$, so that any $w$ not
  reaching $v$ will not reach $\res$, as expected.
\end{itemize}

\begin{example}[(field access)]
  \label{ex:abs-sem-exp}
  Consider the heap depicted in the right-hand side of the figure,
  which is the result of executing the program on the left-hand side.
  \begin{center}
    \begin{minipage}{5cm}
      \begin{lstlisting}
  x := new $\class$;
  y := new $\class$;
  x.f := y;
  x.g := new $\class$;
  z := new $\class$;
  y.h := z;
  y := null;
  x := x.f;
      \end{lstlisting}
    \end{minipage}
    \begin{minipage}{5cm}
      \begin{tikzpicture}
        \node (v) at (2,1.2) {$\xx$};
        \node (w) at (4,1.2) {$\zz$};
        \node[rectangle,draw] (l1) at (0,0) {$o_1:\class$};
        \node[rectangle,draw] (l2) at (2,0.5) {$o_2:\class$};
        \node[rectangle,draw] (l3) at (4,0.5) {$o_3:\class$};
        \node[rectangle,draw] (l4) at (2,-0.5) {$o_4:\class$};
        \draw[dotted] (v) -- (l2);
        \draw[dotted] (w) -- (l3);
        \draw[->] (l1) -- node[above] {\ffield} (l2);
        \draw[->] (l2) -- node[above] {\hfield} (l3);
        \draw[->] (l1) -- node[above] {\gfield} (l4);
      \end{tikzpicture}
    \end{minipage}
  \end{center}
  Let the abstract value $\abselemfrc^7$ computed by the analysis
  after line 7 be such that $\abselemfr^7(\xx,\zz) =
  \onlyfields{\fproposition{\ffield}, \fproposition{\hfield}}$ (the
  computation of such an abstract value will be explained in Example
  \ref{ex:abs-sem-com}).  The new path-formula $\abselemfr^8(\xx,\zz)$
  after line 8 is obtained by $\ominus$: first,
  $\abselemfr^7(\res,\zz)$ is updated with $\abselemfr^7(\xx,\zz)
  \ominus \onlyfieldsp{\ffield}$; afterward (see case $2_c$ in Figure
  \ref{fig:abs-sem-com}), the new path-formula is copied to
  $\abselemfr^8(\xx,\zz)$.  The final path-formula will be
  $\fproposition{\hfield} \wedge \bigwedge \{ \lnot
  \fproposition{\field}~|~\field {\notin} \{ \ffield, \hfield \} \}$
  since it is no longer possible to guarantee that all paths from
  $\xx$ to $\zz$ traverse $\ffield$.  Anyway, note that $\ffield$ does
  not appear as a negative literal either, as it is still possible
  that some path traverses it.
\end{example}

\subsubsection*{Method call}

Finally, case $7_e$ in Figure \ref{fig:abs-sem-meth} describes the
behavior of $\FIELDBASED$ on method calls.  Note that methods without
return value are not included in the language; however, they could be
easily dealt with by slightly modifying this case.  As usual in
Object-Oriented programs, a reference variable $v$ with declared type
$\dectype(v) = \class$ may store at runtime any object of type
$\class'{\subclasseq}\class$.  The set of possible runtime types of
$v$ can be computed statically by \emph{class analysis}
\cite{SpotoJ03} whenever needed; if such an analysis is not available,
then it can be taken, conservatively, as $\{
\class~|~\class{\subclasseq} \dectype(v) \}$.  Abstract values
$\abselemfrc'$ and $\abselemsp'$ are obtained by restricting the
corresponding initial values to the actual parameters $\bar{v}$ of
$\method$.  Also, $\abselemfrc''$ and $\abselemsp''$ come from
applying the denotation of $\methodsig$ for, resp.,
reachability/cyclicity (see Section \ref{sec:fixpoint}) and
deep-sharing/purity (which is taken as pre-computed information, see
Section \ref{sec:auxiliaryAnalyses}).

For every two actual parameters $v_i$ and $v_j$, the non-purity of
$v_i$ implies that it is possible to create a path in $\methodsig$
from $v_i$ (or any $w_1$ reaching it or deep-sharing with it) to $v_j$
(or any $w_2$ reachable from it).  This is taken into account by
$\abselemfr^{ij}$ (note that there is an abstract value
$\abselemfr^{ij}$ for every pair of parameters $(v_i,v_j)$, and all
the values are combined into $\abselemfr'''$), and happens because
modifying the data structure pointed to by $v_i$ during the execution
of $\method$ can possibly create a new path from $w_1$ to $w_2$.  Four
cases (plus an ``otherwise'' fifth case) are considered, depending on
whether deep-sharing between $w_1$ and $v_i$ before the call or
between $v_i$ and $v_j$ after the call is possible; all cases apply
only if $v_j$ may reach $w_2$\footnote{Actually, in the first and
  third case there is no need of such a condition since $\false$ is
  the absorbing element of $\odot$; for instance, in the first case,
  $\abselemfr(w_1,v_i) \odot \abselemfr''(v_i,v_j) \odot \false$ comes
  to be $\false$ anyway.} and $v_i$ is not pure.  Basically, each of
the four cases deals with the different scenarios with respect to
deep-sharing: since this property is not field-sensitive, some field
information about paths is lost whenever two variables may deep-share,
since there can be complex paths in the heap for which no
field-sensitive information is available (and which may be hidden from
reachability analysis).
\begin{itemize}
\item The first case models a scenario where all new paths from $w_1$
  to $v_i$ and from $v_i$ to $v_j$ are captured by reachability
  path-formul\ae, so that it is possible to say that any path from
  $w_1$ to $w_2$ must traverse a sub-path captured by
  $\abselemfr(w_1,v_i)$, then another sub-path captured by
  $\abselemfr''(v_i,v_j)$ and, finally a third one captured by
  $\abselemfr(v_j,w_2)$.  In order to account for this situation, the
  path-formula $\abselemfr(w_1,v_i) \odot \abselemfr''(v_i,v_j) \odot
  \abselemfr(v_j,w_2)$ is returned.
\item The second case models a partial loss of information due to
  possible deep-sharing between $v_i$ and $v_j$.  In this case, it is
  still possible to say that the first part of any new path from $w_1$
  to $w_2$ is captured by $\abselemfr(w_1,v_i)$, but nothing can be
  inferred about the rest of the path.
\item The third case is dual: here, the loss of information occurs in
  the first part of the path, since there is no field-sensitive
  information about paths starting from $w_1$ and reaching $v_i$.
\item Finally, the fourth case happens when deep-sharing is possible
  both between $w_1$ and $v_i$ and between $v_i$ and $v_j$.  In this
  case, reachability from $w_1$ to $w_2$ must be admitted as a
  possibility, but no field-sensitive information can be gathered, so
  that $\true$ has to be returned.
\end{itemize}

The following example describes the loss of information due to
deep-sharing.

\begin{example}[(method call)]
  \label{ex:methodCall}
  Consider the figure below (only solid lines): let $v_i$ and $v_j$ be
  parameters of $\method$ which are initially not sharing; let $w_1$
  (which is not a parameter) deep-share with $v_i$, and $v_j$ reach
  $w_2$.
  \begin{center}
    \begin{tikzpicture}
      \tikzstyle{objnode} = [rectangle,draw,inner sep=1pt]
      \node (w1) at (0,0.7) {$w_1$};
      \node (vi) at (3,0.7) {$v_i$};
      \node (vj) at (5,0.7) {$v_j$};
      \node (w2) at (8,0.7) {$w_2$};
      \node[objnode] (l1) at (0,0) {$o_1:\class$};
      \node[objnode] (l2) at (3,0) {$o_2:\class$};
      \node[objnode] (l3) at (1.5,-0.8) {$o_3:\class$};
      \node[objnode] (l4) at (5,0) {$o_4:\class$};
      \node[objnode] (l5) at (8,0) {$o_5:\class$};
      \node[objnode] (l6) at (6.5,-0.8) {$o_6:\class$};
      \draw[dotted] (w1) -- (l1);
      \draw[dotted] (vi) -- (l2);
      \draw[dotted] (vj) -- (l4);
      \draw[dotted] (w2) -- (l5);
      \draw[->] (l1) -- node[below] {\ffield} (l3);
      \draw[->] (l2) -- node[above] {\ffield} (l3);
      \draw[->] (l4) .. controls (6.5,0.5) .. node[above] {\gfield} (l5);
      \draw[->] (l4) -- node[above] {\ffield} (l6); 
      \node (x) at (1.5,-0.1) {\xx};
      \node (y) at (6.5,-0.1) {\yy};
      \draw[->,dashed] (l3) -- node[above] {\ffield} (l6);
      \draw[->,dashed] (l6) -- node[below] {\ffield} (l5);
      \draw[dotted] (x) -- (l3);
      \draw[dotted] (y) -- (l6);
    \end{tikzpicture}
  \end{center}
  If the instructions \lstinline!x:=$v_i$.f; y:=$v_j$.f; x.f:=y!  are
  executed in $\method$, then $v_i$ and $v_j$ become deep-sharing.
  Moreover, executing \lstinline!y.f:=$v_j$.g! creates a path from
  $w_1$ to $w_2$, depicted by dashed lines in the figure above.  Note
  that (1) the new deep-sharing between $v_i$ and $v_j$ is reflected
  by the denotation of $\method$; (2) there is never any reachability
  between $v_i$ and $v_j$, so that field-sensitive information is not
  available outside $\method$ (i.e., when applying its denotation);
  and (3) there is no way to create the path unless $v_j$ is reaching
  $w_2$.  This situation falls into the fourth case of the semantics,
  so that $\true$ is returned as the reachability from $w_1$ to $w_2$.
\end{example}

Note that, in the example, $o_3$ happens to be a \emph{cutpoint}
\cite{RinetzkyBRSW05}, i.e., an object which is (a) reachable from a
parameter of $\method$ in at least one step; and (b) also reachable by
traversing a path which does not include any object which is reachable
from any parameter of $\method$.  The existence of a cutpoint is
possible in the third and fourth case above, where $\SHARE{w_1}{v_i}
\in \abselemsp$.  Techniques similar to \cite{RinetzkyBRSW05} can be
used to deal with such cases; alternatively, the analysis could be
limited to cutpoint-free programs \cite{KreikerRRSWY13}, since
cutpoint-freeness is a decidable property.

Observation (3) in Example \ref{ex:methodCall} shows that the
reachability between two variables $w_1$ and $w_2$ will be certainly
taken into account by some of the $\abselemfr^{ij}$, namely, the ones
where $v_i$ and $v_j$ become sharing (in the normal sense, not
deep-sharing), and $w_1$ shares with $v_i$, and $v_j$ reaches $w_2$.

The next step is to propagate the information about some $v_i$ in
$\abselemfr'''$ to $\res$ whenever $v_i$ (standard-)shares with $\res$
after the call.  This is needed in order to take into account some
cases similar to Example \ref{ex:rhoCloning}.

\begin{example}[(return value)]
  \label{ex:rhoCloning}
  Consider the following fragment:

  \begin{minipage}{7cm}
    \begin{lstlisting}
  $\class$ m() { 
    $\class$ a;  a := new $\class$;
    a.f := this.f;
    return a; }
    \end{lstlisting}
  \end{minipage}
  \begin{minipage}{5cm}
    \begin{lstlisting}[firstnumber=5]
  x1.g := y;
  x2.f := x1;
  z := x2.m();
    \end{lstlisting}
  \end{minipage}

  \noindent
  The abstract semantics computes the formula $\pformula =
  \onlyfields{\fproposition{\ffield},\fproposition{\gfield}}$ for the
  reachability from \xtwo to \yy after line 6.  After line 7, \zz
  should be reaching \yy, but this cannot be taken into account by the
  denotation of \mm since \yy is not a parameter.  In fact, the
  denotation of \mm can only detect that $\res$ and \lstinline!this!
  share at the end of \mm, i.e., that \zz and \xtwo share after the
  call.  The only way to be sound here (note that \lstinline!this! is
  pure, so that this situation is not detected by any
  $\abselemfr^{ij}$) is to copy the reachability from \xtwo to \yy to
  the reachability from $\res$ to \yy, which is in turn copied into
  $\abselemfr(\zz,\yy)$.  However, this is still unsound because the
  condition on paths could change: the two cases for $\pformula_k$ in
  the abstract semantics account for the different scenarios: (1) if
  $\res$ reaches \lstinline!this!, then the new path is obtained by
  using $\odot$; (2) if \lstinline!this! reaches $\res$, then
  $\ominus$ is used; or (3) if they only deep-share, as in the
  example, then no information can be gathered, and $\true$ is
  returned.
\end{example}

As for cyclicity, each $\abselemfc^i$ deals with cases where a cycle
is built in $\method$ which is reachable from $v_i$, and $w$ was
sharing\footnote{Note that standard sharing is the disjunction between
  deep-sharing and both directions of reachability.} with $v_i$; in
this case, $w$ also becomes possibly cyclic.  The rest of the
treatment of cyclicity is similar to reachability.

 % abs-semantics-fig-com

\begin{figure}[t]
    \begin{minipage}{12cm}
      \[
      \small
      \begin{array}{@{}l@{~}rl@{}}
        (1_c) & 
        \ASEMANTICS{\typenv}{\absinterp}{\skipc}(\abselemfrc) = & \abselemfrc
        \\[1mm]
        (2_c) & 
        \ASEMANTICS{\typenv}{\absinterp}{v\assign
          \mathit{exp}}(\abselemfrc) = & 
        (\exists
        v.\EXPASEMANTICS{\typenv}{\absinterp}{\mathit{exp}}(\abselemfrc))[\res/v] 
        \\[1mm]
        (3_c) & 
        \ASEMANTICS{\typenv}{\absinterp}{v.\field\assign
          \mathit{exp}}(\abselemfrc) =&
        \exists\res.(\abselemfrc' \sqcupf \abselemfr'' \sqcupf
        \abselemfc'') ~\mbox{where} \\
        & \abselemfrc' = &
        \EXPASEMANTICS{\typenv}{\absinterp}{\mathit{exp}}(\abselemfrc)
        \\
        &
        \abselemfr''(w_1,w_2) = & \abselemfr'(w_1,v) \odot
        (\onlyfieldsp{\field}{\vee}(\onlyfieldsp{\field}\odot\abselemfr'(\res,v)))
        \odot \abselemfr'(\res,w_2)
        \\
        & \abselemfc''(w) = & \left \{
          \begin{array}{l@{\qquad\qquad}l}
            \left(\abselemfr'(\res,v) \odot \onlyfieldsp{\field}\right)
            \vee \abselemfc(\res)&
            \mbox{if}~\abselemfr'(w,v) \neq \false \\
            \false & \mbox{otherwise}
          \end{array}
        \right.
        \\[1mm]
        (4_c) & \ASEMANTICS{\typenv}{\absinterp}{
          \begin{array}{@{}rl@{}}
            \ifte~\mathit{exp} & \iftethen~\mathit{com}_1 \\
            & \ifteelse~\mathit{com}_2 
          \end{array}
        }
        (\abselemfrc) = & 
        \abselemfrc^1 \sqcupf \abselemfrc^2 ~ \mbox{where} \\
        & & \abselemfrc^1 =
        \ASEMANTICS{\typenv}{\absinterp}{\mathit{com}_1}(\abselemfrc) \\
        & & \abselemfrc^2 =
        \ASEMANTICS{\typenv}{\absinterp}{\mathit{com}_2}(\abselemfrc) \\[1mm]
        (5_c) &
        \ASEMANTICS{\typenv}{\absinterp}{\while~\mathit{exp}~\whilebody~\mathit{com}}(\abselemfrc)
        =& \absden(\abselemfrc) \mbox{ where } \absden \mbox{ is the
          least fixpoint of}\\
        & & 
        \lambda w. \lambda
        \abselemfrc.w(\ASEMANTICS{\typenv}{\absinterp}{\mathit{com}}(\abselemfrc))
        \\[1mm]
        (6_c) & \ASEMANTICS{\typenv}{\absinterp}{\mathit{com}_1; \mathit{com}_2}( \abselemfrc) =&
        \ASEMANTICS{\typenv}{\absinterp}{\mathit{com}_2}(\ASEMANTICS{\typenv}{\absinterp}{\mathit{com}_1}(\abselemfrc))
        \\[1mm]
        (7_c) & \ASEMANTICS{\typenv}{\absinterp}{\return~\mathit{exp}}(\abselemfrc) =& 
        \EXPASEMANTICS{\typenv}{\absinterp}{\mathit{exp}}(\abselemfrc)[\res/\out]
        \\
      \end{array} 
\]
    \end{minipage}

  \caption{The abstract semantics for commands}
  \label{fig:abs-sem-com}
\end{figure}

\subsection{Commands}
\label{sec:abs-sem-com}

Figure \ref{fig:abs-sem-com} shows the behavior of the abstract
semantics $\ASEMANTICS{\typenv}{\absinterp}{\_}$ on commands.  Easy
cases are considered first, leaving field update at the end.

Case $1_c$ is trivial.  Case $2_c$ for variable assignment is also
easy: the semantics evaluates $\mathit{exp}$, which could have side
effects, and copies the information about $\res$ to $v$, after
removing the information about $v$ since its initial value will be
lost.  Note that the information about \emph{the location pointed to
  by} $v$ needs not be lost, since there could be other variables
pointing to it.

In cases $4_c$ and $5_c$, standard principles for the design of
abstract semantics are followed.  Both branches of the conditional are
analyzed\footnote{Recall that guards have no side effects.};
$\ASEMANTICS{\typenv}{\absinterp}{\_}$ is path-insensitive in that the
results obtained for each branch are simply combined by means of
$\vee$, applied point-wise.  In the case of loops, standard fixpoint
design is used.  Termination of the fixpoint is guaranteed by the fact
that $\domfrc$ does not allow infinite ascending chains $\langle
\pformula_0, .., \pformula_i, ..\rangle$ where $\pformula_j
\sqsubseteqfrc \pformula_{j+1}$ for each $j \geq 0$.  However, in
principle, there can be chains whose length is exponential on the
cardinality of $\fields$, as discussed in Section
\ref{sec:domainComparison}, so that convergence can be slow unless
some mechanism for speeding it up is used (e.g., some \emph{widening}
\cite{Cousot79} operator mapping path-formul\ae~to $\true$ whenever
abstract values have been updated more than $k$ times for some fixed
number $k$).

The last two cases, $6_c$ and $7_c$, are straightforward.

\subsubsection*{Field update}

In $v.\field\assign\mathit{exp}$, the heap is modified, and new paths
can be created.  In particular, a path $\hpath$ from the location
pointed to by $v$ to the location pointed to by $\rho$ is created,
which p-satisfies $\fproposition{\field}$.  The information about
$\hpath$ must be \emph{joined} with the original abstract value by
means of $\sqcupf$.

The abstract semantics focuses on two kinds of variables: the first
kind, \emph{pre-variables}, contains those $w_1$ which can reach $v$.
The second kind of variables, \emph{post-variables}, contains those
$w_2$ which can be reached from $\res$ after $\mathit{exp}$ has been
evaluated, and the abstract value $\abselemfrc'$ has been computed.
Clearly, $v$ is a pre-variable since it reaches itself.  Moreover, a
variable may belong to both kinds; in this case, it will be considered
twice.  The new reachability information $\abselemfr''$ must take into
account paths from all pre-variables $w_1$ to all-post variables
$w_2$, due to the creation of $\hpath$.  The new paths certainly
p-satisfy $\abselemfr'(w_1,v) \odot
(\onlyfieldsp{\field}{\vee}(\onlyfieldsp{\field}\odot\abselemfr'(\res,v)))
\odot \abselemfr'(\res,w_2)$ (see Lemma
\ref{lemma:odotAndPathConcatenation}).

Note that both the newly-created path (represented by
$\onlyfieldsp{\field}$) and the possible cycle (represented by
$\onlyfieldsp{\field}\odot\abselemfr'(\res,v)$) which is created by
the update are considered; this will become more clear when discussing
the example in Section \ref{sec:backExample2}.  Note also that this
formula will \emph{not} necessarily be the final reachability
information about $w_1$ and $w_2$.  In fact, suppose that there
existed another path from $w_1$ to $w_2$, completely disjoint from the
new one, and p-satisfying $\pformulag$: in this case, the final
reachability between $w_1$ and $w_2$ will be $\pformulag \vee
(\abselemfr'(w_1,v) \odot
(\onlyfieldsp{\field}{\vee}(\onlyfieldsp{\field}\odot\abselemfr'(\res,v)))
\odot \abselemfr'(\res,w_2))$, due to the use of $\sqcupf$.

The cyclicity information comes from two cases: the definition of
$\abselemfc''$ considers cycles created by closing existing paths from
$\res$ to $v$.  Every $w$ reaching $v$ will become possibly cyclic,
and the associated path-formula will be the combination of the old
path with $\fproposition{\field}$.  Note that the conditions on the
cyclicity of $w$ do not need to take into account $\abselemfr'(w,v)$,
since, in general, it is not in the cycle (if it is, then this
information is already contained in $\abselemfc'$).

\begin{example}[(field update)]
  \label{ex:abs-sem-com}
  Consider the code of Example \ref{ex:abs-sem-exp}: it is easy to see
  that the abstract value $\abselemfrc^5$ after line 5 is such that
  $\abselemfr^5(\xx,\yy) = \onlyfieldsp{\ffield}$.  Moreover,
  $\abselemfr^5(\xx,\xx) = \abselemfr^5(\yy,\yy) =
  \abselemfr^5(\zz,\zz) = \pformulaempty$, and $\abselemfr^5(\_,\_) =
  \false$ everywhere else.  Such a value is obtained by observing that
  neither $\xx$ nor $\yy$ reach any other variable before line 5.
  Consequently, $\abselemfr^5(\xx,\yy)$ comes to be $\pformulaempty
  \odot \onlyfieldsp{\ffield} \odot \pformulaempty =
  \onlyfieldsp{\ffield}$ since $\pformulaempty$ is the neutral element
  for $\odot$.  Similarly, $\abselemfrc^6(\yy,\zz) =
  \onlyfieldsp{\hfield}$.  The path-formula $\abselemfrc^6(\xx,\zz) =
  \abselemfrc^7(\xx,\zz)$ is computed as $\onlyfieldsp{\ffield} \odot
  \onlyfieldsp{\hfield} = \onlyfields{\fproposition{\ffield},
    \fproposition{\hfield}}$, and its only model is $\{
  \fproposition{\ffield}, \fproposition{\hfield} \}$.
\end{example}

\subsection{Global fixpoint}
\label{sec:fixpoint}

The following definition defines the abstract denotational semantics
of a program $P$ as the \emph{least fixpoint} (\lfp) of an (abstract)
transformer of interpretations.  Variables $\bar{u}$ play the role of
\emph{shallow variables}.  Note that shallow variables appear at the
level of the semantics, rather than as a result of program
transformation; they are introduced in order to keep track of the data
structures to which input variables point at the beginning of a
method, since otherwise they could be lost if the corresponding
variables are updated.

\begin{example}
  Consider the following method $\methodsig$ (lines 1--4), invoked in
  line 5:

  \begin{minipage}{6cm}
    \begin{lstlisting}
  $\class$  mth($\class$ x1, $\class$ x2) {
    x1.f := x2;
    x1 := null;
    return x2; }
    \end{lstlisting}
  \end{minipage}
  \begin{minipage}{6cm}
    \begin{lstlisting}[firstnumber=5]
  z := mth(y1,y2);
    \end{lstlisting}
  \end{minipage}

  \noindent 
  Here, the reachability from the first to the second parameter of
  $\methodsig$ after its execution would be lost if a copy of them is
  not maintained as a shallow variable.  On the other hand, copying
  the information about \lstinline!x1! and \lstinline!x2! into shallow
  variables $u_1$ and $u_2$ allows detecting that $u_1$ is reaching
  $u_2$ and $\res$ at the end of $\methodsig$.  Afterward, this is
  copied back to actual parameters \lstinline!y1! and \lstinline!y1!
  (line 5) in order to be available after the call.
\end{example}

\begin{definition}
  \label{def:abs-den-semantics}
  The \emph{abstract denotational semantics} of a program $P$ is the
  \lfp of
  \[
  \abstp{P}(\absinterp) = \left\{ \methodsig\mapsto\lambda \abselemfrc
  \in \domrc{\inp{\methodsig}} \left(\project
      {\ASEMANTICS{\typenv}{\absinterp}{\body{\methodsig}}\left(
        \abselemfrc[\bar{w}+\bar{u}]\right)}
      {X}\right)[\bar{u}/\bar{w}] ~|~ \methodsig\in P \right\}
  \]
  
  \noindent
  where $\inp{\methodsig} = \{\this,w_1,\ldots,w_n\}$, and $\bar{u}$
  is a variable set $\{u_1,\ldots,u_n\}$ such that $\bar{u}\cap
  \scope{\methodsig}=\emptyset$; moreover, $\dom(\tau)=
  \locals{\methodsig}\cup \bar{u}$, and
  $X=\dom(\tau){\setminus}(\bar{u}\cup \{\this,\out\})$.
\end{definition}

\noindent
The operator $\abstp{P}$ is quite standard, and transforms the
interpretation $\absinterp$ by assigning a new denotation for each
$\methodsig$ defined in $P$, using existing denotations from
$\absinterp$.
The new denotation maps a given input value
$\abselemfrc\in\domfrctau{\inp{\methodsig}}$ to an output value from
$\domfrctau{\inp{\methodsig}\cup\{\out\}}$, as follows:
\begin{enumerate}
\item it obtains an abstract value $\abselemfrc^0=
  \abselemfrc[\bar{w}+\bar{u}]$ in which the parameters $\bar{w}$ are
  cloned into the shallow variables $\bar{u}$;
\item it applies the denotation of the code of $\methodsig$ to
  $\abselemfrc^0$, getting
  $\abselemfrc^1=\ASEMANTICS{\typenv}{\absinterp}{\body{\methodsig}}(\abselemfrc^0)$;
\item all variables but $\bar{u}\cup\{\this,\out\}$ are eliminated
  from $\abselemfrc^1$ (using $\exists X$); and
\item shallow variables $\bar{u}$ are finally renamed back to
  $\bar{w}$.
\end{enumerate}

\subsection{Soundness}
\label{sec:soundness}

This section discusses the soundness of \FIELDBASED.  Consider a
command $C$: soundness amounts to say that, for every initial state
$\statesym$ correctly represented by the initial abstract value (i.e.,
such that $\statesym \in \gammafrc(\abselemfrc)$), the final concrete
state $\statesym^* = \cinter{\typenv}{\interp}{C}(\statesym)$ is such
that $\statesym^* \in \gammafrc(\abselemfrc^*)$ where $\abselemfrc^* =
\ASEMANTICS{\typenv}{\absinterp}{C}(\abselemfrc)$.  This means that
every possible path from $w_1$ to $w_2$ created by $C$, and traversing
a set $X$ of fields, has to be reflected by a model $\finterpretation
= \{ \fproposition{\field}~|~\field\in X \}$ of
$\abselemfrc^*(w_1,w_2)$.  A similar observation holds for cyclicity.
Most of the evidence for soundness has been given while discussing the
abstract semantics; this section summarizes and completes the proof.
Only the most interesting cases are presented.

\subsubsection*{Field update, reachability}

Consider a field update $v.\ffield \assign v'$.  Given two variables
$w_1$ and $w_2$, every path $\hpath$ from $\statef{\statesym}(w_1)$ to
$\statef{\statesym}(w_2)$ is either an old one (already in the heap
before the update) or a newly-created one traversing $\ffield$ at
$\statef{\statesym}(v)$.  If $\hpath$ was already in the heap, then,
by hypothesis, it is represented by $\abselemfr$, and also by
$\abselemfr^*$ since $\abselemfr^* \sqsupseteqfr \abselemfr$.  On the
other hand, if $\hpath$ is new, then it is the concatenation of the
following sub-paths:
\begin{itemize}
\item a path $\hpath_1$ from $\statef{\statesym}(w_1)$ to
  $\statef{\statesym}(v)$, which, by hypothesis, is correctly
  represented by the initial abstract value $\abselemfr(w_1,v)$;
\item one of the following: either the path $\hpath'$ of length 1
  going from $\statef{\statesym}(v)$ to $\statef{\statesym}(v')$ and
  p-satisfying $\onlyfieldsp{\ffield}$; or the path $\hpath''$ going
  from $\statef{\statesym}(v)$ to $\statef{\statesym}(v')$, then to
  $\statef{\statesym}(v)$ and back to $\statef{\statesym}(v')$, and
  p-satisfying $\onlyfieldsp{\ffield} \odot \abselemfr(v',v)$; and
\item a path $\hpath_2$ from $\statef{\statesym}(v')$ to
  $\statef{\statesym}(w_2)$, which is correctly represented by
  $\abselemfr(v',w_2)$.
\end{itemize}

\noindent
The alternative of $\hpath''$ has to be taken into account because it
is possible that the field update closes a cycle, i.e., that there was
already a path from $\statef{\statesym}(v')$ to
$\statef{\statesym}(v)$ which has now become a cycle from
$\statef{\statesym}(v')$ to $\statef{\statesym}(v')$.  In this case,
there is a path from $\statef{\statesym}(w_1)$ to
$\statef{\statesym}(w_2)$ which goes until $\statef{\statesym}(v)$ and
$\statef{\statesym}(v')$, then traverses the cycle until reaching
$\statef{\statesym}(v')$ again, and finally reaches
$\statef{\statesym}(w_2)$.  Note that $\false$ is the absorbing
element for $\odot$, so that $\hpath''$ will be guaranteed to
p-satisfy $\false$ (i.e., not to exist) if the initial information was
able to exclude paths from $\statef{\statesym}(v')$ to
$\statef{\statesym}(v)$ (i.e., if $\abselemfr(v',v) = \false$).

\begin{example}
  Consider the following heap before executing $v.\ffield \assign v'$
  (which is the dashed line).  After the field update, $w_2$ is
  reachable from $w_1$ either directly or ``touching'' $o_5$ any
  number of times.  Note that a path touching $o_5$ once p-satisfies
  the same path-formula as one touching $o_5$ more than once.
  Reachability from $w_1$ to $w_2$ p-satisfies $\onlyfieldsp{\ffield}
  \vee \onlyfields{\fproposition{\ffield},\fproposition{\gfield}}$.
  \begin{center}
    \begin{tikzpicture}
      \tikzstyle{objnode} = [rectangle,draw,inner sep=1pt]
      \node (w1) at (0,0) {$w_1$};
      \node (v) at (2,0) {$v$};
      \node (vp) at (4,0) {$v'$};
      \node (w2) at (6,0) {$w_2$};
      \node[objnode] (o1) at (0,-0.7) {$o_1:\class$};
      \node[objnode] (o2) at (2,-0.7) {$o_2:\class$};
      \node[objnode] (o3) at (4,-0.7) {$o_3:\class$};
      \node[objnode] (o4) at (6,-0.7) {$o_4:\class$};
      \node[objnode] (o5) at (3,-1.7) {$o_5:\class$};
      \draw[dotted] (w1) -- (o1);
      \draw[dotted] (v) -- (o2);
      \draw[dotted] (vp) -- (o3);
      \draw[dotted] (w2) -- (o4);
      \draw[->] (o1) -- node[above] {$\ffield$} (o2);
      \draw[->] (o3) -- node[above] {$\ffield$} (o4);
      \draw[->,dashed] (o2) -- node[above] {$\ffield$} (o3);
      \draw[->] (o3) -- node[right] {$\gfield$} (o5);
      \draw[->] (o5) -- node[left] {$\gfield$} (o2);
    \end{tikzpicture}
  \end{center}
\end{example}

It is easy to see that both cases ($\hpath'$ and $\hpath''$) are dealt
with by the definition of $\abselemfr''$ in case $(3_c)$ of Figure
\ref{fig:abs-sem-com}.  This follows from the definition of $\odot$
and Lemma \ref{lemma:odotAndPathConcatenation}.

It is worth noting that the field update could break some old paths.
Conservatively, this is not taken into account by the semantics: the
removal of a path from a heap can never imply that a path-formula is
no longer p-satisfied, so that to update abstract values is not
needed.  More formally, it is possible to consider a field update as
the combination of two operations: first set $v.\ffield$ to $\semnil$,
then assigning it to $v'$.  Let $\statesym$ and $\statesym^-$ be two
states such that the heap of $\statesym^-$ is obtained by breaking
(e.g., by setting some object fields to $\semnil$) one or more paths
in the heap of $\statesym$; then, $\statesym \in
\gammafrc(\abselemfrc)$ implies $\statesym^- \in
\gammafrc(\abselemfrc)$.  This choice is sound but could lead, in some
cases, to losing precision.  A possible improvement (already discussed
by \cite{GenaimZ13} for \REACHBASED) would be to distinguish cases
where the class of $v$ only declares the field $\ffield$; in this
case, to update $\ffield$ would be guaranteed to break all existing
paths starting from $\statef{\statesym}(v)$, and this information
could be used in order to set all $\abselemrc(v,w)$ to $\false$ for
every $w$ before computing the final $\abselemfrc^*$.

\subsubsection*{Field update, cyclicity}

As for cyclicity, the initial information $\abselemfc(w)$ about a
variable $w$ is updated only if $w$ can possibly reach $v$.  In fact,
new cycles reachable from $w$ can appear only if (1) an existing cycle
is made reachable from $v$ by the field update; or (2) a new cycle is
created which touches both $\statef{\statesym}(v)$ and
$\statef{\statesym}(v')$.  In both cases, the cycle under study will
not be reachable from $w$ unless $v$ is reachable from $w$: (1) in the
first case, because the new path from $w$ to the cycle has to touch
$\statef{\statesym}(v)$; and (2) in the second case, because a
variable reaching a cycle must reach every single heap location
belonging to the cycle itself.

\begin{example}
  Consider the following heap before the field update (dashed line):
  the cycle which will be created will not be reachable from $w$, even
  though it deep-shares with (but does not reach) $v$.
  \begin{center}
    \begin{tikzpicture}
      \tikzstyle{objnode} = [rectangle,draw,inner sep=1pt]
      \node (w) at (0,0) {$w$};
      \node (v) at (2,0) {$v$};
      \node (vp) at (4,0) {$v'$};
      \node[objnode] (o1) at (0,-0.7) {$o_1:\class$};
      \node[objnode] (o2) at (2,-0.7) {$o_2:\class$};
      \node[objnode] (o3) at (4,-0.7) {$o_3:\class$};
      \node[objnode] (o4) at (6,-0.7) {$o_4:\class$};
      \node[objnode] (o5) at (3,-1.7) {$o_5:\class$};
      \node[objnode] (o6) at (1,-1.7) {$o_6:\class$};
      \draw[dotted] (w) -- (o1);
      \draw[dotted] (v) -- (o2);
      \draw[dotted] (vp) -- (o3);
      \draw[->] (o1) -- node[left] {$\gfield$} (o6);
      \draw[->] (o2) -- node[right] {$\gfield$} (o6);
      \draw[dashed,->] (o2) -- node[above] {$\ffield$} (o3);
      \draw[->] (o3) -- node[right] {$\gfield$} (o5);
      \draw[->] (o3) -- node[above] {$\ffield$} (o4);
      \draw[->] (o5) -- node[left] {$\gfield$} (o2);
    \end{tikzpicture}
  \end{center}
\end{example}

On the other hand, reachability from $w$ to $v$ implies that new
cycles can become reachable from $w$.  The path-formula
$\left(\abselemfr(v',v) \odot \onlyfieldsp{\ffield} \right) \vee
\abselemfc(v')$, where $v'$ plays the role of $\res$, accounts exactly
for both kinds of loops discussed above: (1) newly-created loops
p-satisfying $\abselemfr(v',v) \odot \onlyfieldsp{\ffield}$; and (2)
existing loops which were reachable from $v'$ and now are reachable
from $v$ and (by transitivity) from $w$.

\subsubsection*{Method call, reachability}

Consider a method call $v :=
v_0.\methodsig\mbox{\lstinline!(!}v_1..v_k\mbox{\lstinline!)!}$, and
two variables $w_1$ and $w_2$ which are visible from the caller.  Let
the interest be on the reachability from $w_1$ to $w_2$ after
executing $\methodsig$.  Let also $M_i$ be the portion of the heap
which is reachable from some $v_i$: it is clear that nothing will be
modified by $\methodsig$ outside $M = M_0 \cup .. \cup M_k$.

The path-formula describing paths from $w_1$ to $w_2$ which are
possibly created inside $\methodsig$ depends on the position of
$\ell_1 = \statef{\statesym}(w_1)$ and $\ell_2 =
\statef{\statesym}(w_2)$ w.r.t.~$v_0 .. v_k$ and $M_0 .. M_k$.  The
abstract semantics analyzes the position of $\ell_1$ and $\ell_2$ with
respect to each pair $(v_i, v_j)$ separately, and joins the results
together.  The definition of $\abselemfr^{ij}(w_1,w_2)$ in Figure
\ref{fig:abs-sem-meth}), consists of five cases, named (a), (b), (c),
(d), and (e) in the following.  If $\ell_2$ does not belong to $M_j$,
then no path can be created which reaches it, and the abstract
semantics correctly returns $\false$ (case (e)).  Also, $v_i$ must not
be pure, otherwise there is no way to create new paths in
$\methodsig$.  Cases (a), (b), (c), and (d) capture the different
scenarios w.r.t.~$\ell_1$, $v_i$ and $v_j$.

\begin{itemize}
\item If $\ell_1$ reaches $v_i$, but does not deep-share with it, and
  $v_i$ does not deep-share with $v_j$ after the call, then the
  information is completely captured by reachability abstract values,
  i.e., no field-sensitive information is lost.  The formula returned
  in case (a) correctly represent the reachability between $w_1$ and
  $w_2$ since new paths will touch $\statef{\statesym}(v_i)$ (there
  can be new paths not touching $\statef{\statesym}(v_i)$, but they
  will be captured by some different $\abselemfr^{i'j'}$).  Note that
  a new path does not need to touch any other parameter than $v_i$; in
  this case, it will be accounted for, at least, when $j$ coincides
  with $i$.  Note also that $\false$ is the absorbing element for
  $\odot$, so that non-reachability between $w_1$ and $v_i$, or
  between $v_i$ and $v_j$, or between $v_j$ and $w_2$, implies that no
  path is actually created.
\item If $\ell_1$ reaches $v_i$, but does not deep-share with it, and
  $v_i$ may deep-share with $v_j$, then some field-sensitive
  information may be lost after $\methodsig$.  This is reflected by
  case (b), where the second part of the formula is set to $\true$.
  The first part is still the reachability from $w_1$ to $v_i$.
\item If $\ell_1$ deep-shares with $v_i$, and $v_i$ does not
  deep-share with $v_j$, then, again, some field-sensitive information
  may be lost after $\methodsig$.  This is accounted for by case (c),
  where the first part of the formula is $\true$.  On the other hand,
  the second part is the reachability from $v_j$ to $w_2$ since it is
  possible to guarantee that the last part of the new path will follow
  reachability paths between these two variables.
\item Case (d) is trivially sound since $\true$ is returned.
\end{itemize}

\subsubsection*{Method call, cyclicity}

Consider, again, the method call $v :=
v_0.\methodsig\mbox{\lstinline!(!}v_1..v_k\mbox{\lstinline!)!}$ and a
variable $w$ of the caller.  Suppose the interest is on the cyclicity
of $w$ after executing $\methodsig$.  Again, the portion $M = M_0 \cup
..\cup M_k$ of the heap which can be affected by the execution of
$\methodsig$ is the set of locations which are reachable from some of
the $v_i$ in zero or more steps.  The location $\ell =
\statef{\statesym}(w)$ can fall into one of the following cases:
\begin{itemize}
\item $\ell \notin M_i$ and no location of $M_i$ is reachable from
  $\ell$; in this case, no new cycles will be reachable from $w$, and
  there is nothing to prove.
\item $\ell \notin M_i$ and some $v_i$ is reachable from it; in this
  case, suppose that a new cycle is created in the heap by the
  execution of $\methodsig$, which is reachable from $v_i$: soundness
  requires that the cycle has to be also reachable from $w$ after the
  call.  This is satisfied by the abstract semantics since:
  \begin{itemize}
  \item $v_i$ is impure (otherwise, no new cycle can be created);
    and
  \item there is reachability from $w$ to $v_i$, by hypothesis.
  \end{itemize}
  Under these circumstances, the path-formula returned for the
  cyclicity of $w$ will be $\abselemfc''(v_i)$, which is correct since
  every cycle reachable from $v_i$ will be also inferred to be
  reachable from $w$.
\item $\ell \in M_i$: in this case, $\ell$ is reachable from $v_i$
  and, again, the path-formula $\abselemfc''(v_i)$ is a correct
  description of cycles reachable from $w$.  In fact,
  $\abselemfc^i(w)$ is not greater than $\abselemfc''(v_i)$ since the
  portion of the heap which is reachable from $w$ is a subset of the
  portion reachable from $v_i$, so that the same path-formula is
  certainly satisfied; on the other hand, it cannot be smaller
  because, in general, it is not possible to identify any cycle
  reachable from $v_i$ which is not reachable from $w$.
\item $\ell \notin M_i$ and $\SHARE{w}{v_i}$; in this case, the first
  location reachable from both happens to be a \emph{cutpoint}
  \cite{RinetzkyBRSW05}.  Anyway, cycles reachable from $w$ are still
  the same cycles reachable from $v_i$ (this is different
  w.r.t.~reachability because path-formul\ae~for cyclicity ignore the
  acyclic part of the path).
\end{itemize}

 % backExample1
\subsection{Back to the cyclic-tree example}
\label{sec:backExample1}

Consider the code of Section \ref{sec:anExample1}.  Suppose that the
input parameters of \lstinline!join! are represented by the abstract
values depicted on the right-hand side of the figure, probably
inferred by previous steps of the analysis.  Different input abstract
values would yield comparable results.

\noindent
\begin{minipage}{75mm}
  \begin{lstlisting}[firstnumber=6]
  Tree join(Tree l, Tree r) {
    Tree t;  t := new Tree;
    t.left := l;
    t.right := r;
    if (l!=null) then l.parent := t;
    if (r!=null) then r.parent := t;
    return t;
  }
  \end{lstlisting}
\end{minipage}
\begin{minipage}{45mm}
  \begin{tabular}{l@{~}l@{~}l@{~}l@{~}l}
    $\abselemfr(\varl,\varl)$ & $=$ & $\abselemfc(\varl)$ & $=$ &
    $\pformulaempty$ \\
    $\abselemfr(\varr,\varr)$ & $=$ & $\abselemfc(\varr)$ & $=$ &
    $\pformulaempty$ \\
    $\abselemfr(\_,\_)$ & $=$ & $\abselemfc(\_)$ & $=$ & $\false$
    ~~ elsewhere
  \end{tabular}
\end{minipage}

\noindent
This information indicates that both parameters, which will be the
sub-trees of the tree created by \lstinline!join!, represent trees of
at most one node (they could be also $\nil$ pointers representing
$0$-node trees, since the previous analysis could have been overly
conservative).  Next tables show the result of the analysis.
$\abselemfrc^{i} = (\abselemfr^{i},\abselemfc^{i})$ is the abstract
value computed after line $i$, while \pleft, \pright~and
\pparent~stand, resp., for, $\fproposition{\mathsf{left}}$,
$\fproposition{\mathsf{right}}$, and $\fproposition{\mathsf{parent}}$.
\[ \begin{array}{|c|c|c|c|c|c|c|c|c|c|}
  \hline i & \abselemfr^i(\varl,\varl) & \abselemfr^i(\varl,\varr) &
  \abselemfr^i(\varl,\vart) & \abselemfr^i(\varr,\varl) &
  \abselemfr^i(\varr,\varr) & \abselemfr^i(\varr,\vart) &
  \abselemfr^i(\vart,\varl) & \abselemfr^i(\vart,\varr) &
  \abselemfr^i(\vart,\vart) \\\hline
  6 & \pformulaempty & \false & \false & \false & \pformulaempty &
  \false & \false & \false & \false \\\hline
  7 & \pformulaempty & \false & \false & \false & \pformulaempty &
  \false & \false & \false & \pformulaempty \\\hline
  8 & \pformulaempty & \false & \false & \false & \pformulaempty &
  \false & \pleft{\wedge}\lnot \pright{\wedge}\lnot \pparent & \false
  & \pformulaempty \\\hline
  9 & \pformulaempty & \false & \false & \false & \pformulaempty &
  \false & \pleft{\wedge}\lnot \pright{\wedge}\lnot \pparent &
  \pright{\wedge}\lnot \pleft{\wedge}\lnot \pparent & \pformulaempty
  \\\hline
  10 & \pformulaempty{\vee} & \false & \pparent{\wedge}\lnot\pright &
  \false & \pformulaempty & \false & \pleft{\wedge}\lnot\pright &
  \pright{\wedge}\lnot \pleft{\wedge}\lnot \pparent &
  \pformulaempty{\vee} \\
  & (\pparent{\wedge}\pleft{\wedge}\lnot \pright) & & & & & & & &
  (\pparent{\wedge}\pleft{\wedge}\lnot \pright) \\\hline
  11 & \pformulaempty{\vee}(\pparent{\wedge}\pleft) &
  \pparent{\wedge}\pright & \pparent & \pparent{\wedge}\pleft &
  \pformulaempty{\vee} & \pparent &
  \pleft{\wedge}(\pright\rightarrow\pparent) &
  \pright{\wedge}(\pleft\rightarrow\pparent) &
  \pformulaempty{\vee} \\
  & & & & & (\pparent{\wedge}\pright) & & & &
  (\pparent{\wedge}(\pleft{\vee}\pright))
  \\\hline
\end{array} \]

\[ \begin{array}{|c|c|c|c|c|c|c|c|c|c|c|c|c|}
  \hline i & \abselemfc^i(\varl) & \abselemfc^i(\varr) &
  \abselemfc^i(\vart) \\\hline
  6 & \pformulaempty & \pformulaempty & \false \\\hline
  7 & \pformulaempty & \pformulaempty & \pformulaempty \\\hline
  8 & \pformulaempty & \pformulaempty & \pformulaempty \\\hline
  9 & \pformulaempty & \pformulaempty & \pformulaempty \\\hline
  10 & \pformulaempty{\vee}(\pparent{\wedge}\pleft{\wedge}\lnot
  \pright) & \pformulaempty &
  \pformulaempty{\vee}(\pparent{\wedge}\pleft{\wedge}\lnot \pright)
  \\\hline
  11 & \pformulaempty{\vee}(\pparent{\wedge}\pleft) &
  \pformulaempty{\vee}(\pparent{\wedge}\pright) &
  \pformulaempty{\vee}(\pparent{\wedge}(\pleft{\vee}\pright))
  \\\hline
\end{array} \]

\noindent
The tables give an idea of how abstract values are obtained; however,
line 11 alone is enough to appreciate the results.

The final value for $\abselemfc(\varl)$ indicates that a cycle
starting at $\statef{\statesym}(\varl)$ is either an empty path or
one involving both \pparent and \pleft.  In fact, to traverse \pparent
and \pleft is needed to reach $\statef{\statesym}(\vart)$ and go back
to $\statef{\statesym}(\varl)$.  There are also cycles which touch
$\statef{\statesym}(\varr)$ and traverse all fields at least once;
this is also taken into account since the truth assignment $\{
\mathsf{left}, \mathsf{right}, \mathsf{parent} \}$ is still a model of
$\abselemfc(\varl)$.  A similar reasoning holds for
$\abselemfc(\varr)$.

Finally, $\abselemfc(\vart)$ represents all kinds of cycles starting
from $\statef{\statesym}(\vart)$, each of them corresponding to a
model of the path-formula: (1) an empty one; (2) one that reaches
$\statef{\statesym}(\varl)$ at least once, without reaching
$\statef{\statesym}(\varr)$; (3) a dual one which only reaches
$\statef{\statesym}(\varr)$; and (4) one that reaches both
$\statef{\statesym}(\varl)$ and $\statef{\statesym}(\varr)$ at least
once.

It is important to point out that it is not possible to precisely
describe the heap structure by simply inferring a set of fields which
\emph{have} to be traversed by every cycle
\cite{DBLP:conf/cav/BrockschmidtMOG12}.  In fact, $\mathsf{parent}$
is the only field which is involved in every cycle, but this
information alone would not be enough to guarantee that a loop going
from a leaf to the root terminates.

 % backExample2
\subsection{Back to the double-linked-list example}
\label{sec:backExample2}

Consider again the code discussed in Section \ref{sec:anExample2}.
The following annotated code shows the deep-sharing information after
each line upon reaching the fixpoint.

\noindent
\begin{minipage}{\textwidth}
  \begin{lstlisting}
  i := 1;
  tmp := new Node;
  while (i<10) {
    x := new Node;      // $\SHARE{\tmp}{\tmp}$
    x.n := tmp;         // $\SHARE{\xx}{\xx}$,$\SHARE{\xx}{\tmp}$,$\SHARE{\tmp}{\tmp}$
    tmp.p := x;         // $\SHARE{\xx}{\xx}$,$\SHARE{\xx}{\tmp}$,$\SHARE{\tmp}{\tmp}$
    tmp := x;           // $\SHARE{\xx}{\xx}$,$\SHARE{\xx}{\tmp}$,$\SHARE{\tmp}{\tmp}$
    i := i+1;           // $\SHARE{\xx}{\xx}$,$\SHARE{\xx}{\tmp}$,$\SHARE{\tmp}{\tmp}$
  }                     // $\SHARE{\xx}{\xx}$,$\SHARE{\xx}{\tmp}$,$\SHARE{\tmp}{\tmp}$
  \end{lstlisting}
\end{minipage}

\noindent
Next table shows how the present analysis works on this example.
Again, the first column refers to the program line; primed numbers
correspond to the second time a given line is considered;
$\abselemfrc^{i} = (\abselemfr^{i},\abselemfc^{i})$ is the abstract
value computed after line $i$.
\[ \begin{array}{|c@{}|c|c|c|c|c|c|}
  \hline i & \abselemfr^i(\xx,\xx) & \abselemfr^i(\xx,\tmp) &
  \abselemfr^i(\tmp,\xx) & \abselemfr^i(\tmp,\tmp) & \abselemfc^i(\xx)
  & \abselemfc^i(\tmp) \\\hline
  1 & \false & \false & \false & \false & \false & \false \\\hline
  2,3 & \false & \false & \false & \pformulaempty & \false &
  \pformulaempty \\\hline
  4 & \pformulaempty & \false & \false & \pformulaempty &
  \pformulaempty & \pformulaempty \\\hline
  5 & \pformulaempty & \onlyfields{\pnext} & \false &
  \pformulaempty & \pformulaempty & \pformulaempty \\\hline
  6 & \pformulaempty {\vee} \onlyfields{\pnext,\pprev} & \pnext & \pprev &
  \pformulaempty {\vee} \onlyfields{\pnext,\pprev} & \pformulaempty {\vee}
  \onlyfields{\pnext,\pprev} & \pformulaempty {\vee} \onlyfields{\pnext,\pprev}
  \\\hline
  7,8 & \pformulaempty {\vee} \onlyfields{\pnext,\pprev} & \pformulaempty
  {\vee} \onlyfields{\pnext,\pprev} & \pformulaempty {\vee}
  \onlyfields{\pnext,\pprev} & \pformulaempty {\vee} \onlyfields{\pnext,\pprev}
  & \pformulaempty {\vee} \onlyfields{\pnext,\pprev} & \pformulaempty {\vee}
  \onlyfields{\pnext,\pprev} \\\hline
  3' & \pformulaempty {\vee} \onlyfields{\pnext,\pprev} & \pformulaempty
  {\vee} \onlyfields{\pnext,\pprev} & \pformulaempty {\vee}
  \onlyfields{\pnext,\pprev} & \pformulaempty {\vee} \onlyfields{\pnext,\pprev}
  & \pformulaempty {\vee} \onlyfields{\pnext,\pprev} & \pformulaempty {\vee}
  \onlyfields{\pnext,\pprev} \\\hline
  4' & \pformulaempty & \false & \false & \pformulaempty {\vee}
  \onlyfields{\pnext,\pprev} & \pformulaempty & \pformulaempty {\vee}
  \onlyfields{\pnext,\pprev} \\\hline
  5' & \pformulaempty & \pnext & \false & \pformulaempty {\vee}
  \onlyfields{\pnext,\pprev} & \pformulaempty {\vee}
  \onlyfields{\pnext,\pprev} & \pformulaempty {\vee}
  \onlyfields{\pnext,\pprev} \\\hline
  6' & \pformulaempty {\vee} \onlyfields{\pnext,\pprev} & \pnext & \pprev &
  \pformulaempty {\vee} \onlyfields{\pnext,\pprev} & \pformulaempty {\vee}
  \onlyfields{\pnext,\pprev} & \pformulaempty {\vee} \onlyfields{\pnext,\pprev}
  \\\hline
  7',8' & \pformulaempty {\vee} \onlyfields{\pnext,\pprev} & \pformulaempty
  {\vee} \onlyfields{\pnext,\pprev} & \pformulaempty {\vee}
  \onlyfields{\pnext,\pprev} & \pformulaempty {\vee} \onlyfields{\pnext,\pprev}
  & \pformulaempty {\vee} \onlyfields{\pnext,\pprev} & \pformulaempty {\vee}
  \onlyfields{\pnext,\pprev} \\\hline
\end{array} \]

\noindent
The first reachability information is added at line $2$: the creation
of an object implies that $\tmp$ is reachable from itself and cyclic
by means of an empty path.  The same happens at line 4 with $\xx$.
The update of \lstinline!x.n! implies that \tmp is reachable from \xx,
and it can be guaranteed that the new path will only traverse
\lstinline!n!.  In the following line, a cycle is created by the field
update of \lstinline!tmp.p!.  At this point, both variables are
cyclic, and it is possible to guarantee that a cycle will either (1)
be empty; or (2) traverse both fields.  The value for
$\abselemfr^6(\tmp,\xx)$ is computed as the disjunction between
$\abselemfr^5(\tmp,\xx) = \false$ and the new formula $\pformulaempty
\odot (\onlyfieldsp{\pp} \vee (\onlyfieldsp{\pp} \odot
\abselemfr^5(\xx,\tmp))) \odot \pformulaempty = \pprev$, giving
$\pprev$ as the final result.  Note that the disjunction between
$\onlyfieldsp{\pp}$ and $\onlyfieldsp{\pp} \odot
\abselemfr^5(\xx,\tmp)$ takes into account both ways to go from \tmp
to \xx: either directly, or traversing the new cycle and touching \xx
twice.  Therefore, it is not possible to guarantee that no path will
traverse $\nn$.  Line $7$ assigns \xx to \lstinline!tmp!, so that the
reachability information $\abselemr^7(\xx,\tmp)$ and
$\abselemr^7(\tmp,\xx)$ is copied from $\abselemr^6(\xx,\xx)$.  The
second iteration is needed since line $3$ is entered the second time
with a different abstract value (row $3'$ of the table).  At row $4'$,
reachability and cyclicity information about \xx is removed, but
non-empty cyclicity of \tmp is still admitted.  A fixpoint is reached
after the second iteration, as shown by the fact that row $8$ and row
$8'$ of the table are equal.  The final result, which is finally
copied to $\abselemfrc^9$, correctly shows that any non-empty cycle
reachable from either $\xx$ or $\tmp$ has to involve both $\nn$ and
$\pp$, so that a loop which only traverses one of these fields (as the
one depicted in Section \ref{sec:anExample2}, lines 10--12) is
guaranteed to terminate.

 % implementation
\section{Practical issues}
\label{sec:practicalIssues}

The analysis has been partially implemented\footnote{Available at
  \texttt{http://costa.ls.fi.upm.es/\texttildelow
    damiano/reachCycle/}.} in its \emph{intra-procedural} part, based
on the Chord Java bytecode analyzer \cite{Naik11}.  The implementation
covers most sequential Java bytecode instructions which may occur in
the single method the analysis focuses on.  The examples of Sections
\ref{sec:anExample1} and \ref{sec:anExample2} have been analyzed, and
the result appears in the tables of Sections \ref{sec:backExample1}
and \ref{sec:backExample2}.

A path-formula is explicitly represented as the set of its models,
which are, in turn, sets of fields: for example, $\{ \{ \field_1 \},
\{ \field_2 \} \}$ represents the exclusive disjunction
$(\fproposition{\field_1}{\wedge}\lnot\fproposition{\field_2}) \vee
(\fproposition{\field_2}{\wedge}\lnot\fproposition{\field_1})$ if
$\field_1$ and $\field_2$ are the only fields in the program.  During
the fixpoint computation, path-formul\ae~are always combined by
disjunction, which amounts to add new sets of fields (models) to their
representation.

\begin{example}[(implementation of $\odot$)]
  \label{ex:odotImplementation}
  Consider the field update $v.\field\assign v'$.  Let $\pformula_1$
  and $\pformula_2$ be the reachability information from some $w_1$ to
  $v$ and from $v'$ to some $w_2$, respectively.  Both formul\ae~are
  represented as sets of sets of fields:
  \[ \pformula_i =
  \left\{~~\left\{~\field^i_{11},~..~,~\field^i_{1k_1}~\right\},~~..~~,
  \left\{~\field^i_{n1},~..~,~\field^i_{1k_n}~\right\}~~\right\} \]
  \noindent where each truth assignment $\{
  \field^i_{j1},..,\field^i_{jk_j} \}$ is a model of $\pformula_i$.
  The newly-computed formula
  \[\abselemfr''(w_1,w_2) =
  \abselemfr'(w_1,v) \odot
  \left(\onlyfieldsp{\field}{\vee}\left(\onlyfieldsp{\field}\odot\abselemfr'(v',v)\right)\right)
  \odot \abselemfr'(v',w_2)\] can be obtained by taking each pair of
  sets $(X_1,X_2)$ belonging to $\pformula_1 \times \pformula_2$, and
  each $Y$ belonging to (the set representation of) $\pformulag =
  \abselemfr'(v',v)$, and computing the unions $X_1 \cup \{ \field \}
  \cup X_2$ and $X_1 \cup \{ \field \} \cup Y \cup X_2$.  More
  formally,
  \[ \abselemfr''(w_1,w_2)~~=~~\{~X_1 \cup \{ \field \} \cup X_2~|~X_i
  \in \pformula_i~\}~~\cup~~\{~X_1 \cup \{ \field \} \cup Y \cup X_2~|~X_i \in
  \pformula_i, Y \in \pformulag~\} \]
\end{example}

\noindent
The global fixpoint terminates when no new models are added to
path-formul\ae.

\subsection{Scalability}

With respect to similar analyses discussed as related work, the main
threats to scalability seem to be the complexity of operations on
path-formul\ae~and the potentially large number of reference fields in
a program.  The implementation deals with path-formul\ae~operations
such as $\odot$ in the way suggested in Example
\ref{ex:odotImplementation}.  A set of fields can be easily
represented as a \emph{bit vector}, so that union between sets of
fields can be efficiently obtained by standard bitwise operations.  A
large number of fields would imply the need of more memory to store
bit vectors, but no significant slowdown in bitwise computations.

Moreover, to have a large number of fields to be tracked does not
necessarily mean that operations like the one described in the example
above have to deal with a huge number of field sets $X_i$ and $Y$:
there is no reason to have path-formul\ae~with more models if the
program has, globally, more fields.  Suppose, for example, that the
code presented in Section \ref{sec:anExample2} is part of a bigger
program with many class declarations: the path-formula computed, say,
for $\abselemfc^8(\tmp) = \pformulaempty \vee (\pnext{\wedge}\pprev)$
would have the same two models $\{\}$ and $\{ \mathsf{n}, \mathsf{p}
\}$, the difference being that the bit vectors representing them would
be much longer vectors with almost all bits set to 0.

However, there are at least two ways to go in the direction of
improving efficiency; both are based on the observation that the
analysis of data structures in the heap can benefit from knowledge of
what the information to be inferred will be used for.  In the case of
proving termination of loops traversing such data structures, which is
one of the main indirect goals of the present paper, to know in
advance \emph{which loops} will possibly traverse the data structure
under study is a valuable piece of information which can significantly
improve efficiency.  The next two paragraphs discuss two improvements
relying on the example of Section \ref{sec:anExample1}: the goal is to
gather information on the structure of the tree built by

\centerline{\lstinline!x:=join(tree1,tree2)!}

\noindent
knowing in advance that it will be traversed by the loop

\centerline{\lstinline^while (x!=null) x:=x.left^}

\noindent
and that the final goal is to prove termination of such a loop.

\subsubsection*{Backward program slicing}

In order to reduce the portion of code to be analyzed, \emph{backward
  static slicing} \cite{XuQZWC05} can be used.  A slice can be
computed backwards from the value of \lstinline!x! at the program
point before the loop, which includes the part of the program
potentially affecting the value of the variable \lstinline!x! at that
point.  Backward static slicing is a well-known technique which has
been applied to programming languages like Java and has reached a
considerable level of maturity.

\subsubsection*{Field abstraction}

The problem of dealing with a large number of reference fields can be
alleviated by observing that some fields are not really relevant when
focusing on specific parts of the code.  Consider the example
presented above: if it is known in advance that the only loop
traversing the tree pointed to by \lstinline!x! will do it on the
\lstinline!left! field, then the only question to be answered is: ``is
$\{ \mbox{\lstinline!left!} \}$ a model of
$\abselemfc(\mbox{\lstinline!x!})$?''; in other words, is there the
possibility of a cycle which only traverses \lstinline!left!?  This
question can be given an answer without knowing exactly which fields
are involved in cycles and under which circumstances, so that a
simplification can be performed on the implementation of the analysis.

If \lstinline!left! is the only field to be tracked explicitly, as in
this case, then all the other fields can be abstracted to a special
field $\any$, so that the analysis only has to take \lstinline!left!
and $\any$ into account instead of all reference fields in the
program.  Operations on field sets are also simplified: e.g., $\{
\any, \field_1 \} \cup \{ \any, \field_2 \} = \{ \any, \field_1,
\field_2 \}$ where $\field_1$ and $\field_2$ are the fields to be
tracked explicitly.  On the other hand, $\{ \any, \field_1 \}
\setminus \{ \field_2 \}$ results in both $\{ \any, \field_1 \}$ and
$\{ \field_1 \}$ since $\any$ could be representing, at that point,
exactly $\field_2$ alone.  In the example above, the models of
$\abselemfc(\mbox{\lstinline!x!})$ come to be $\{\}$, $\{ \any \}$,
and $\{ \any, \mbox{\lstinline!left!} \}$, thus making it possible to
detect that no cycle will only traverse \lstinline!left!.  This
feature has been added to the implementation described in this
section: it is possible to specify manually which fields have to be
tracked explicitly; if not specified, then all fields are tracked.

 % conclusions

\section{Conclusions}
\label{sec:conclusions}

The present paper describes a novel approach to cyclicity analysis
which is able to provide, even for \emph{possibly cyclic} data
structures, information that is useful to prove termination or bounds
on the resource usage of programs traversing them.  This is
accomplished by considering the \emph{fields} through which a cyclic
path goes through.  If, for example, it is possible to prove that any
cycle has to involve certain fields, then the result of the analysis
can be successfully used to prove the termination of a loop which
never traverses any cycle completely.

A typical example is a tree with edges to parent nodes: this is
clearly a cyclic data structure, but loops traversing it
\emph{one-way} (e.g., from the root to a leaf, or the other way
around) terminate.  Existing cyclicity analyses cannot give enough
information about this example, so that termination of loops which
traverse the data structure cannot be guaranteed.  On the contrary,
this analysis provides the required field-sensitive information in
form of propositional formul\ae, which are expressive enough to
capture relevant properties of cycles.

Future work will be mainly devoted to complete the prototypical
implementation discussed in Section \ref{sec:practicalIssues}.  This
will involve adding more features like static fields, making the
analysis interprocedural, and build a user interface.  Moreover, the
precision of the abstract semantics and the implementation could be
improved, especially on method calls, by a field-sensitive version of
the deep-sharing analysis of Section \ref{sec:auxiliaryAnalyses}.
More speculative work will look for further insight into the abstract
domains mentioned in Section \ref{sec:moreExpressive}.

\appendix

 % proofs

\section{Proofs}
\label{sec:proofs}

\newcommand{\mylemma}[2]{\textsc{Lemma \ref{#1}}. \emph{#2}}

\mylemma{lemma:decideViability}{The viability of a truth assignment
  $\finterpretation$ is decidable.}

\begin{proof}
  The goal is to find a state where there is a path $\hpath$ which
  traverses all and only the fields whose f-proposition belongs to
  $\finterpretation$, i.e., $\hpath$ should traverse all and only the
  fields in $\varphi = \{ \field | \fproposition{\field} \in
  \finterpretation \}$.

  The first step is to compute, for every class $\class \in \classes$,
  the set of types of objects which can be reached from objects of
  type $\class$ \emph{by only traversing fields in $\varphi$},
  i.e., the reflexive and transitive closure $R^*$ of the relation
  \[ R = \{ (\class',\class'')~|~\class'\in \classes \wedge \exists
  \class'.\field \in \varphi.~\class'.\field~\mbox{has declared
    type}~\class,~\mbox{with}~\class'' \subclasseq \class \} \] Note
  that paths of length 0 are also considered.  The closure is
  computable since $\classes \times \classes$ is finite and $R^*
  \subseteq \classes \times \classes$, so it is guaranteed that a
  fixpoint will be reached in a finite number of steps.  In the end,
  $(\class',\class'') \in R^*$ means that it is possible to reach a
  $\class''$ object from a $\class'$ object in zero or more steps, by
  only traversing $\varphi$.

  The second step is to consider, one at a time, all the permutations
  of $\varphi$ (which, as a subset of $\fields$, is finite).  For each
  permutation $\langle \class_{p_1}.\field_{p_1}, .. ,
  \class_{p_n}.\field_{p_n} \rangle$, a path $\hpath$ is searched for,
  which has the following form: it traverses
  $\class_{p_1}.\field_{p_1}$, then goes through $\varphi$ from the
  second location to a location where an object of class
  $\class_{p_2}$ is stored, then traverses
  $\class_{p_2}.\field_{p_2}$, then goes through $\varphi$ from the
  last-obtained location to a location where an object of class
  $\class_{p_3}$ is stored, and so on until it traverses
  $\class_{p_n}.\field_{p_n}$.  Such a path would have the desired
  property of traversing all and only the fields in $\varphi$,
  regardless of how many times every single field is traversed.

  The existence of $\hpath$ depends on $R^*$.  Let $\ell'_i$ and
  $\ell''_i$ be, resp., the locations where the objects \emph{before}
  and \emph{after} traversing $\class_{p_i}.\field_{p_i}$ are stored;
  the object stored at $\ell'_i$ would have type $\class_{p_i}$, while
  the object stored at $\ell''_i$ would have some type compatible with
  class declarations.  The possibility to connect some $\ell''_i$ to
  $\ell'_{i+1}$ (i.e., to \emph{fill the gap} between consecutive
  fields in the permutation) depends on whether there is some
  $(\class_i^? ,\class_{p_{i+1}}) \in R^*$ s.t.~$\class_i^?$ is a
  subclass of the declared type of $\class_{p_i}.\field_{p_i}$.  If
  such a pair exists, then a heap can be picked where the object at
  $\ell''_i$ has type $\class_i^?$.

  \begin{center}
    \begin{tikzpicture}
      \tikzstyle{objnode} = [rectangle,draw,inner sep=1pt]
      \node[objnode] (l1) at (0,0) {$o'_1:\class_{p_1}$};
      \node[objnode] (l1a) at (2.5,0) {$o''_1:\class_1^?$};
      \draw[->] (l1) -- node[above] {$\class_{p_1}.\field_{p_1}$} (l1a);
      
      \node[text width=2.5cm, text centered] at (4.5,0) {(according to $R^*$, a
        $\class_1^?$ object can reach a $\class_{p_2}$ object)};
      \draw[dashed] (3.2,1.1) -- (5.8,1.1) -- (5.8,-1.1) -- (3.2,-1.1) -- (3.2,1.1);
      
      \node[objnode] (l2) at (6.6,0) {$o'_2:\class_{p_2}$};
      \node[objnode] (l2a) at (9.1,0) {$o''_2:\class_2^?$};
      \draw[->] (l2) -- node[above] {$\class_{p_2}.\field_{p_2}$} (l2a);
      
      \node at (10.4,0) {\dots};
      \draw[dashed] (11,1.1) -- (9.8,1.1) -- (9.8,-1.1) -- (11,-1.1);
    \end{tikzpicture}
  \end{center}

  \noindent
  If there is a permutation of $\varphi$ such that all the gaps can be
  filled, then such a $\hpath$ exists, which only traverses fields in
  $\varphi$, and traverses all of them; in this case,
  $\finterpretation$ is guaranteed to be viable.  On the other hand,
  if all the permutations have been considered but it was not possible
  to build $\hpath$ for any of them, then $\finterpretation$ is not
  viable.  It is easy to see that the whole process is computable.
\end{proof}

\mylemma{lemma:decideEquivalence}{The equivalence of two
  path-formul\ae~is decidable.}

\begin{proof}
  Truth assignments are finite since $\fields$ is also finite.
  Therefore, it is enough to find a viable one which is a model of one
  path-formula and a counter-model of the other.  To decide if a truth
  assignment is a model of a formula is easy; moreover, its viability
  is decidable by Lemma \ref{lemma:decideViability}, so that the whole
  problem is decidable.
\end{proof}

\begin{lemma}
  \label{lemma:gammaMonotonicityUnicity}
  (This lemma is only presented in the appendix) $\alphafr$ and
  $\gammafr$ are monotone.  Moreover, for every $\abselemfr',
  \abselemfr''\in \domfr$, $\abselemfr' \neq \abselemfr''$ implies
  $\Rightarrow \gammafr(\abselemfr') \neq \gammafr(\abselemfr'')$.
\end{lemma}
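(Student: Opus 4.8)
The plan is to handle the two monotonicity statements directly and to derive injectivity of $\gammafr$ from a \emph{minimal witness state} construction. \emph{Monotonicity.} For $\gammafr$: if $\abselemfr' \sqsubseteqfr \abselemfr''$ then $\abselemfr'(v,w) \leq \abselemfr''(v,w)$ for every pair $(v,w)$, and since $\pformula$-reachability is upward closed in its formula (if $v$ $\pformula$-reaches $w$ and $\pformula \leq \pformulag$, then $v$ $\pformulag$-reaches $w$), every state of $\gammafr(\abselemfr')$ lies in $\gammafr(\abselemfr'')$; hence $\gammafr(\abselemfr') \subseteq \gammafr(\abselemfr'')$. For $\alphafr$: if $\concelem_1 \subseteq \concelem_2$, then at each $(v,w)$ the disjunction defining $\alphafr(\concelem_1)(v,w)$ ranges over a subset of the one defining $\alphafr(\concelem_2)(v,w)$, so $\alphafr(\concelem_1) \sqsubseteqfr \alphafr(\concelem_2)$. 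Neither argument is affected by identifying formul\ae\ with their $\equiv^{v,w}$-classes.

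\emph{Injectivity.} Assume $\abselemfr' \neq \abselemfr''$; then there is a pair $(v,w)$ with $\abselemfr'(v,w) \not\equiv^{v,w} \abselemfr''(v,w)$, i.e.\ some $(v,w)$-realizable truth assignment $\finterpretation$ is a model of exactly one of them, say $\finterpretation \models \abselemfr'(v,w)$ and $\finterpretation \not\models \abselemfr''(v,w)$ (the opposite case is symmetric, exchanging $\abselemfr'$ and $\abselemfr''$). Since $\finterpretation$ is $(v,w)$-realizable it is witnessed by a path $\hpath_0$ from $\statef{\statesym_0}(v)$ to $\statef{\statesym_0}(w)$, in some state $\statesym_0$, that traverses exactly the fields whose f-proposition lies in $\finterpretation$. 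I would then \emph{unfold} $\hpath_0$ into a simple chain: replace each visited location by a fresh copy of the object stored there, letting the $i$-th copy point, via the field traversed at step $i$, to the $(i{+}1)$-th copy — and, when $v=w$, letting the last copy point back to the first, closing a simple cycle. This is a well-typed heap because the original field values already witness the type compatibilities, and it uses only fresh objects, so the result is genuinely simple. Let $\statesym_\finterpretation$ be the state consisting of this chain alone, with $v$ pointing to its head, $w$ to its tail, and every other variable set to $\semnil$. Then $\statesym_\finterpretation \in \gammafr(\abselemfr')$: constraints $\abselemfr'(v',w')$ with $v' \notin \{v,w\}$ are vacuous (since $v'$ is $\nil$); $w$ does not reach $v$ (simple chain, no back edge), which is $\false$-reachability; $v$ and $w$ reach themselves only through the empty path (and, if $v=w$, additionally once around the cycle with assignment $\finterpretation$), which p-satisfies $\pformulaempty \leq \abselemfr'(v,v)$, using the standing property $\abselemfr'(v,v) \geq \pformulaempty$ for reference $v$, together with $\finterpretation \models \abselemfr'(v,w)$; and the unique $v$-to-$w$ path, having assignment $\finterpretation$, p-satisfies $\abselemfr'(v,w)$. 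Conversely $\statesym_\finterpretation \notin \gammafr(\abselemfr'')$, since that same unique $v$-to-$w$ path has assignment $\finterpretation \not\models \abselemfr''(v,w)$, so $v$ does not $\abselemfr''(v,w)$-reach $w$. Hence $\gammafr(\abselemfr') \neq \gammafr(\abselemfr'')$. (Running the same construction over \emph{every} realizable model of $\abselemfr(v,w)$ moreover yields $\alphafr(\gammafr(\abselemfr)) = \abselemfr$, the insertion identity underlying Lemma~\ref{lemma:gi-reach}.)

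\emph{Main obstacle.} The crux is the unfolding step: the witness state must contain \emph{no spurious} path among $\{v,w\}$ carrying a truth assignment other than $\finterpretation$, for otherwise it could fail to belong to $\gammafr(\abselemfr')$. This is precisely why the realizing path is rebuilt from all-fresh objects — the same device as in the proof of Lemma~\ref{lemma:decideViability} — guaranteeing that the chain is simple and that the only $v$-to-$w$ (resp.\ $v$-to-$v$, $w$-to-$w$) paths are the empty ones and the one going along (or once around) the chain. A secondary, but necessary, ingredient is the invariant $\abselemfr(v,v) \geq \pformulaempty$ on abstract values of reference variables, which makes the self-reachability slots $\abselemfr'(v,v)$ and $\abselemfr'(w,w)$ harmless; without it, $\gammafr$ would be blind to $\abselemfr(v,w)$ whenever $\abselemfr(v,v) = \false$, and the statement would fail.
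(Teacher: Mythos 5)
Your proof is correct and follows essentially the same route as the paper's: pick a pair $(v,w)$ on which the two abstract values differ as $\equiv^{v,w}$-classes, take a viable discriminating assignment realized by a path, and build a witness state in which that path is essentially the only one, so that it separates the two concretizations. You are in fact more explicit than the paper, which handles monotonicity not at all and the witness construction only by asserting ``w.l.o.g.\ $\hpath$ is the only path in the heap''; your fresh-copy unfolding, the treatment of the $v=w$ cycle case, and the observation that the argument silently relies on $\abselemfr(v,v)\geq\pformulaempty$ (without which $\gammafr(\abselemfr)$ would ignore $\abselemfr(v,w)$ and injectivity could fail) make precise details that the paper's proof leaves implicit.
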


\begin{proof}
  Suppose that there exists a pair of variables $(v,w)$ such that
  $\abselemfr'(v,w) \not\equiv \abselemfr''(v,w)$; this means that the
  corresponding path-formul\ae~do not belong to the same equivalence
  class, i.e., that there is a \emph{viable} $\finterpretation$ which
  is a model of one and only one of them.  Suppose $\finterpretation$
  is a model of $\abselemfr'(v,w)$ but not of $\abselemfr''(v,w)$ (the
  dual case is similar).  By the definition of the equivalence on
  $\pformulae_{\equiv^{v,w}}$, there must be a state $\statesym$ where
  there is a path $\hpath$ from $v$ to $w$ traversing all and only the
  fields belonging to $\{ \field | \fproposition{\field} \in
  \finterpretation \}$, so that
  $\traverses{\hpath}{\abselemfr'(v,w)}$, but
  $\ntraverses{\hpath}{\abselemfr''(v,w)}$.  Obviously, there can be
  many such states; w.l.o.g., $\statesym$ can be chosen among them,
  such that $\hpath$ is the only path in the heap (note that this is
  always possible since the other variables are not relevant, and
  there is no need for any other path from $v$ to $w$).
  Now, $\statesym$ clearly belongs to $\gammafr(\abselemfr')$, but not
  to $\gammafr(\abselemfr'')$ (because there is no path-formula
  $\pformula \leq \abselemfr''(v,w)$ such that $v$ $\pformula$-reaches
  $w$ in $\statesym$), and this concludes the proof.
\end{proof}

\mylemma{lemma:gi-reach}{$\alphafr$ and $\gammafr$ define a
  \emph{Galois insertion} between $\domfr$ and $\condom$.}

\begin{proof}
  We first prove $\concelem \subseteq \gammafr(\alphafr(\concelem))$,
  i.e., that $\statesym \in \concelem$ implies $\statesym \in
  \gammafr(\alphafr(\concelem))$.  Given a concrete state $\statesym
  \in \concelem$, the following holds for every $v$ and $w$:
  
  \centerline{$\pformulag_{v,w} = \bigwedge \{
    \pformula~|~v~\pformula\mbox{-reaches}~w~\mbox{in}~\statesym \} \leq
    (\alphafr(\concelem))(v,w)$}
  
  \noindent since a part of a disjunction always implies
  the disjunction itself.  Now, due to the behavior of path-formul\ae,
  $v$ $\pformulag_{v,w}$-reaches $w$ in $\statesym$, so that
  $\pformulag_{v,w}$ is the path-formula $\pformula$ required by
  $\gammafr$ in order to guarantee that $\statesym \in
  \gammafr(\alphafr(\concelem))$.

  The second part of the proof demonstrates that
  $\alphafr(\gammafr(\abselemfr)) \sqsubseteqfr \abselemfr$ holds for
  every $\abselemfr$.  This amounts to saying that

  \centerline{$\forall \abselemfr.~\forall
  v,w.~(\alphafr(\gammafr(\abselemfr)))(v,w) \leq \abselemfr(v,w)$}

  \noindent The goal is to prove that any path $\hpath$ p-satisfying
  $(\alphafr(\gammafr(\abselemfr)))(v,w)$ will also p-satisfy
  $\abselemfr(v,w)$, since $\leq$ is logical implication.  By the
  definition of $\alphafr$ and $\gammafr$:
  
  \centerline{
    $\begin{array}{rl}
      & \traverses{\hpath}{(\alphafr(\gammafr(\abselemfr)))(v,w)} \\
      \Leftrightarrow & \traverses{\hpath}{\bigvee \{ \bigwedge \{
        \pformula | v~\pformula\mbox{-reaches}~w~\mbox{in}~\statesym
        \} | \statesym
        \in \gammafr(\abselemfr) \}} \\
      \Leftrightarrow & \traverses{\hpath}{\bigvee \{ \bigwedge \{
        \pformula
        | v~\pformula\mbox{-reaches}~w~\mbox{in}~\statesym \} | \\
        & \qquad \qquad \statesym \in \states{\typenv} \wedge \forall
        v',w'. \exists \pformulag \leq
        \abselemfr(v',w').~v'~\pformulag\mbox{-reaches}~w'~\mbox{in}~\statesym
        \}}
    \end{array}$
  }
  \noindent For every $\statesym$ satisfying $(\forall v',w'. \exists
  \pformulag \leq
  \abselemfr(v',w').~v'~\pformulag\mbox{-reaches}~w'~\mbox{in}~\statesym)$,
  the instance $(\exists \pformulag \leq
  \abselemfr(v,w).~v~\pformulag\mbox{-reaches}~w~\mbox{in}~\statesym)$
  also holds, where $v$ and $w$ are the variables of interest.  This
  implies $\bigwedge \{ \pformula |
  v~\pformula\mbox{-reaches}~w~\mbox{in}~\statesym \} \leq
  \abselemfr(v,w)$ since the mentioned $\pformulag$ is actually one of
  the $\pformula$'s in the conjunction, so that $\bigwedge \{
  \pformula | v~\pformula\mbox{-reaches}~w~\mbox{in}~\statesym \} \leq
  \pformulag$, and $\pformulag \leq \abselemfr(v,w)$ by hypothesis.
  Since this holds for every $\statesym$ satisfying the property above
  (i.e., for all $\statesym \in \gammafr(\abselemfr)$), the
  disjunction on all such states required by $\alphafr$ is still less
  than or equal to $\abselemfr(v,w)$.  Therefore, since a path-formula
  
  \centerline{$\bigvee~\left\{~ \bigwedge~\left\{
        \pformula~|~\mbox{$v$ $\pformula$-reaches $w$ in}~\statesym
      \right\}~|~\statesym~\mbox{satisfies the condition above}
    \right\} $} 

  \noindent has been found such that $\hpath$ p-satisfies it, and is
  less than or equal to $\abselemfr(v,w)$, the statement
  $\traverses{\hpath}{\abselemfr(v,w)}$ holds.  Thus, the implication
  $\traverses{\hpath}{(\alphafr(\gammafr(\abselemfr)))(v,w)}
  \Rightarrow \traverses{\hpath}{\abselemfr(v,w)}$ has been proved,
  and this completes the second part of the proof.

  Joining both parts demonstrates that $\alphafr$ and $\gammafr$ build
  a Galois connection.  Lemma \ref{lemma:gammaMonotonicityUnicity}
  guarantees that the correspondence is indeed a Galois insertion.
\end{proof}

\mylemma{lemma:gi-cyc}{$\alphafc$ and $\gammafc$ define a \emph{Galois
    insertion} between $\domfc$ and $\condom$.}

\begin{proof}
  Very similar to Lemma \ref{lemma:gi-reach}.  Note that this proof
  requires to prove monotonicity (easy), and a unicity result for
  cyclicity similar to Lemma \ref{lemma:gammaMonotonicityUnicity},
  i.e., that $\abselemfc' \neq \abselemfc''$ implies
  $\gammafc(\abselemfc') \neq \gammafc(\abselemfc'')$; such a result
  is also easy to prove.
\end{proof}

\mylemma{lemma:reducedProduct}{
  The lattice based on
  $\{~(\abselemfr,\abselemfc)~|~\abselemfr{\in}\reachsetf{\typenv},
  \abselemfc{\in}\cycsetf{\typenv}, (\abselemfr,\abselemfc)~\mbox{is
    in normal form}~\}$, with concretization function
  $\gammafrc(\abselemfr,\abselemfc) = \gammafr(\abselemfr) \cap
  \gammafc(\abselemfc)$, is the reduced product of $\domfr$ and
  $\domfc$.}

\begin{proof}
  The goal is to prove that $\gammafrc$ is injective, i.e., that
  $\abselemfrc' \neq \abselemfrc''$ implies $\gammafrc(\abselemfrc')
  \neq \gammafrc(\abselemfrc'')$.
  If $\abselemfrc' \neq \abselemfrc''$, then either (a) $\abselemfr'
  \neq \abselemfr''$ or (b) $\abselemfc' \neq \abselemfc''$.
  \begin{itemize}
  \item[(a)] Suppose the abstract values differ on the reachability
    from $v$ to $w$.  By Lemma \ref{lemma:gammaMonotonicityUnicity},
    $\gammafr(\abselemfr')$ must be different from
    $\gammafr(\abselemfr'')$.  Suppose there are states $\statesym \in
    \gammafr(\abselemfr') \setminus \gammafr(\abselemfr'')$ (the dual
    case is similar).
    Without loss of generality, $\statesym$ can be taken as follows:
    \begin{itemize}
    \item its behavior on $(v,w)$ is as required by Lemma
      \ref{lemma:gammaMonotonicityUnicity}: there is a path from $v$
      to $w$ which p-satisfies $\abselemfr'(v,w)$ but not
      $\abselemfr''(v,w)$;
    \item no other variables reach each other;
    \item if $v$ and $w$ are different variables, then there are no
      cyclic variables;
    \item if $v$ and $w$ are the same variable, then the path from $v$
      to $w$ is actually a cyclic path, and it is the only cycle in
      the heap (note that this is allowed since $\abselemfc'(v) \geq
      \abselemfr'(v,v)$ is required in normal forms).
    \end{itemize}
    With this definition, $\statesym \in \gammafc(\abselemfc')$ and
    $\statesym \in \gammafr(\abselemfr')$, so that it belongs to
    $\gammafrc(\abselemfrc')$.  On the other hand, $\statesym \notin
    \gammafrc(\abselemfrc'')$ because it does not belong to
    $\gammafr(\abselemfr'')$.
  \item[(b)] Suppose the abstract values differ on the cyclicity of
    $v$.  There are results for cyclicity which are similar to their
    reachability counterpart, and imply $\gammafc(\abselemfc') \neq
    \gammafc(\abselemfc'')$.
    A state $\statesym$ belonging to the set difference
    $\gammafc(\abselemfc') \setminus \gammafc(\abselemfc'')$ can be
    taken as follows:
    \begin{itemize}
    \item its behavior on $v$ is such that there is one and only one
      cycle on $v$, and such a cycle p-satisfies $\abselemfc'(v)$ but
      not $\abselemfc''(v)$;
    \item no variables reach each other, not even $v$ reaches itself
      (this is possible since the unique cycle on $v$ can be supposed
      to start from some location reachable from $v$, not from
      $\statef{\statesym}(v)$ itself).
    \end{itemize}
    With this definition, $\statesym$ trivially belongs to both
    $\gammafr(\abselemfr')$ and $\gammafr(\abselemfr'')$; therefore,
    it does belong to $\gammafrc(\abselemfrc')$ although not to
    $\gammafrc(\abselemfrc'')$.
  \end{itemize}
\end{proof}

\mylemma{lemma:gi-withoutFields}{The abstract domain $\domr$ is an
  abstraction of $\domfr$.}

\begin{proof}
  In order to prove this result, an abstraction function $\alpha$ and
  a concretization function $\gamma$ have to be given, which satisfy
  the definition of Galois connection.  Let $\pformula_\vee$ be
  $\bigvee_{\field \in \fields} \fproposition{\field}$, i.e., the
  formula having all and only non-empty models.
  \[ \begin{array}{rcl} \alpha(\abselemfr) & = & \{~\REACHES{v}{w} \in
    X^\rightsquigarrow~|~\abselemfr(v,w) \wedge \pformula_\vee \neq
    \false~\} \\
    \gamma(\abselemr) & = & \lambda
    (v,w).~\left\{ \begin{array}{l@{\qquad}l}
        \true & \mbox{if}~\REACHES{v}{w} \in \abselemr \\
        \lnot \pformula_\vee & \mbox{otherwise}
      \end{array} \right.
  \end{array} \]
  $\alpha$ means that $\REACHES{v}{w}$ will be included in the
  abstraction whenever the path-formula $\abselemfr(v,w)$ has some
  non-empty model; this is because, unlike $\domfr$, $\domr$ only
  considers paths whose length is at least 1.
  On the other hand, $\gamma$ assigns $\true$ whenever $\REACHES{v}{w}
  \in \abselemr$ because $\domr$ does not track conditions on paths;
  on the other hand, the path-formula returned when $\REACHES{v}{w}
  \notin \abselemr$ still admits paths of length 0, as expected.
  
  The first thing to prove is that $\gamma(\alpha(\abselemfr))
  \sqsupseteqfr \abselemfr$; this follows easily since (1) if
  $\abselemfr(v,w)$ has no non-empty models, then it is either
  $\false$ or equivalent to $\lnot \pformula_\vee$; therefore,
  $(\gamma(\alpha(\abselemfr)))(v,w)$ is $\lnot \pformula_\vee$ which
  is $\geq \abselemfr(v,w)$.  On the other hand, if (2)
  $\abselemfr(v,w)$ has non-empty models, then
  $(\gamma(\alpha(\abselemfr)))(v,w)$ is $\true \geq \abselemfr(v,w)$.
  In both cases, $(\gamma(\alpha(\abselemfr)))(v,w) \geq
  \abselemfr(v,w)$, which proves the result.  Note that, at a first
  sight, if $\abselemfr(v,w) = \pformula$ has non-empty models but
  $\REACHES{v}{w} \notin X^\rightsquigarrow$, then the statement will
  not be added to $\alpha(\abselemfr)$, so that
  $(\gamma(\alpha(\abselemfr)))(v,w)$ will be $\lnot \pformula_\vee$,
  which seems to be smaller than $\pformula$.  However, such case
  never happens since $\REACHES{v}{w} \notin X^\rightsquigarrow$
  implies that there is no path from $v$ to $w$ of length $\geq 1$, so
  that either $\pformula \equiv^{v,w} \false$ or $\pformula
  \equiv^{v,w} \lnot \pformula_\vee$.
  
  The second part of the proof is to demonstrate that
  $\alpha(\gamma(\abselemr))$ is smaller than or equal to $\abselemr$.
  It is straightforward to see that equality holds, so that the
  correspondence is actually a Galois insertion.
\end{proof}

\mylemma{lemma:gi-aliasing}{ The abstract domain $\domfr$ is a
  refinement of the aliasing domain.}

\begin{proof}
  The aliasing domain $\domalias$ is defined as the lattice of sets of
  pairs $\ALIAS{v}{w}$ of variables, ordered by $\subseteq$, and by
  the following abstraction and concretization functions w.r.t.~the
  concrete domain:
  \[
  \begin{array}{rcl}
    \alpha_a(\concelem) & = & \{ \ALIAS{v}{w}~|~\exists \statesym \in
    \concelem.~\statef{\statesym}(v) = \statef{\statesym}(w) \} \\
    \gamma_a(\abselema) & = & \{ \statesym~|~\forall
    v,w.~\statef{\statesym}(v) = \statef{\statesym}(w) \Rightarrow
    \ALIAS{v}{w} \in \abselema \}
  \end{array}
  \]
  In order to prove the result, the definition of suitable abstraction
  and concretization functions between $\domalias$ and $\domfr$ is
  needed.
  \[
  \begin{array}{rcl}
    \alpha(\abselemfr) & = & \{ \ALIAS{v}{w}~|~\abselemfr(v,w) \wedge
    \pformulaempty \neq \false \} \\
    \gamma(\abselema) & = & \lambda v,w.~\left\{ \begin{array}{l@{\qquad}l}
        \true & \mbox{if}~\ALIAS{v}{w} \in \abselema \\
        \false & \mbox{otherwise}
      \end{array} \right.
  \end{array}
  \]
  As usual, it must me proved that $\gamma(\alpha(\abselemfr))
  \sqsupseteqfr \abselemfr$ and $\alpha(\gamma(\abselema))$ is smaller
  than or equal to $\abselema$.  The first part follows from observing
  that $\ALIAS{v}{w} \in \alpha(\abselemfr)$ whenever $\emptyset \in
  \fmodels{\abselemfr(v,w)}$.  In this case,
  $(\gamma(\alpha(\abselemfr)))(v,w)$ will be $\true \geq
  \abselemfr(v,w)$.  On the other hand, the equality between
  $\alpha(\gamma(\abselema))$ and $\abselema$ is straightforward.
\end{proof}

\mylemma{lemma:gc-onlyClasses}{$\domkr$ is an abstraction of $\domr$.}

\begin{proof}
  The abstraction and concretization function identifying a Galois
  insertion between $\domkr$ and $\domr$ are as follows (here,
  $\dectype(v)$ is the \emph{declared type} of $v$):
  \[ \begin{array}{rcl} \alpha(\abselemr) & = & \{
    (\dectype(v),\dectype(w))~|~\REACHES{v}{w} \in \abselemr \} \\
    \gamma(\abselemkr) & = & \{ \REACHES{v}{w}~|~(\class_1,\class_2)
    \in \abselemkr, \dectype(v) \subclasseq \class_1, \dectype(w)
    \subclasseq \class_2 \}
  \end{array} \]
  It is straightforward to see that $\gamma(\alpha(\abselemr))
  \supseteq \abselemr$ since (1) $(\dectype(v),\dectype(w)) \in
  \alpha(\abselemr)$ for every statement $\REACHES{v}{w} \in
  \abselemr$; (2) $\REACHES{v}{w} \in \gamma(\alpha(\abselemr))$ since
  the declared types of $v$ and $w$ are equal to the $\class_1$ and
  $\class_2$ required in the definition of $\gamma$.

  On the other hand, $\alpha(\gamma(\abselemkr)) = \abselemkr$ holds:
  in fact, applying $\gamma$ adds all statements $\REACHES{v}{w}$
  where $v$ and $w$ are \emph{any} variables with compatible types.
  Afterward, $\alpha(\gamma(\abselemkr))$ is computed as the set of
  pairs of declared types of all such variables.  The last set can be
  larger, as it could include more subclasses; however, it is
  equivalent to $\abselemkr$ according to the equivalence relation on
  $\wp(X^{\classes \times \classes})$.
\end{proof}

\mylemma{lemma:gc-monotonic}{
  The following abstraction and concretization functions define a
  Galois connection between $\domfr$ and $\dommr$: the latter strictly
  abstracts the former.
  \[ \begin{array}{rcl} (\alpha(\abselemfr))(v,w) & = &
    \left\{ \begin{array}{ll}
        \false & \mbox{if}~\abselemfr(v,w){\models} \false \\
        \true & \mbox{if}~\true{\models}\abselemfr(v,w) \\
        \bigwedge \{
        \fproposition{\field_1}\!{\vee}..{\vee}\fproposition{\field_k}|
        \abselemfr(v,w){\models}
        \fproposition{\field_1}\!{\vee}..{\vee}\fproposition{\field_k} \}
        & \mbox{otherwise} \end{array} \right.  \\ \gamma(\abselemmr)
    & = & \abselemmr
  \end{array} \]}

\begin{proof}
  It is straightforward to see that the path-formula returned by
  $\alpha$ for every $v$ and $w$ is greater than or equal to
  $\abselemfr(v,w)$, so that $\gamma(\alpha(\abselemfr)) \sqsupseteqfr
  \abselemfr$.

  The second part of the proof, that $\alpha(\gamma(\abselemmr)) =
  \abselemmr$, follows from the fact that $\abselemmr(v,w)$ is a
  monotone formula, and, if it is neither $\true$ nor $\false$, can be
  transformed into an equivalent conjunctive normal form:
  ${\abselemmr}[*] = (\fproposition{\field_{11}} \vee .. \vee
  \fproposition{\field_{1k^1}}) \wedge .. \wedge
  (\fproposition{\field_{i1}} \vee .. \vee
  \fproposition{\field_{ik^i}})$ for some numbers $i$ and $k^i$.  Each
  of the $(\fproposition{\field_{jj}} \vee .. \vee
  \fproposition{\field_{jk^j}})$ is clearly implied by
  $(\gamma({\abselemmr}))(v,w)$, since the latter is logically
  equivalent to ${\abselemmr}[*](v,w)$.  Therefore,
  $(\alpha(\gamma(\abselemmr)))(v,w)$ is a disjunction of (at least)
  all the disjunctions $(\fproposition{\field_{jj}} \vee .. \vee
  \fproposition{\field_{jk^j}})$, so that it is less than or equal to
  (i.e., implies) $\abselemmr(v,w)$.  If $\abselemmr(v,w)$ is either
  $\true$ or $\false$, the result is easy.
  
  This proves that there is a Galois insertion between $\domfr$ and
  $\dommr$.
\end{proof}

\mylemma{lemma:gc-negation}{
  The following functions define a Galois insertion between $\domfr$
  and $\domnr$: the latter is a strict abstraction of the former.
  \begin{eqnarray*}
    \alpha(\abselemfr) & = & \left\{ v
    \not\rightsquigarrow^{\mathsf{B}} w~|~\forall \field \in
    \mathsf{B}.~\abselemfr(v,w) \models \lnot \fproposition{\field}
    \right\} \\ \gamma(\abselemnr) & = & \lambda
    v,w.~\bigwedge_{\field \in \mathsf{B}}^{} \lnot
    \fproposition{\field}\qquad \mbox{where $\mathsf{B}$ is the
      maximal set s.t.}~v \not\rightsquigarrow^{\mathsf{B}} w \in
    \abselemnr
  \end{eqnarray*}
}

\begin{proof}
  The first part demonstrates that $\gamma(\alpha(\abselemfr))
  \sqsupseteqfr \abselemfr$, i.e., that $\abselemfr(v,w)$ implies
  $(\gamma(\alpha(\abselemfr)))(v,w)$ for every $v$ and $w$.  This
  follows from observing that

  \centerline{$(\gamma(\alpha(\abselemfr)))(v,w) = \bigwedge \{ \lnot
    \fproposition{\field} ~|~ \abselemfr(v,w) \models \lnot
    \fproposition{\field} \}$}
  
  \noindent
  which is clearly implied by $\abselemfr(v,w)$ since it is a
  conjunction made of f-propositions which are all implied by
  $\abselemfr(v,w)$.

  The second part proves that $\alpha(\gamma(\abselemnr)) =
  \abselemnr$.  Let $v \not\rightsquigarrow^{\mathsf{B}} w \in
  \abselemnr$ for some $v$ and $w$.  Then, $(\gamma(\abselemnr))(v,w)$
  will be the conjunction $\pformula$ containing all the negative
  f-propositions for fields in $\mathsf{B}$ (note that $\pformula$
  could contain more literals since $\gamma$ takes the maximal
  $\mathsf{B_0}$ s.t.~$v \not\rightsquigarrow^{\mathsf{B_0}} w \in
  \abselemnr$).  It is easy to see that $\alpha(\gamma(\abselemnr))$
  will contain all statements $v \not\rightsquigarrow^{\mathsf{B'}} w$
  such that $\mathsf{B}' \subseteq \mathsf{B}_0$, thus including $v
  \not\rightsquigarrow^{\mathsf{B}} w$.  On the other hand, let $v
  \not\rightsquigarrow^{\mathsf{B}} w \in \alpha(\gamma(\abselemnr))$:
  this means that $\forall \field \in
  \mathsf{B}.~(\gamma(\abselemnr))(v,w) \models \lnot
  \fproposition{\field}$, i.e., that the conjunction
  $\gamma(\abselemnr)(v,w)$ contains all negated literals $\lnot
  \fproposition{\field}$ for fields in $\mathsf{B}$.  Therefore, there
  must be a statement $v \not\rightsquigarrow^{\mathsf{B}'} w \in
  \abselemnr$, with $\mathsf{B} \subseteq \mathsf{B}'$.  Since
  abstract values in $\domnr$ are closed under $\subseteq$ \cite[Lemma
    4.7]{ScapinSpotoMScThesis}, $v \not\rightsquigarrow^{\mathsf{B}}
  w$ also belongs to $ \abselemnr$, and the proof is complete.
\end{proof}

\mylemma{lemma:gc-requirement}{
  The following functions define a Galois insertion between $\domfc$
  and $\domqc$: the latter is a strict abstraction of the former.
  \begin{eqnarray*}
    \alpha(\abselemfc) & = & \abselemqc~\mbox{with domain}~D =
    \{~v~|~\abselemfc(v) \neq \false \} \\
    & & \abselemqc(v) = \{~\field~|~\abselemfc(v) \models
    \fproposition{\field}~\} \\
    \gamma(\abselemqc) & = & \lambda
    v.\left\{ \begin{array}{l@{\qquad}l}
        \bigwedge_{\field \in \abselemqc(v)}
        \fproposition{\field} & \mbox{if}~v \in
        \dom(\abselemqc) \\
        \false & \mbox{otherwise}
      \end{array} \right.
  \end{eqnarray*}}

\begin{proof}
  By definition, $(\gamma(\alpha(\abselemfc)))(v)$ is $\false$ if and
  only if $\abselemfc(v)$ was $\false$.  On the other hand, if
  $\abselemfc(v) \neq \false$, then $(\gamma(\alpha(\abselemfc)))(v)$
  is the formula $\bigwedge \{
  \fproposition{\field}~|~\abselemfc(v) \models
  \fproposition{\field} \}$, which is clearly implied by
  $\abselemfc(v)$.

  The second part of the proof is also easy:
  $\alpha(\gamma(\abselemqc))$ is equal to $\abselemqc$ since $v \in
  \dom(\alpha(\gamma(\abselemqc)))$ if and only if $\gamma(\abselemqc)
  \neq \false$, which holds if and only if $v \in \dom(\abselemqc)$.
  Moreover, in this case, $(\alpha(\gamma(\abselemqc)))(v)$ is exactly
  the set of fields whose corresponding f-proposition is entailed by
  $(\gamma(\abselemqc))(v)$.  But $(\gamma(\abselemqc))(v)$ comes to
  be $\bigwedge_{\field \in \abselemqc(v)} \fproposition{\field}$, and
  the set of f-propositions entailed by such a formula is exactly
  $\abselemqc(v)$.
\end{proof}

\mylemma{lemma:odotAndPathConcatenation}{ Let $\hpath'$ and $\hpath''$
  be two paths such that the last location of $\hpath'$ is the first
  of $\hpath''$.  Then, $\traverses{\hpath'}{\pformula}$ and
  $\traverses{\hpath''}{\pformulag}$ imply $\traverses{\hpath'
    \hpconcat \hpath''}{\pformula \odot \pformulag}$.  }

\begin{proof}
  The set of fields traversed by $\hpath' \hpconcat \hpath''$ is the
  union of the fields traversed by both sub-paths.  Since, by
  hypothesis, the fields traversed by $\hpath'$ and $\hpath''$
  correspond, respectively, to a model $\finterpretation'$ of
  $\pformula$ and a model $\finterpretation''$ of $\pformulag$, the
  union $\finterpretation = \finterpretation' \cup
  \finterpretation''$, which is a model of $\pformula \odot
  \pformulag$ by definition of $\odot$, is exactly the set of fields
  traversed by $\hpath' \hpconcat \hpath''$, so that the result
  clearly holds.
\end{proof}

\mylemma{lemma:ominus2}{ Let $\hpath$ be $\hpath' \hpconcat \hpath''$;
  let $\traverses{\hpath}{\pformula}$ and
  $\traverses{\hpath'}{\pformulag}$.  Then,
  $\traverses{\hpath''}{\pformula \ominus \pformulag}$.}

\begin{proof}
  Let $\varphi$ be the set of fields which are traversed by $\hpath$,
  and $\varphi'$ and $\varphi''$ be the corresponding field sets for
  $\hpath'$ and $\hpath''$.  Clearly, $\varphi = \varphi' \cup
  \varphi''$, so that $\varphi''$ comes to be the result of removing
  from $\varphi$ \emph{some} of the fields belonging to $\varphi'$.
  Such fields are exactly one of the $X$ mentioned in the definition,
  so that the truth assignment corresponding to $\varphi''$ is
  guaranteed to be a model of $\pformula \ominus \pformulag$.
  Therefore, $\traverses{\hpath''}{\pformula \ominus \pformulag}$.
\end{proof}

\mylemma{lemma:ominus3}{ Let $\hpath$ be
  $\tuple{\ell_0,\ell_1..,\ell_k}$ and $\hpath'$ be
  $\tuple{\ell_1,..,\ell_k}$.  Let the path from $\ell_0$ to $\ell_1$
  traverse $\field$, and $\hpath$ p-satisfy $\pformula$.  Then,
  $\traverses{\hpath'}{\pformula \ominus \onlyfieldsp{\field}}$.}
 
\begin{proof}
  If $\hpath'$ traverses $\field$ (i.e., this field is traversed at
  least twice by $\hpath$), then the set of fields traversed by
  $\hpath'$ is the same as $\hpath$.  Such a set is a model of
  $\pformula \ominus \onlyfieldsp{\field}$ since every model of
  $\pformula$ is also a model of $\pformula \ominus
  \onlyfieldsp{\field}$, so that $\traverses{\hpath'}{\pformula
    \ominus \onlyfieldsp{\field}}$.
   
  On the other hand, if $\hpath'$ does not traverse $\field$, then the
  set of fields traversed by $\hpath'$ is a model of $\pformula$ from
  which $\field$ has been removed.  Due to the definition of
  $\ominus$, such a set is a model of $\pformula \ominus
  \onlyfieldsp{\field}$, so that, again,
  $\traverses{\hpath'}{\pformula \ominus \onlyfieldsp{\field}}$.
\end{proof}

\begin{acks}
  \ouracks
\end{acks}

\def\FMOODS{{\em Int.~Conf.~on Formal Methods for Open Object-Based
    Distributed Systems (FMOODS)}}

\def\VMCAI{{\em Int.~Conf.~on Verification, Model Checking, and
    Abstract Interpretation (VMCAI)}}

\def\POPL{{\em ACM Symposium on Principles of Programming Languages
    (POPL)}}

\def\PLDI{{\em ACM Conf.~on Programming Language Design and
    Implementation (PLDI)}}

\def\CAV{{\em Int.~Conf.~on Computer Aided Verification (CAV)}}

\def\SAS{{\em Static Analysis Symposium (SAS)}}

\def\IJCAR{{\em Int.~Joint Conf.~on Automated Reasoning (IJCAR)}}

\def\TOPLAS{{\em ACM Transactions on Programming Languages and
    Systems\/}}

\def\LICS{{\em IEEE Symposium on Logic in Computer Science (LICS)}}

\def\CC{{\em Int.~Conf.~on Compiler Construction (CC)}}

\def\PASTE{{\em ACM Workshop on Program Analysis For Software Tools
    and Engineering (PASTE)}}

\def\TOCL{{\em ACM Transactions on Computational Logic\/}}

\bibliographystyle{ACM-Reference-Format-Journals}

\received{Month Year}{Month Year}{Month Year}

\end{document}